\crefname{lemma}{Lemma}{Lemmas}
\crefname{fact}{Fact}{Facts}
\newcommand{\colorconstraints}{\text{Color Constraints}}
\crefname{colorconstraints}{(color constraints)}{Color Constraints}
\crefname{indsetconstraints}{(indset constraints)}{IndSet Constraints}
\crefname{theorem}{Theorem}{Theorems}
\crefname{mtheorem}{Theorem}{Theorems}
\crefname{itheorem}{Theorem}{Theorems}
\crefname{corollary}{Corollary}{Corollaries}
\crefname{claim}{Claim}{Claims}
\crefname{example}{Example}{Examples}
\crefname{algorithm}{Algorithm}{Algorithms}
\crefname{problem}{Problem}{Problems}
\crefname{definition}{Definition}{Definitions}
\crefname{equation}{Eq.}{Eq.}
\crefname{strategy}{Strategy}{Strategies}
\newtheorem{theorem}{Theorem}[section]
\newtheorem{mtheorem}{Theorem}
\newtheorem*{theorem*}{Theorem}
\newtheorem*{proposition*}{Proposition}
\newtheorem{lemma}[theorem]{Lemma}
\newtheorem*{lemma*}{Lemma}
\newtheorem*{conjecture*}{Conjecture}
\newtheorem{fact}[theorem]{Fact}
\newtheorem*{fact*}{Fact}
\newtheorem*{hypothesis*}{Hypothesis}
\theoremstyle{definition}
\newtheorem{definition}[theorem]{Definition}
\newtheorem*{definition*}{Definition}
\newtheorem{algorithm}{Algorithm}
\theoremstyle{remark}
\newtheorem{claim}[theorem]{Claim}
\newtheorem*{claim*}{Claim}
\newtheorem{remark}[theorem]{Remark}
\newtheorem{strategy}{Strategy}
\newtheorem*{remark*}{Remark}
\newtheorem{observation}[theorem]{Observation}
\newtheorem*{observation*}{Observation}
\let\mathbb\varmathbb
\newcommand{\FormatAuthor}[3]{
\begin{tabular}{c}
#1 \\ {\small\texttt{#2}} \\ {\small #3}
\end{tabular}
}
\newcommand{\keywords}[1]{\bigskip\par\noindent{\footnotesize\textbf{Keywords\/}: #1}}
\newcommand{\R}{{\mathbb R}}
\newcommand{\N}{{\mathbb N}}
\newcommand{\abs}[1]{\lvert #1 \rvert}
\newcommand{\floor}[1]{\lfloor #1 \rfloor}
\newcommand{\eps}{\varepsilon}
\newcommand{\defeq}{\coloneqq}
\newcommand{\F}{{\mathbb F}}
\newcommand{\E}{{\mathbb E}}
\newcommand{\1}{\mathbf{1}}
\newcommand{\ip}[1]{\langle #1 \rangle}
\newcommand{\tr}{\mathrm{tr}}
\newcommand{\Bits}{\{0,1\}}
\newcommand{\zo}{\Bits}
\newcommand{\Fits}{\{-1,1\}}
\newcommand{\pE}{\tilde{\E}}
\newcommand{\cH}{\mathcal H}
\newcommand{\cA}{\mathcal A}
\newcommand{\cS}{\mathcal S}
\newcommand{\poly}{\mathrm{poly}}
\newcommand{\val}{\mathrm{val}}
\newcommand{\ceil}[1]{\lceil #1 \rceil}
\newcommand{\mper}{\,.}
\newcommand{\mcom}{\,,}
\newcommand{\Paren}[1]{\left(#1\right)}
\newcommand{\Norm}[1]{\left\lVert#1\right\rVert}
\newcommand{\setQ}{Q}
\newcommand{\Id}{\mathbb{I}}
\newcommand{\polylog}{\mathrm{polylog}}
\newcommand{\Err}{\mathcal{E}}
\newcommand{\Alg}{\mathcal{A}}
\newcommand{\ol}[1]{\overline{#1}}
\newcommand{\wh}[1]{\widehat{#1}}
\newcommand{\wt}[1]{\widetilde{#1}}
\newcommand{\diag}{\operatorname{diag}}
\newcommand{\one}{\vec{1}}
\newcommand{\spn}{\operatorname{span}}
\newcommand{\vol}{\operatorname{vol}}
\newcommand{\Brac}[1]{\left[#1\right]}
\newcommand{\Abs}[1]{\left\lvert#1\right\rvert}
\newcommand{\iprod}[1]{\ip{#1}}
\newcommand{\Iprod}[1]{\left\langle#1\right\rangle}
\newcommand{\bigparen}[1]{\big(#1\big)}
\newcommand{\Bigparen}[1]{\Big(#1\Big)}
\newcommand{\bignorm}[1]{\big\lVert#1\big\rVert}
\newcommand{\Bignorm}[1]{\Big\lVert#1\Big\rVert}
\newcommand{\set}[1]{\{#1\}}
\newcommand{\Set}[1]{\left\{#1\right\}}
\newcommand{\bigset}[1]{\big\{#1\big\}}
\newcommand{\NP}{\mathsf{NP}}
\newcommand{\decomp}{\mathsf{Decomp}}
\newcommand{\BipartiteBeta}{\frac{(k\log n)^{3/2}}{\tau \gamma^2 \eps^{3/2}}}
\begin{document}

\title{Efficient Algorithms for Semirandom Planted CSPs at the Refutation Threshold}
\author{
\begin{tabular}[h!]{ccc}
    \FormatAuthor{Venkatesan Guruswami\thanks{Supported by a Simons Investigator Award and NSF grants CCF-2211972 and CCF-2228287.}}{venkatg@berkeley.edu}{UC Berkeley}
    \FormatAuthor{Jun-Ting Hsieh\thanks{Supported by  NSF CAREER Award \#2047933.}}{juntingh@cs.cmu.edu}{Carnegie Mellon University} \\ \\
    \FormatAuthor{Pravesh K.\ Kothari\thanks{Supported by  NSF CAREER Award \#2047933, Alfred P. Sloan Fellowship and a Google Research Scholar Award.}}{praveshk@cs.cmu.edu}{Carnegie Mellon University}
    \FormatAuthor{Peter Manohar\thanks{Supported in part by an ARCS Scholarship, NSF Graduate Research Fellowship (under grant numbers DGE1745016 and DGE2140739), and NSF CCF-1814603.}}{pmanohar@cs.cmu.edu}{Carnegie Mellon University}
\end{tabular}
} 
\date{\today}

\maketitle
\vspace{-1em}

\begin{abstract}
We present an efficient algorithm to solve semirandom planted instances of any Boolean constraint satisfaction problem (CSP). The semirandom model is a hybrid between worst-case and average-case input models, where the input is generated by (1) choosing an arbitrary planted assignment $x^*$, (2) choosing an arbitrary clause structure, and (3) choosing literal negations for each clause from an arbitrary distribution ``shifted by $x^*$'' so that $x^*$ satisfies each constraint.
For an $n$-variable semirandom planted instance of a $k$-arity CSP, our algorithm runs in polynomial time and outputs an assignment that satisfies all but a $o(1)$-fraction of constraints, provided that the instance has at least $\tilde{O}(n^{k/2})$ constraints.
This matches, up to $\polylog(n)$ factors, the clause threshold for algorithms that solve \emph{fully random} planted CSPs \cite{FeldmanPV15}, as well as algorithms that refute \emph{random and semirandom} CSPs \cite{AllenOW15, AbascalGK21}. Our result shows that despite having worst-case clause structure, the randomness in the literal patterns makes semirandom planted CSPs significantly easier than worst-case, where analogous results require $O(n^k)$ constraints \cite{AroraKK95,FotakisLP16}.

Perhaps surprisingly, our algorithm follows a significantly different conceptual framework when compared to the recent resolution of semirandom CSP refutation. This turns out to be inherent and, at a technical level, can be attributed to the need for \emph{relative} spectral approximation of certain random matrices --- reminiscent of the classical spectral sparsification --- which ensures that an SDP can certify the \emph{uniqueness} of the planted assignment. In contrast, in the refutation setting, it suffices to obtain a weaker guarantee of absolute upper bounds on the spectral norm of related matrices.

\keywords{Semirandom CSPs, Expander Decomposition, Spectral Sparsification}
\end{abstract}


\thispagestyle{empty}
\setcounter{page}{0}

\clearpage
 \microtypesetup{protrusion=false}
\setcounter{page}{0}
  \tableofcontents{}
  \microtypesetup{protrusion=true}
\thispagestyle{empty}

\clearpage

\pagestyle{plain}
\setcounter{page}{1}

\section{Introduction}
\label{sec:intro}
Four decades of work in computational complexity has uncovered strong hardness results for constraint satisfaction problems (CSPs) such as $k$-SAT that leave only a little room for non-trivial efficient algorithms in the \emph{worst-case}.  Strong hardness of approximation~\cite{Hastad01} essentially rule out (unless $\mathsf{P} = \mathsf{NP}$) any improvement over simply returning a uniformly random assignment when the input instance is \emph{sparse} (i.e., has $m=O(n)$ constraints on $n$ variables). While there is a polynomial time approximation scheme (PTAS)~\cite{AroraKK95} for maximally dense instances (e.g., with $m=O(n^k)$ constraints for $k$-SAT), under the exponential time hypothesis~\cite{ImpagliazzoP01}, we can already rule out polynomial time algorithms for $o(n^k)$ dense instances and more generally, $2^{n^{1-\delta}}$ time algorithms for any $\delta>0$ for $o(n^{k-1})$ dense instances~\cite{FotakisLP16}.

\parhead{Search and refutation in the average-case.} In sharp contrast, in well-studied \emph{average-case} settings, there appears to be significant space for new algorithms and markedly better guarantees for CSPs. CSPs can be studied as two natural problems in such average-case settings: the problem of \emph{refutation} --- where we focus on efficiently finding witnesses of unsatisfiability for models largely supported on unsatisfiable instances, and the problem of \emph{search} --- where our goal is to find an assignment that the model guarantees is \emph{planted} in the instance.

The refutation problem has been heavily investigated in the past two decades. For \emph{fully random} $k$-CSPs with uniformly random clause structure (i.e., which variables appear in each clause) and ``literal pattern'' (i.e., which variables appear negated in each clause), there is a polynomial-time algorithm that, with high probability over the instance, certifies that the instance is unsatisfiable, provided that $m$ is at least $\tilde{O}(n^{k/2})$ \cite{GoerdtL03, CojaGL07, AllenOW15, BarakM16, RaghavendraRS17}. This threshold is far below the $\sim n^{k}$ hardness threshold of \cite{FotakisLP16}. Furthermore, there is  lower bounds in various restricted models \cite{Feige02, BenabbasGMT12, OdonnellW14, MoriW16, BarakCK15, KothariMOW17, FeldmanPV18} provide some evidence that this threshold might be tight for polynomial time algorithms.

The search problem for planted models of CSPs has also received a fair bit of attention. The setting naturally arises in the investigation of \emph{local} one-way functions and pseudorandom generators in cryptography.
Indeed, the security of the well-known one-way function proposed by Goldreich~\cite{Goldreich00} (also conjectured to be a pseudorandom generator~\cite{MosselST06, Applebaum16}) is equivalent to the hardness of recovering a satisfying assignment planted (via a carefully chosen procedure) in a random CSP instance with an appropriate predicate.
This has led to significant research on solving \emph{fully random} planted CSPs~\cite{BarthelHL02, JiaMS07, BogdanovQ09, CojaCF10, FeldmanPV15}.
Specifically, Feldman, Perkins and Vempala~\cite{FeldmanPV15} showed that for \emph{fully random} planted $k$-CSPs with planted assignment $x^*$, there is a polynomial-time algorithm that, with high probability over the instance, recovers the planted assignment $x^*$ \emph{exactly}, provided that the instance has at least $\tilde{O}(n^{k/2})$ constraints. That is, the refutation and search versions have the same clause threshold. 

\parhead{\emph{Beyond} the average-case: semirandom instances} The phenomenal progress in average-case algorithm design notwithstanding, there is a nagging concern that the algorithms so developed rely too heavily on ``brittle'' properties of a specific random model. That is, our methods may have ``overfitted'' to the specific setting thus offering algorithms that only apply in a limited setting. Unfortunately, this fear turns out to be rather well-founded --- natural spectral algorithms for refuting random $k$-CSPs and solving the natural planted variants break down under minor perturbations such as the introduction of a vanishingly small fraction of additional clauses. 

Motivated by such concerns, Blum and Spencer~\cite{BlumS95} and later Feige and Kilian~\cite{FeigeK01, Feige07} introduced \emph{semirandom} models for optimization problems. In semirandom models, the instances are constructed by a combination of benign average-case and adversarial worst-case choices. Algorithms that succeed for such models are naturally ``robust'' to perturbations of the input instance.

For CSPs, a \emph{semirandom} instance is generated by first choosing a ``worst-case'' clause structure and then choosing the literal negation patterns in each clause via some sufficiently random (and thus ``benign'') process. Recent work \cite{AbascalGK21, GuruswamiKM22, HsiehKM23} has shown that in the case of refutation, there are indeed more resilient algorithms that succeed in refuting semirandom instances at the \emph{same} $\tilde{O}(n^{k/2})$ threshold as the \emph{fully random} case. These developments have added new general-purpose new spectral methods based on Kikuchi matrices \cite{WeinAM19, GuruswamiKM22} to our algorithmic arsenal.

\parhead{Semirandom planted problems.} In this work, we make the first step in obtaining algorithms for the \emph{search} variant of CSPs in the semirandom setting. Our main result gives an efficient algorithm for solving semirandom planted CSPs that succeeds in finding the planted assignment whenever the number of constraints exceeds $\tilde{O}(n^{k/2})$  --- the \emph{same} threshold at which polynomial time algorithms exist for the refutation problem for random (and semirandom) instances.

\begin{mtheorem}[Main result, informal \cref{mthm:main}]
\label{mthm:infmain}
There is an efficient algorithm that takes as input a $k$-CSP $\Psi$ and outputs an assignment $x$ with the following guarantee: if $\Psi$ is a semirandom planted $k$-CSP with $m \geq \tilde{O}(n^{k/2})$ constraints, then with high probability over $\Psi$, the output $x$ satisfies $1-o(1)$-fraction of the constraints in $\Psi$.
\end{mtheorem}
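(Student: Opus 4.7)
The plan is to attack the search problem via a semidefinite programming relaxation whose optimal solution must be close to the planted assignment $x^*$, and to reduce $k$-CSPs to the quadratic case through a Kikuchi-style lifting. Concretely, given the $k$-CSP $\Psi$, I would form a matrix (or bipartite matrix) $A$ indexed by $\ell$-subsets of $[n]$ for an appropriate $\ell \approx (k-2)/2$, with entries weighted by the literal patterns of the constraints, so that the quadratic form of $A$ on the natural tensor lift of an assignment $x$ tracks the number of constraints $x$ satisfies. The $\tilde{O}(n^{k/2})$ clause budget is precisely what is needed so that, after lifting, $A$ has enough ``effective density'' on $\binom{n}{\ell}$ indices for spectral concentration to be non-trivial.

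The SDP would be the standard vector-valued relaxation of maximizing satisfied constraints, with unit-norm vectors $v_i$ simulating $\pm 1$ variables. Since $x^*$ satisfies every constraint, the SDP value is at least $m$. To force any near-optimal SDP solution to be correlated with $x^*$, I would exhibit a dual certificate built from $A$, and here the key analytic ingredient is a \emph{two-sided relative spectral inequality} of the form
\[
(1-o(1)) \cdot A^{\mathrm{ideal}} \preceq A \preceq (1+o(1)) \cdot A^{\mathrm{ideal}} \mcom
\]
where $A^{\mathrm{ideal}}$ is the deterministic matrix corresponding to the planted signal (the expectation after the literal randomness is averaged out). This is strictly stronger than the one-sided bound $\|A\| \leq \tilde{O}(\sqrt{\text{density}})$ that suffices for refutation, since only the lower direction can rule out spurious assignments $\tilde{x} \neq x^*$ whose quadratic form on $A$ might otherwise be large.

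To establish this relative spectral bound despite the adversarial clause structure, I would use an expander decomposition of the underlying clause hypergraph, in the spirit of the recent semirandom-refutation template of \cite{AbascalGK21,GuruswamiKM22,HsiehKM23}. The hypergraph is split into nearly regular expander pieces and a small residual of low-degree or irregular edges; on each expander piece, the randomness of the literal signs makes the corresponding block of $A$ concentrate tightly around its expectation by a matrix Bernstein or trace-moment argument, which furnishes a per-block relative bound. Summing these local bounds and controlling the residual yields the global relative approximation. A standard rounding step (hyperplane rounding of the optimal vectors, then correcting against $x^*$ up to a global sign) then produces an assignment at Hamming distance $o(n)$ from $x^*$, which by a union bound over constraints satisfies a $(1-o(1))$-fraction of them.

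The hard part is the relative spectral inequality itself. In refutation one only needs to upper bound $\|A - \E A\|$, and $\E A$ can effectively be taken to be zero after centering, so the trace method on Kikuchi matrices immediately yields the desired bound. Here, by contrast, one must track $A^{\mathrm{ideal}}$ through the expander decomposition and show that the lower spectral bound survives both the removal of ``junk'' edges and the irregularities introduced by the worst-case clause structure. Getting a matching lower bound on every direction, including those aligned with the planted signal, is what forces the spectral-sparsification flavor mentioned in the abstract, and I expect this to be where the bulk of the technical work lies.
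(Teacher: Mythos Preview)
Your proposal has a genuine gap at the rounding step, and it is precisely the obstacle that the paper identifies and works around.

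After lifting to a quadratic problem on $\binom{[n]}{\ell}$-indexed variables, you propose hyperplane rounding followed by a Hamming-distance argument. Neither step goes through. First, hyperplane rounding in the lifted space produces a vector $y \in \Fits^{\binom{[n]}{\ell}}$ with high 2-XOR value, but there is no reason $y$ should be close to $x^{\otimes \ell}$ for \emph{any} $x \in \Fits^n$; the paper discusses exactly this in Section~2.2 and points out that rounding a high-value pseudoexpectation for $k$-XOR with $k \geq 3$ to a high-value assignment hits a hardness barrier (a variant of the Sliding Scale Conjecture). Second, even if you somehow obtained an $x$ at Hamming distance $o(n)$ from $x^*$, this does \emph{not} imply $(1-o(1))$-value when the clause hypergraph is adversarial: all $m$ constraints could touch a single variable on which $x$ and $x^*$ disagree. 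There is no ``union bound over constraints'' here, since once the instance is fixed there is no randomness left in which constraints are violated.

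The paper's route is different in a way that matters. The relative spectral approximation is not a two-sided bound on a Kikuchi matrix versus its expectation; it is a one-sided bound $L_H \preceq c \cdot L_G$ with $c < \tfrac{1}{2}$, where $L_G$ is the Laplacian of the 2-XOR constraint graph and $L_H$ is the Laplacian of the \emph{corrupted} edges. By \cref{lem:sdpuniqueness} this forces the basic SDP on each expander piece to have a \emph{unique rank-one} optimum equal to the planted assignment, so one reads off the corrupted constraints \emph{exactly} on each piece rather than approximately. The final assignment is then obtained by Gaussian elimination on the uncorrupted $k$-XOR constraints, not by rounding. Your two-sided bound on $A$ versus $A^{\mathrm{ideal}}$ would only yield an approximate SDP solution, which is exactly what cannot be rounded for $k \geq 3$; the paper's insight is that on expander pieces one can get exactness, and that exactness (not closeness) is what propagates back through the lifting.
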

We note that in the semirandom setting, it is not possible to efficiently recover an assignment that satisfies \emph{all} of the constraints without being able to do so even when $m = O(n)$\footnote{Achieving this would break a hardness assumption for the search problem analogous to Feige's random $3$-SAT hypothesis for the refutation problem~\cite{Feige02}.}. This is because it is easy to construct a semirandom instance $\psi$ that is the ``union'' of two disjoint instances $\psi_1$ and $\psi_2$, where $\psi_1$ and $\psi_2$ use disjoint sets of $n/2$ variables, but $\psi_1$ only has $m_1 \sim O(n)$ clauses (and $\psi_2$, therefore, contains almost all of the $m \sim n^{k/2}$ clauses). Thus, the guarantee in \cref{mthm:infmain} of satisfying a $1-o(1)$-fraction of constraints is qualitatively the best we can hope for. 

\parhead{Search vs.\ refutation.}
It is natural to compare \cref{mthm:infmain} to the recent resolution of the problem of \emph{refuting} semirandom CSPs \cite{AbascalGK21, GuruswamiKM22, HsiehKM23}. For average-case optimization problems, techniques for refuting random instances can typically be adapted to solving the search problem in the related planted model. This can be formalized in the \emph{proofs to algorithms} paradigm~\cite{BarakS14, FlemingKP19} where spectral/SDP-based refutations can be transformed into ``simple'' (i.e., ''captured" within the low-degree sum-of-squares proof system) efficient certificates of near-uniqueness of optimal solution --- that is, every optimal solution is close to the planted assignment. Unfortunately, this intuition breaks down even in the simplest setting of semirandom $2$-XOR where there can be multiple maximally far-off solutions that satisfy as many (or even more) constraints as the planted assignment. Such departure from uniqueness also breaks algorithms for  recovery~\cite{FeldmanPV15} that rely on the top eigenvector of a certain matrix built from the instance being correlated with the planted assignment. In the semirandom setting, one can build instances where the top eigenspace of such matrices is the span of the multiple optimal solutions and has dimension $\omega(1)$ (searching for a Boolean vector close to the subspace is, in general, hard in super-constant dimensional subspaces).

\parhead{Our key insight.} Our starting point is a new, efficiently checkable certificate of the unique identifiability of the planted solution for noisy planted $k$-XOR (i.e., where each equation in a satisfiable $k$-sparse linear system is corrupted independently with some fixed constant probability) whenever the constraint hypergraph satisfies a certain weak expansion property. For random graphs in case of $2$-XOR (and generalizations to multiple community \emph{stochastic block models}), such certificates (in the form of explicit dual solutions to a semidefinite program) were shown to exist by Abbe and Sandon~\cite{AbbeS15}.

Our certificate naturally yields an efficient algorithm for \emph{exactly} recovering the planted assignment in noisy $k$-XOR instances whenever the constraint hypergraph satisfies a deterministic weak expansion property and has  size exceeding the refutation threshold $\sim n^{k/2}$. Finally, we use expander decomposition procedures to decompose the input constraint hypergraph into pieces that satisfy the above condition. This is done in a manner that further allows us to find a good assignment via a consistent patching scheme to combine solutions across all the pieces in our decomposition.

\subsection{Our semirandom planted model and results}
Before formally stating our results, we define the semirandom planted model that we work with and explain some of the subtleties in the definition. Our model is the natural one that arises if we wish to enforce independent randomness (for each clause) in the literal negations, while still fixing a particular satisfying assignment.
\begin{definition}[$k$-ary Boolean CSPs]
A CSP instance $\Psi$ with a $k$-ary predicate $P \colon \Fits^k \to \Bits$ is a set of $m$ constraints on variables $x_1,\dots,x_n$ of the form $P(\ell(\vec{C})_1 x_{\vec{C}_1}, \ell(\vec{C})_2 x_{\vec{C}_2}, \ldots, \ell(\vec{C})_k x_{\vec{C}_k}) = 1$. Here, $\vec{C}$ ranges over a collection $\vec{\cH}$ of \emph{scopes}\footnote{We additionally allow $\vec{\cH}$ to be a multiset, i.e., that multiple clauses can contain the same ordered set of variables.} (a.k.a.\ clause structure) of $k$-tuples of $n$ variables and $\ell(\vec{C}) \in \Fits^k$ are ``literal negations'', one for each $\vec{C}$ in $\vec{\cH}$. We let $\val_{\Psi}(x)$ denote the fraction of constraints satisfied by an assignment $x \in \Fits^n$, and we define the \emph{value} of $\Psi$, $\val(\Psi)$, to be $\max_{x \in \Fits^n} \val_{\Psi}(x)$.
\end{definition}

\begin{definition}[Semirandom planted $k$-ary Boolean CSPs]
\label{def:semirandomcsp}
Let $P \colon \Fits^k \to \Bits$ be a predicate. We say that a distribution $Q$ over $\Fits^k$ is a \emph{planting distribution for $P$} if $\Pr_{y \gets Q}[P(y) = 1] = 1$.

We say that an instance $\Psi$ with predicate $P$ is a \emph{semirandom planted instance} with \emph{planting distribution} $Q$ if it is sampled from a distribution $\Psi(\vec{\cH}, x^*, Q)$ where
\begin{enumerate}[(1)]
    \item the scopes $\vec{\cH} \subseteq [n]^k$ and planted assignment $x^* \in \Fits^n$ are arbitrary, and
    
    \item $\Psi(\vec{\cH}, x^*, Q)$ is defined as follows: for each $\vec{C} \in \vec{\cH}$, sample literal negations $\ell(\vec{C}) \gets Q(\ell(\vec{C}) \odot x^*_{\vec{C}})$, where ``$\odot$'' denotes the element-wise product of two vectors.
    That is, $\Pr[\ell(\vec{C}) = \ell] = Q(\ell \odot x^*_{\vec{C}})$ for each $\ell \in \Fits^k$.
    Then, add the constraint $P(\ell(\vec{C})_1 x_{\vec{C}_1}, \ell(\vec{C})_2 x_{\vec{C}_2}, \ldots, \ell(\vec{C})_k x_{\vec{C}_k}) = 1$ to $\Psi$.
\end{enumerate}
Notice that because $Q$ is supported only on satisfying assignments to $P$, it follows that if $\Psi \gets \Psi(\vec{\cH}, x^*, Q)$, then $x^*$ satisfies $\Psi$ with probability $1$.
\end{definition}
A (fully) random planted CSP, e.g., as defined in \cite{FeldmanPV15}, is generated by first sampling $\vec{\cH} \gets [n]^k$ uniformly at random, and then sampling $\Psi \gets \Psi(\vec{\cH}, x^*, Q)$. The difference in the semirandom planted model is that we allow $\vec{\cH}$ to be \emph{worst case}.

Notice that in \cref{def:semirandomcsp}, there are some choices of $Q$ for which the planted instance becomes easy to solve. In the case of, e.g., $3$-SAT, one could set the planting distribution $Q$ to be uniform over all $7$ satisfying assignments, which results in the literal negations in each clause being chosen uniformly conditioned on $x^*$ satisfying the clause. However, by simply counting how many times the variable $x_i$ appears negated versus not negated and taking the majority vote, we recover $x^*$ with high probability~\cite{BarthelHL02,JiaMS07} (see \cref{sec:1xor}).

Instead of sampling clauses uniformly from all those satisfied by $x^*$, one can create more challenging distributions, e.g., ones where true and false literals appear in equal proportion. Such distributions are termed ``quiet plantings'' and have been studied extensively~\cite{JiaMS07,KrzakalaZ09,CojaCF10,KrzakalaMZ12}. Our semirandom model follows definitions in \cite{FeldmanPV15,FeldmanPV18} and is a general planted model with respect to a \emph{planting distribution $Q$}, which unifies various plantings studied in the past.

Unlike in the case of random planted CSPs, we cannot hope to recover the planted assignment $x^*$ exactly in the semirandom setting. Indeed, the scopes $\vec{\cH}$ may not use some variable $x_i$ at all, and so we cannot hope to recover $x^*_i$! Thus, our goal is instead to recover an assignment $x$ that has nontrivially large value, ideally value $1 - \eps$ for arbitrarily small $\eps$. Our main result, stated formally below, gives an algorithm to accomplish this task.
\begin{mtheorem}[Formal \cref{mthm:infmain}]
\label{mthm:main}
Let $k \in \N$ be constant. There is a polynomial-time algorithm that takes as input a $k$-CSP $\Psi$ and outputs an assignment $x$ with the following guarantee. If $\Psi$ is a semirandom planted $k$-CSP with $m \geq c^k n^{k/2} \cdot \frac{\log^3 n}{\eps^9}$ constraints drawn from $\Psi(\vec{\cH}, x^*, Q)$, then with probability $1 - 1/\poly(n)$ over $\Psi$, the output $x$ of the algorithm has $\val_{\Psi}(x) \geq 1 - \eps$. Here, $c$ is a universal constant.

In particular, setting $\eps = 1/\polylog(n)$, if $m \geq \tilde{O}(n^{k/2})$, then with high probability over $\Psi \gets \Psi(\vec{\cH}, x^*, Q)$, the algorithm outputs $x$ with $\val_{\Psi}(x) \geq 1 - o(1)$.
\end{mtheorem}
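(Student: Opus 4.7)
The plan is to reduce the search problem for an arbitrary predicate $P$ to a noisy planted $k$-XOR problem, and then to solve the latter on a decomposition of the scope hypergraph into ``expanding'' pieces via an SDP equipped with a dual certificate of unique identifiability. First I would Fourier-expand each clause indicator $P(\ell(\vec{C}) \odot x_{\vec{C}})$ over $\Fits^k$ and observe that $\val_{\Psi}(x) - \val_{\Psi}(x^*)$ is, up to $O(\eps)$, a weighted sum over nontrivial Fourier characters $S \seq [k]$ of ``XOR scores'' of the form $\sum_{\vec{C}} \chi_S(\ell(\vec{C})) \chi_S(x_{\vec{C}} \odot x^*_{\vec{C}})$. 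Since $Q$ is supported on satisfying assignments to $P$, some such $S$, of arity $t \in \{1,\dots,k\}$, must carry a Fourier coefficient that induces a bias of $\chi_S(\ell(\vec{C}) \odot x^*_{\vec{C}})$ bounded away from $1/2$. For that $S$, the scope hypergraph becomes a $t$-uniform hypergraph on $n$ variables with $m \ge \tilde{O}(n^{k/2}) \ge \tilde{O}(n^{t/2})$ hyperedges carrying noisy planted $t$-XOR observations, so the problem reduces to recovery for semirandom noisy $t$-XOR at (or above) the refutation threshold.

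The core of the argument is an explicit certificate of unique identifiability for noisy planted $t$-XOR whenever the constraint hypergraph obeys a weak expansion property. I would build an explicit dual PSD witness for the basic SDP relaxation of $t$-XOR from a Kikuchi-style lift $K$ at level $\ell \sim \log n$ of the instance; the witness certifies a two-sided spectral inequality relating $K$ to the natural constraint matrix $L$, so that every SDP-feasible solution close in objective to $x^*$ must be close to $x^*$ as a vector. This is the \emph{relative} spectral approximation flagged in the introduction: for refutation it is enough to bound $\Norm{L}$ from above, but to certify uniqueness one needs the stronger ``sandwich'' $(1-\eps) K \preceq L \preceq (1+\eps) K$ on the relevant subspace, in the spirit of classical spectral sparsification and of the dual witnesses of Abbe--Sandon \cite{AbbeS15}. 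The weak expansion hypothesis on the hypergraph is exactly what the Kikuchi matrix concentration / trace-moment argument needs in order to deliver this two-sided bound deterministically for $m \gtrsim n^{t/2} \polylog n$.

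To remove the expansion assumption on the worst-case scopes $\vec{\cH}$, I would apply an expander decomposition to the $t$-uniform hypergraph produced by the Fourier step: discard an $\eps$-fraction of the hyperedges so that the remainder splits into vertex-disjoint pieces $B_1,\dots,B_r$, each of whose induced sub-hypergraph satisfies the weak expansion property consumed by the certificate. On each $B_i$ the SDP then returns an assignment $x^{(i)}$ on its variables that agrees with $x^*|_{B_i}$ on all but an $\eps$-fraction of hyperedges (and for $t \ge 2$, up to a global sign on $B_i$ that I can fix arbitrarily block-by-block). Concatenating the $x^{(i)}$'s on the disjoint vertex sets and assigning arbitrary values to leftover variables yields a global $x$; because the blocks share no vertices, sign inconsistencies cannot propagate between blocks, and the discarded boundary hyperedges contribute only an additional $O(\eps)$ loss to $\val_\Psi(x)$.

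The main obstacle I anticipate is the second step: producing an explicit PSD dual matrix that delivers a two-sided spectral approximation at the refutation threshold $m \gtrsim n^{t/2}$. The Kikuchi-based refutation toolkit of \cite{AbascalGK21, GuruswamiKM22, HsiehKM23} only requires controlling the top eigenvalue of a lifted matrix; here one additionally needs a matching lower bound on the planted subspace from only a deterministic weak-expansion hypothesis, which demands a quantitatively tighter trace-moment or matrix-Bernstein analysis of the Kikuchi lift together with a careful choice of diagonal shifts. A secondary difficulty is calibrating the hypergraph expander decomposition so that the weak-expansion guarantee produced on each block is precisely the one consumed by the certificate, with quantitative slack compatible with the $\eps^9$ factor appearing in the clause lower bound of \cref{mthm:main}. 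Once these two ingredients are in place, the Fourier reduction and the patching step are essentially routine bookkeeping.
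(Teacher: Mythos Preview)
Your proposal has a genuine gap in the Fourier reduction and patching steps. You pick a \emph{single} character $S$ with nontrivial bias, solve the resulting noisy $|S|$-XOR, and patch together block-by-block assignments $x^{(i)}$ (each equal to $x^*|_{B_i}$ up to a global sign) into a global $x$. But matching $x^*$ up to a per-block sign does not yield a good assignment for $\Psi$: if on block $B_i$ you output $-x^*|_{B_i}$, then for every clause $\vec{C}$ entirely inside $B_i$ you have $\ell(\vec{C}) \odot x_{\vec{C}} = -(\ell(\vec{C}) \odot x^*_{\vec{C}})$, and there is no reason the predicate $P$ should be invariant under global negation. So ``sign inconsistencies cannot propagate'' is true but irrelevant---a single wrong sign on a large block already violates a constant fraction of constraints. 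More fundamentally, even without the sign issue, the $t$-XOR derived from one $S$ does not pin down $x^*$; it only pins down $\prod_{i \in S} x^*_{\vec{C}_i}$ on the non-discarded clauses, and many $x$'s far from $x^*$ (hence bad for $\Psi$) are consistent with those values.

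The paper's route circumvents this by never trying to recover $x^*$ itself. It runs the noisy XOR algorithm (\cref{mthm:algxor}) on \emph{every} $S$ with $|\hat{Q}(S)| \geq 2^{-k}\eps$ (guessing the sign of each $\hat{Q}(S)$ and enumerating all $3^{2^k}$ label functions). The crucial point is that \cref{mthm:algxor} delivers \emph{exact} identification of the corrupted constraints outside a small discarded set that depends only on $\cH$, so for each surviving clause $\vec{C}$ one learns $\prod_{i \in S} x^*_{\vec{C}_i}$ for every significant $S$ simultaneously. Solving the resulting linear system over all these equations yields some $x$ (possibly far from $x^*$) with $\prod_{i \in S} x_{\vec{C}_i} = \prod_{i \in S} x^*_{\vec{C}_i}$ for all significant $S$ and all surviving $\vec{C}$; the Fourier expansion of $Q$ then gives $Q(\ell(\vec{C}) \odot x) \approx Q(\ell(\vec{C}) \odot x^*) > 0$, forcing $P(\ell(\vec{C}) \odot x_{\vec{C}}) = 1$. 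This is precisely why the paper needs the strong exact-recovery guarantee of \cref{mthm:algxor}; an approximate guarantee on a single XOR instance is not enough.

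Your plan for the noisy XOR subroutine also diverges from the paper in ways that matter. You propose a Kikuchi lift at level $\sim \log n$ with a two-sided spectral sandwich, plus an expander decomposition of the $t$-uniform hypergraph into vertex-disjoint pieces of $[n]$. The paper instead (i) reduces $k$-XOR to $\tau$-spread bipartite $k$-XOR via a hypergraph decomposition, (ii) builds a $2$-XOR instance $\phi$ on $\sim n^{k-1}$ tuple-indexed variables via a Cauchy--Schwarz pairing, (iii) expander-decomposes the \emph{constraint graph of $\phi$} (not the original hypergraph), (iv) proves a one-sided relative spectral approximation $L_H \prec \tfrac{1}{2} L_G$ on each piece---which by \cref{lem:sdpuniqueness} forces the basic SDP optimum to be rank-$1$ and equal to the planted assignment---and (v) decodes from the resulting pairwise information $\xi_u(C)\xi_u(C')$ back to per-constraint corruption bits $\xi_u(C)$. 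Steps (ii) and (v) are exactly where the paper avoids your patching problem: the expander pieces are vertex-disjoint in the $\phi$-variables (tuples), not in $[n]$, and the output on each piece is the set of \emph{corrupted constraints}, not a local assignment to the original variables.
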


\cref{mthm:main} shows that one can \emph{nearly} solve a semirandom planted $k$-CSP at the same $\tilde{O}(n^{k/2})$ threshold as done in the random case \cite{FeldmanPV15}, matching the same $\tilde{O}(n^{k/2})$ threshold as for semirandom refutation \cite{AbascalGK21, GuruswamiKM22, HsiehKM23}. However, as explained earlier (and will be discussed further in \cref{sec:techniques}), there are several unanticipated technical hurdles to overcome in the semirandom planted setting that are not present in the semirandom refutation setting, and this causes many of the natural approaches that ``springboard off'' the refutation case to fail. Curiously enough, for the special case of $k = 2$ there \emph{is} a simple reduction from search to refutation for the case of $2$-XOR, which we will describe in \cref{sec:2xorref}, but the same approach for $k$-XOR encounters a hardness barrier for $k \geq 3$, as we will discuss in \cref{sec:challenges-for-kxor}.

\cref{mthm:main} also breaks Goldreich's candidate pseudorandom generators~\cite{Goldreich00} and its variants~\cite{Applebaum16},\footnote{Goldreich's original PRG is essentially a planted $k$-CSP with a Boolean predicate $P$ on a random hypergraph, containing both $P$ and $\neg P$ constraints.} when they have $\tilde{\Omega}(n^{k/2})$ stretch and \emph{any} $k$-hypergraph (not just a random one). In fact, not only does \cref{mthm:main} break the PRG, it also gives an algorithm that nearly \emph{inverts} it.

\parhead{Noisy planted $k$-XOR.}
Similar to work on random planted CSPs \cite{FeldmanPV15} and the refutation setting \cite{AllenOW15, RaghavendraRS17, AbascalGK21, GuruswamiKM22, HsiehKM23}, our proof of \cref{mthm:main} goes through a reduction to noisy $k$-XOR. Our algorithm achieves very strong guarantees in the noisy $k$-XOR case, as we now explain. We define the noisy $k$-XOR model below and then state our result.

\begin{definition}[Noisy planted $k$-XOR]
\label{def:semirandomxor}
Let $\cH \subseteq \binom{[n]}{k}$ be a $k$-uniform hypergraph on $n$ vertices, let $x^* \in \Fits^n$, and let $\eta \in [0,1/2)$. Let $\psi(\cH, x^*, \eta)$ denote the distribution on $k$-XOR instances over $n$ variables $x_1, \dots, x_n \in \Fits$ obtained by, for each $C \in \cH$,
adding the constraint $\prod_{i \in C} x_i = \prod_{i \in C} x^*_i$ with probability $1 - \eta$, and otherwise adding the constraint $\prod_{i \in C} x_i = -\prod_{i \in C} x^*_i$. In the latter case, we say that the constraint $C$ is \emph{corrupted} or \emph{noisy}.

We call $\psi$ a \emph{noisy planted} $k$-XOR instance if it is sampled from $\psi(\cH, x^*, \eta)$, for some $\cH$, $x^*$, and $\eta$; the hypergraph $\cH$ is the constraint hypergraph, $x^*$ is the planted assignment, and $\eta$ is the noise parameter. Furthermore, we let $\Err_{\psi} \subseteq \cH$ denote the (unknown) set of corrupted constraints.
\end{definition}

\begin{mtheorem}[Algorithm for noisy $k$-XOR]
\label{mthm:algxor}
Let $\eta \in [0, 1/2)$, let $k, n \in \N$, and let $\eps \in (0, 1)$. Let $m \geq c n^{k/2} \cdot \frac{k^4 \log^3 n}{\eps^5 (1-2\eta)^4}$ for a universal constant $c$. There is a polynomial-time algorithm $\Alg$ that takes as input a $k$-XOR instance $\psi$ with constraint hypergraph $\cH$ and outputs two disjoint sets $\Alg_1(\cH), \Alg_2(\psi) \subseteq \cH$ with the following guarantees: \begin{inparaenum}[(1)] \item for any instance $\psi$ with $m$ constraints, $\abs{\Alg_1(\cH)} \leq \eps m$ and $\Alg_1(\cH)$ only depends on $\cH$, and \item for any $x^* \in \Fits^n$ and any $k$-uniform hypergraph $\cH$ with at least $m$ hyperedges, with probability at least $1 - 1/\poly(n)$ over $\psi \gets \psi(\cH, x^*, \eta)$, it holds that $\Alg_2(\psi) = \Err_{\psi} \cap (\cH \setminus \Alg_1(\cH))$.\end{inparaenum}
\end{mtheorem}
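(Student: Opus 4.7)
The plan is to combine a hypergraph expander decomposition of $\cH$ with a per-piece spectral/SDP-based recovery procedure. Concretely, $\Alg_1(\cH)$ will be the set of ``leftover'' constraints discarded by the decomposition (and so depends only on $\cH$), and $\Alg_2(\psi)$ will be the union, over the remaining expander pieces, of the corrupted constraints identified by the recovery step on each piece.

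First I would run a hypergraph expander decomposition on $\cH$, producing a partition of $\cH$ into subhypergraphs $\cH_1,\ldots,\cH_t$ together with a residual set $\Alg_1(\cH)\subseteq \cH$, with the properties: (i) each $\cH_i$ is a ``weak expander'' in the quantitative sense required by the spectral step; (ii) $\abs{\Alg_1(\cH)} \leq \eps m$; and (iii) the output depends only on $\cH$. Standard (hyper)graph expander decomposition tools should give (i) and (ii) while costing only $\polylog(n)$ factors in the edge budget, which accounts for the $\log^3 n/\eps^5$ loss in the theorem's threshold; property (iii) is automatic since the decomposition is purely combinatorial.

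Next, on each piece $\cH_i$ I would run a spectral algorithm that, using $\psi$ restricted to $\cH_i$, identifies $\Err_\psi \cap \cH_i$ exactly with high probability. The plan is to form a Kikuchi-style matrix $A_i$ indexed by appropriate $\ell$-subsets of $[n]$, with entries that sum, over each hyperedge $C \in \cH_i$, a $\pm 1$ contribution according to whether the corresponding constraint agrees with $x^*$. Under weak expansion of $\cH_i$ and the clause-count lower bound, $A_i$ should concentrate around a rank-one ``signal'' matrix aligned with the lift of $x^*$. An explicit SDP dual certificate in the spirit of Abbe--Sandon~\cite{AbbeS15} then witnesses uniqueness of $x^*$ restricted to the variables touched by $\cH_i$, which can be read off from the top eigenvector and rounded; once $x^*$ is known on those variables, the corrupted subset of $\cH_i$ is just the constraints $x^*$ falsifies.

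The main obstacle, and the reason the $k$-XOR step requires new ideas beyond the refutation machinery, is establishing the required \emph{relative} spectral approximation of $A_i$ by its expectation. For refutation it suffices to bound $\bignorm{A_i - \E A_i}$ in absolute terms against the ``signal'' scale; here one instead needs a two-sided sandwich of the form $(1-\eps)\E A_i \preceq A_i \preceq (1+\eps)\E A_i$ on the relevant subspace, in the spirit of spectral sparsification, and it is exactly this stronger guarantee that the weak expansion property of the decomposition is designed to supply. I expect the proof of this step to go by a matrix Bernstein / trace-method argument on the per-hyperedge rank-one summands, with expansion used to control row sums and cross-terms; the factors $(1-2\eta)^{-4}$ and $\eps^{-5}$ in the threshold should then fall out of the concentration inequality used to turn the absolute bound into a relative one.
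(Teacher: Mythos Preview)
Your outline has the right high-level shape (decompose, then certify uniqueness piecewise via a relative spectral bound), but it skips the structural reductions that the paper actually relies on, and at least one of your steps would not go through as written.

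The paper does \emph{not} run an expander decomposition on the $k$-uniform hypergraph $\cH$ and then apply a Kikuchi-matrix argument on each piece. Instead it first reduces $\psi$ to a collection of $\tau$-spread \emph{bipartite} $t$-XOR instances $\psi^{(t)}$ for $t=2,\dots,k$ (this is a greedy ``heavy-subset peeling'' decomposition, \cref{alg:decomp}, not an expander decomposition), and then on each $\psi^{(t)}$ builds an explicit \emph{$2$-XOR} instance $\phi$ by pairing clauses sharing a special vertex and splitting the resulting $(2t-2)$-tuple into two halves (\cref{def:2XOR-from-bipartite-kXOR}). Only \emph{then} is a graph expander decomposition run, on the constraint graph of $\phi$. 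The point is that the SDP-uniqueness criterion you want (\cref{lem:sdpuniqueness}) is a Laplacian statement $L_H \prec \tfrac12 L_G$ about a \emph{graph}; there is no analogous clean rank-$1$ certificate for a degree-$k$ SoS relaxation of $k$-XOR, and working directly with a Kikuchi matrix does not give you one. Your ``two-sided sandwich $(1-\eps)\E A_i \preceq A_i \preceq (1+\eps)\E A_i$'' is the wrong target: what is actually needed and proved is the one-sided Laplacian bound $L_H \preceq \eta(1+\delta)L_G$ where $H$ is the subgraph of corrupted edges of $\phi$, and the nontrivial part is that the corruptions in $\phi$ are \emph{correlated} (each edge of $\phi$ comes from a pair of $\psi$-constraints), so matrix Chernoff has to be applied in two stages conditioned on half the randomness (\cref{lem:spectral-sparsification-odd}). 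The $\tau$-spreadness from the first decomposition is what bounds the max degree $\Delta$ of each single-clause subgraph and makes that two-stage argument go through.

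There is also a genuine missing step in your plan. After the SDP on each expander piece of $\phi$ returns a rank-$1$ solution, you have only learned, for most pairs $(C,C')$ of $\psi$-constraints sharing a special vertex $u$, the product $\xi_u(C)\xi_u(C')$ of their corruption indicators; you have \emph{not} learned $x^*$ on any original variable. The paper needs a separate combinatorial decoding step (\cref{alg:recovery}, \cref{lem:recovery}): for each $u$, form the bipartite graph of known products, take its giant component, solve a linear system to get $\xi_u(\cdot)$ up to a global sign, and fix the sign by majority (using $|\cH_u|\gtrsim (1-2\eta)^{-2}\log n$). Your sentence ``once $x^*$ is known on those variables, the corrupted subset of $\cH_i$ is just the constraints $x^*$ falsifies'' elides exactly this, and for odd $k$ there is no way around it.
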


In words, the algorithm discards a small number of constraints, and among the constraints that are not discarded, correctly identifies all (and only) the corrupted constraints. 
In particular, the subinstance obtained by discarding the $\lesssim (\eps + \eta)m$ constraints $\Alg_1(\cH) \cup \Alg_2(\psi)$ is satisfiable (and a solution can be found by Gaussian elimination).
Thus, \cref{mthm:algxor} immediately implies that for $k$-XOR, the $\NP$-hard task of deciding if $\psi$ has value $\geq 1 - \eta$ or $\leq \frac{1}{2} + \eta$ is actually \emph{easy} if $\psi$ has $\sim n^{k/2}$ 
constraints (far below the $\sim n^{k}$-hardness of \cite{FotakisLP16}), provided that the $\eta$-fraction of corrupted constraints in the ``yes'' case are a \emph{randomly chosen subset} of the otherwise arbitrary constraints.

\parhead{Exact vs.\ approximate recovery.}
As alluded to above, the guarantees of \cref{mthm:algxor} are much stronger: not only can we find a good assignment to $\psi$, we can break the constraints into two parts, a small fraction, $\Alg_1(\cH)$, where we are unable to determine the corrupted constraints,\footnote{Note that discarding a small fraction of constraints is necessary in the semirandom setting, as $\psi$ may contain many disconnected constant-size subinstances where it is not possible, even information-theoretically, to exactly identify the corrupted constraints with $1-o(1)$ probability.} and a large fraction, $\cH \setminus \Alg_1(\cH)$, where we can determine \emph{exactly} all of the corrupted constraints, $\Alg_2(\psi)$. Moreover, this partition depends only on the hypergraph $\cH$ and is \emph{independent of the noise}. We remark that it is not immediately obvious that this guarantee is achievable even for exponential-time algorithms, as $x^*$ may not be the globally optimal assignment with constant probability.
This strong guarantee of \cref{mthm:algxor} is in fact required for the reduction from \cref{mthm:main} to \cref{mthm:algxor}; the weaker (and more intuitive) guarantee of approximate recovery --- obtaining an assignment of value $1 - \eta - o(1)$ for the noisy XOR instance --- is insufficient for the reduction. 

One can view \cref{mthm:algxor} as an algorithm that extracts almost all the information about the planted assignment $x^*$ encoded by the instance $\psi$. Indeed, notice that even if $\eta = 0$, the instance $\psi$ only determines $x^*$ ``up to a linear subspace.''\footnote{A $k$-XOR constraint $x_{C_1} \cdots x_{C_k} = b_C \in \Fits$ can be equivalently written as a linear equation $x_{C_1}' + \cdots + x_{C_k}' = b_C'$ over $\F_2$, where we map $+1$ to $0$ and $-1$ to $1$.}
Namely, if we let $y \in \Fits^n$ be any solution to the system of constraints $\prod_{i \in C} y_i = 1$ for $C \in \cH$, then $y \odot x^*$ is also a planted assignment for $\psi$: formally, $\psi(\cH, x^*, \eta) = \psi(\cH, y \odot x^*, \eta)$ as distributions. So, aside from the $\eps m$ constraints that are discarded, with high probability over $\psi$ the algorithm determines the uncorrupted right-hand sides $\prod_{i \in C} x^*_i$ for every remaining constraint, which is all the information about the planted assignment $x^*$ encoded in the remaining constraints.

\parhead{The importance of relative spectral approximation.}
As a key technical ingredient in the algorithm, we uncover a \emph{deterministic} condition --- relative spectral approximation of the Laplacian of a graph (associated with the input instance) by a certain correlated random sample from it --- which when satisfied implies uniqueness of the SDP solution (\cref{lem:sdpuniqueness}). In  \cref{lem:spectral-sparsification} and \cref{lem:spectral-sparsification-odd}, we establish such spectral approximation guarantees. 

This spectral approximation property is the key ingredient in our certificate of unique identifiability of the planted assignment in a noisy $k$-XOR instance (see \cref{sec:sparsificationtechniques} for details) and allows us to \emph{exactly} recover the planted assignment for $2$-XOR instances where the constraint graph $G$ is a weak spectral expander (i.e., spectral gap $\gg 1/\poly \log n$) (\cref{lem:sdpuniqueness}), and forms the backbone of our final algorithm. We note that our spectral approximation condition can be seen as an analog of (and is, in fact, stronger than) the related spectral norm upper bound property that underlie the refutation algorithm of \cite{AbascalGK21}. 


This process of extracting a ``deterministic property of random instances sufficient for the analysis'' is an important conceptual theme underlying recent progress on semirandom optimization, and manifests as, e.g., the notion of ``butterfly degree'' in \cite{AbascalGK21}, ``hypergraph regularity'' or spreadness in \cite{GuruswamiKM22} in the context of semirandom CSP refutation, and biclique number bounds in the context of planted clique~\cite{BuhaiKS23}.
\section{Technical Overview}
\label{sec:techniques}
In this section, we give an overview of the proof of \cref{mthm:algxor} and our algorithm for noisy planted $k$-XOR. We defer discussion of the reduction from general $k$-CSPs to $k$-XOR used to obtain \cref{mthm:main} to \cref{sec:reductiontoxor}. There, we explain the additional challenges encountered in the semirandom case as compared to the random case \cite[Section 4]{FeldmanPV15}. Somewhat surprisingly, the reduction is complicated and quite different from the random planted case or even the semirandom refutation setting, where the reduction to XOR is straightforward.

We now explain \cref{mthm:algxor}. As is typical in algorithm design for $k$-XOR, the case when $k$ is even is considerably simpler than when $k$ is odd. For the purpose of this overview, we will focus mostly on the even case, and only briefly discuss the additional techniques for odd $k$ in \cref{sec:oddk}.

\parhead{Notation.} Throughout this paper, given a $k$-XOR instance $\psi$ on hypergraph $\cH \subseteq \binom{[n]}{k}$ with $m = |\cH|$ and right-hand sides $\{b_C\}_{C\in \cH}$, we define $\psi(x) \defeq \sum_{C\in \cH} b_C \prod_{i\in C} x_i$ to be a degree-$k$ polynomial mapping $\Fits^n \to [-m,m]$.
We note that $\val_{\psi}(x) = \frac{1}{2} + \frac{1}{2m}\psi(x) \in [0,1]$ is the fraction of constraints in $\psi$ satisfied by $x$.
Moreover, we will write $x_C \defeq \prod_{i\in C} x_i$.

Unless otherwise stated, we will use $\phi$ to denote a $2$-XOR instance and $\psi$ to denote a $k$-XOR instance for any $k \geq 2$.

We note that for even arity $k$-XOR, we have $\val_{\psi}(x) = \val_{\psi}(-x)$, and so it is only possible for the optimal solution to be unique \emph{up to a global sign}. We will abuse terminology and say that $x^*$ is the unique optimal assignment if $\pm x^*$ are the only optimal assignments, and we will say that we have recovered $x^*$ exactly if we obtain one of $\pm x^*$.
\subsection{Approximate recovery for \texorpdfstring{$2$}{2}-XOR from refutation}
\label{sec:2xorref}

First, let us focus on the case of $k = 2$, the simplest case, and let us furthermore suppose that we only want to achieve the weaker goal of recovering an assignment of value $1 - \eta - o(1)$. (Note that we do need the stronger guarantee of \cref{mthm:algxor} to solve general planted CSPs in \cref{mthm:main}.)

For $2$-XOR, this goal is actually quite straightforward to achieve using $2$-XOR refutation as a blackbox. Let us represent the $2$-XOR instance $\phi$ as a graph $G$ on $n$ vertices, along with right-hand sides $b_{ij}$ for each edge $(i,j) \in E$. Recall that we have $b_{ij} = x^*_i x^*_j$ with probability $1 - \eta$, and $b_{ij} = -x^*_i x^*_j$ otherwise.
Note that by concentration, $\val_{\phi}(x^*) = 1 - \eta \pm o(1)$ with high probability.

We now make the following observation. Let us suppose that we sample the noise in two steps: first, we add each $(i,j) \in E$ to a set $E'$ with probability $2\eta$ independently; then for each $(i,j) \in E'$ we set $b_{ij}$ to be uniformly random from $\Fits$.
Using known results for semirandom $2$-XOR refutation, it is possible to certify, via an SDP relaxation, that no assignment $x$ can satisfy (or violate) more than $\frac{1}{2} + o(1)$ fraction of the constraints in $E'$.

Thus, we can simply solve the SDP relaxation for $\phi$ and obtain a degree-$2$ pseudo-expectation $\pE$ in the variables $x_1, \dots, x_n$ over $\Fits^n$ that maximizes $\phi(x)$.
Let $\phi_{E'}$ be the subinstance containing only the constraints in $E'$, and let $\phi_{E\setminus E'}$ be the subinstance containing only the constraints in $E \setminus E'$, which are uncorrupted.
We have $\pE[\val_{\phi}(x)] \geq 1 - \eta - o(1)$, and the guarantee of refutation implies that $\pE[\val_{\phi_{E'}}(x)] \leq \frac{1}{2} + o(1)$. As $\val_{\phi}(x) = (1 - 2 \eta) \cdot \val_{\phi_{E \setminus E'}}(x) + 2 \eta \cdot \val_{\phi_{E'}}(x)$, we therefore have that $\pE[\val_{\phi_{E\setminus E'}}(x)] \geq 1-o(1)$, i.e., $\pE$ satisfies $1-o(1)$ fraction of the constraints in $E \setminus E'$.
Then, applying the standard Gaussian rounding, we obtain an $x$ that satisfies $1 - \sqrt{o(1)}$ fraction of the constraints in $E \setminus E'$ and thus has value $\val_{\phi}(x) \geq 1 - \eta - o(1)$ (as any $x$ must satisfy at least $\frac{1}{2} - o(1)$ fraction of the constraints in $E'$, with high probability over the noise).

One interesting observation is that in the above discussion, we can additionally allow $E'$ to be an \emph{arbitrary} subset of $E$ of size $2 \eta m$. Indeed, this is because the rounding only ``remembers'' that $\pE[\val_{\phi_{E \setminus E'}}(x)]$ has value $1 - o(1)$. As we shall see shortly, this is the key reason that the reduction breaks down for $k$-XOR.

\subsection{The challenges for \texorpdfstring{$k$}{k}-XOR and our strategy}
\label{sec:challenges-for-kxor}

Unfortunately, the natural blackbox reduction to refutation given in \cref{sec:2xorref} does not generalize to $k$-XOR for $k \geq 3$.
Following the approach described in the previous section, given a $k$-XOR instance $\psi$, one can solve a sum-of-squares SDP and obtain a pseudo-expectation $\pE$ where $\pE[\val_{\psi}(x)] \geq 1 - \eta - \delta$ and $\pE[\val_{\psi_{E\setminus E'}}(x)] \geq 1-\delta$ as before, where $\delta \sim 1/\polylog(n)$ when $m \gtrsim n^{k/2}$, due to the guarantees of refutation algorithms~\cite{AbascalGK21}.
However, unlike $2$-XOR where we have Gaussian rounding, for $k$-XOR there is no known rounding algorithm that takes a pseudo-expectation $\pE$ with $\pE[\val_{\psi_{E\setminus E'}}(x)] \geq 1-\delta$ and outputs an assignment $x$ such that $\val_{\psi_{E\setminus E'}}(x) \geq 1-f(\delta)$, for some $f(\cdot)$ such that $f(\delta)\to 0$ as $\delta\to 0$.
In fact, if we only ``remember'' that $\psi_{E\setminus E'}$ has value $1-\delta$, then it is $\NP$-hard to find an $x$ with value $> 1/2 + \delta$ even when $\delta = n^{-c}$ for some constant $c>0$, assuming a variant of the Sliding Scale Conjecture~\cite{BellareGLR93}\footnote{Note that we do need the Sliding Scale Conjecture, as the hardness shown in \cite{MoshkovitzR10} is not strong enough; it only proves hardness for $\delta \geq (\log \log n)^{-c}$, whereas we have $\delta \sim 1/\polylog(n)$.} (see e.g.~\cite{MoshkovitzR10, Moshkovitz15} for more details).

As we have seen, while semirandom $k$-XOR refutation allows us to efficiently approximate and certify the \emph{value} of the planted instance, the challenge lies in the \emph{rounding} of the SDP, where the goal is to recover an assignment $x$. This is a technical challenge that does not arise in the context of CSP refutation, as there we are merely trying to bound the value of the instance. As a result, new ideas are required to address this challenge.

\medskip
\parhead{Reduction from $k$-XOR to $2$-XOR for even $k$.}
One could still consider the following natural approach. For simplicity, let $k = 4$. Given a $4$-XOR instance $\psi$, we can write down a natural and related $2$-XOR instance $\phi$, as follows.
\begin{definition}[Reduction to $2$-XOR]
\label{def:reductionto2xor}
Let $\psi$ be a $4$-XOR instance, and let $\phi$ be the $2$-XOR defined as follows. The variables of $\phi$ are $y_{\{i,j\}}$ and correspond to \emph{pairs} of variables $\{x_i, x_j\}$, and for each constraint $x_i x_j x_{i'} x_{j'} = b_{i, j, i', j'}$ in $\psi$, we split $\{i,j,i',j'\}$ into $\{i,j\}$ and $\{i',j'\}$ arbitrarily and add a constraint $y_{\{i,j\}} y_{\{i', j'\}} = b_{i, j, i', j'}$ to $\phi$. See \cref{fig:4xor-example} for an example. This reduction easily generalizes to $k$-XOR for any even $k$.
\end{definition}

\begin{figure}[ht!]
    \centering
    \includegraphics[width=0.9\textwidth]{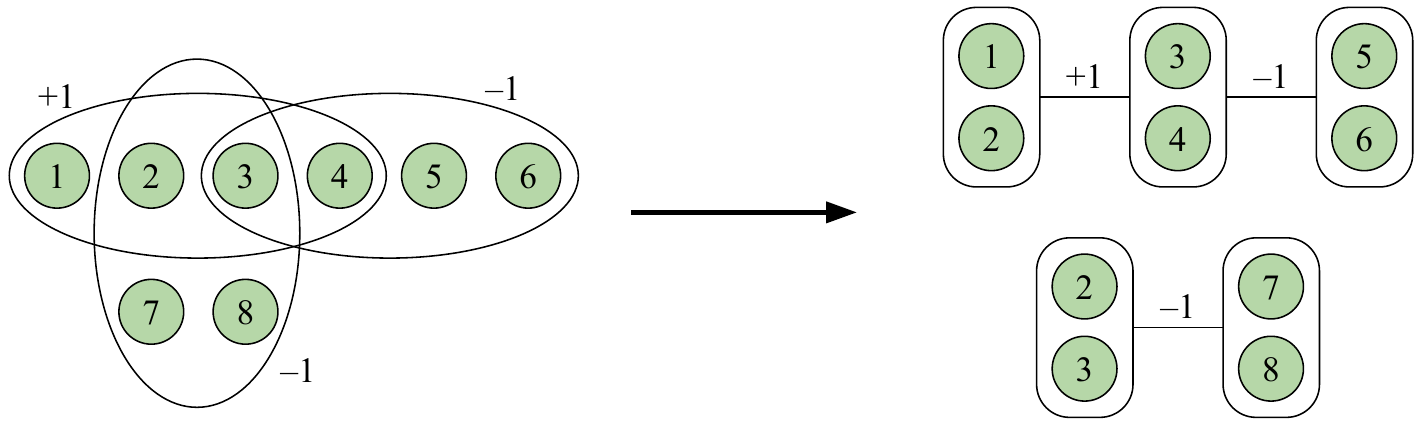}
    \caption{An example of the 2-XOR instance $\phi$ from a
    $4$-XOR instance $\psi$.}
    \label{fig:4xor-example}
\end{figure}

By following the approach for $2$-XOR described in \cref{sec:2xorref}, we can recover an assignment $y$ that satisfies $1 - \eta - o(1)$ fraction of the constraints in $\phi$. However, we need to recover an assignment $x$ to the original $k$-XOR $\psi$, and it is quite possible that while $y$ is a good assignment to $\phi$, \emph{it is not} close to $x^{\otimes 2}$ for \emph{any} $x \in \Fits^n$. If this happens, we will be unable to recover a good assignment to the $4$-XOR instance $\psi$.

The key reason that this simple idea fails is because, unlike for random noisy XOR, the assignment $y$ recovered is \emph{not} necessarily unique, and we cannot hope for it to be in the semirandom setting! For random noisy XOR, one can argue that with high probability, $y$ will be equal to ${x^*}^{\otimes 2}$, and then we can immediately decode and recover $x^*$ up to a global sign, i.e., we recover $\pm x^*$. But for semirandom instances, the situation can be far more complex.

\parhead{Approximate $2$-XOR recovery does not suffice for $4$-XOR.}
When constructing the $2$-XOR instance $\phi$ from the $4$-XOR $\psi$ (\cref{def:reductionto2xor}), it may be the case that $\phi$ can be partitioned into multiple disconnected clusters (or have very few edges across different clusters), even when the hypergraph $\cH$ of $\psi$ is connected; see \cref{fig:4xor-example} for example.
By the algorithm described in \cref{sec:2xorref}, we can get an assignment $y$ that satisfies $1-\eta-o(1)$ fraction of the constraints within each cluster.

The main challenge is to combine the information gathered from each cluster to recover an assignment $x$ for the original $4$-XOR $\psi$. Unfortunately, we do not know of a way to obtain a good assignment $x$ based solely on the guarantee that $y$ satisfies $1-\eta-o(1)$ fraction of constraints in each cluster. The issue occurs because the same variable $i \in [n]$ can appear in different clusters, e.g., $y_{\{1,2\}}$ and $y_{\{2,3\}}$ lie in different clusters in \cref{fig:4xor-example}, and the recovered assignments in each cluster may implicitly choose different values for $x_i$ because of the noise.
Indeed, even if the local optimum is consistent with $x^*$, there can still be multiple ``good'' assignments that achieve $1-\eta-o(1)$ value on the subinstance restricted to a cluster. So, unless the SDP can certify unique optimality of $x^*$, standard rounding techniques such as Gaussian rounding will merely output a ``good'' $y$, which may be inconsistent with $x^*$ and thus can choose inconsistent values of $x_i$ across the different clusters.

\parhead{Exact $2$-XOR recovery implies exact $4$-XOR recovery.} This leads to our main insight: if the subinstance of $\phi$ admits a \emph{unique} local optimal assignment $y^*$ (restricted to the cluster) that matches the planted assignment up to a sign, i.e., $y_{\{i,j\}}^* = \pm x_i^* x_j^*$, then for each edge in the cluster we know $y_{\{i,j\}}^* y_{\{i',j'\}}^* = x_i^* x_j^* x_{i'}^* x_{j'}^*$, and so the local constraints that are violated must be exactly the corrupted ones.
Moreover, if the SDP can certify the uniqueness of the local optimal assignment for a cluster, then the SDP solution will be a \emph{rank $1$ matrix} $y^* y^{*\top}$, and so we can precisely identify which constraints in $\phi$ are corrupted.
By repeating this for every cluster, we can identify all corrupted constraints in the original $4$-XOR $\psi$ (except for the small number of ``cross cluster'' edges), and thus achieve the guarantee stated in \cref{mthm:algxor}.

\parhead{The general algorithmic strategy.} The above discussion suggests that given a $k$-XOR instance $\psi$, we should first construct the $2$-XOR $\phi$, and then decompose the constraint graph $G$ of $\phi$ into pieces in some particular way so that the induced local instances have unique solutions. Namely, the examples suggest the following algorithmic strategy.
\begin{mdframed}
\begin{strategy}[Algorithm Blueprint for even $k$]
\label{strat:alg}
Given a noisy $k$-XOR instance $\psi$ with planted assignment $x^*$ and $m$ constraints, we do the following:
\begin{enumerate}[(1)]
\item Construct the $2$-XOR instance $\phi$ described in \cref{def:reductionto2xor}, which is a noisy $2$-XOR on $n^{k/2}$ variables with planted assignment $y^*$. Moreover, there is a one-to-one mapping between constraints in $\phi$ and $\psi$.

\item Let $G$ be the constraint graph of $\phi$. Decompose $G$ into subgraphs $G_1, \dots, G_T$ while only discarding a $o(1)$-fraction of edges such that each subgraph $G_i$ satisfies ``some property''. For each subgraph $G_i$, we define $\phi_i$ to be the subinstance of $\phi$ corresponding to the constraints in $G_i$. The goal is to identify a local property that the $G_i$'s satisfy so that \begin{inparaenum}[(1)]
\item we can perform the decomposition efficiently, and
\item for each subinstance $\phi_i$, we can ``recover $y^*$ locally'', i.e., we can find an assignment $y^{(i)}$ to the $2$-XOR instance $\phi_i$ that is consistent with the planted assignment $y^*$.
\end{inparaenum}
\item As each $y^{(i)}$ is consistent with $y^*$, the constraints in $\phi_i$ violated by $y^{(i)}$ must be precisely the corrupted constraints in $\phi_i$. Hence, for the constraints that appear in one of the $\phi_i$'s, we have determined exactly which ones are corrupted.

\item We have thus determined, for all but $o(m)$ constraints, precisely which ones are corrupted in the original $k$-XOR instance $\psi$. (Note that this is the \emph{stronger} guarantee that we achieve in \cref{mthm:algxor}.) By discarding the corrupted constraints along with the $o(m)$ constraints where we ``give up'', we thus obtain a system of $k$-sparse linear equations with $m(1- \eta - o(1))$ equations that has at least one solution (namely $x^*$), and so by solving it we obtain an $x$ with $\val_{\psi}(x) \geq 1 - \eta - o(1)$.
\end{enumerate}
\end{strategy}
\end{mdframed}

\subsection{Information-theoretic exact recovery from relative cut approximation}
Following \cref{strat:alg}, the first technical question to now ask is: given a noisy $2$-XOR instance $\phi$ with $n$ variables, $m \gg n$ constraints, and planted assignment $x^*$, what conditions do we need to impose on the constraint graph $G$ so that we can recover $x^*$ (up to a sign) exactly? As a natural first step, we investigate what conditions are required so that we can accomplish this \emph{information-theoretically}.
\begin{fact}
\label{fact:infotheoretic}
    Let $G = (V,E_G)$ be an $n$-vertex graph, and let $H = (V, E_H)$ be a subgraph of $G$ where $E_H \subseteq E_G$.
    Let $L_G, L_H$ be the \emph{unnormalized} Laplacians of $G$ and $H$.
    Consider a noisy planted 2-XOR instance $\phi$ on $G$ with planted assignment $x^* \in \Fits^n$ (\cref{def:semirandomxor}), and suppose $E_H$ is the set of corrupted edges.
Suppose that for every $x \in \Fits^n \setminus \{\one, -\one\}$, it holds that $x^{\top} L_H x < \frac{1}{2} x^{\top} L_G x$. Then, $x^*$ and $-x^*$ are the only two optimal assignments to $\phi$.
\end{fact}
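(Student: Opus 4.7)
The plan is to reduce the claim to a direct calculation that counts the number of satisfied constraints in terms of cut sizes in $G$ and $H$. First, I would introduce the change of variables $y = x \odot x^*$, so that $y \in \Fits^n$ ranges over $\Fits^n$ as $x$ does, with $y = \one$ corresponding to $x = x^*$ and $y = -\one$ corresponding to $x = -x^*$. Writing $b_{ij} = (-1)^{\eps_{ij}} x^*_i x^*_j$ where $\eps_{ij} \in \{0,1\}$ indicates whether $(i,j) \in E_H$, the constraint at edge $(i,j)$ is satisfied by $x$ if and only if $y_i y_j = (-1)^{\eps_{ij}}$. Thus an uncorrupted edge is satisfied exactly when it is \emph{not} cut by the bipartition induced by $y$, while a corrupted edge is satisfied exactly when it \emph{is} cut.

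Next, I would use the standard identity $y^\top L y = \sum_{(i,j) \in E}(y_i - y_j)^2 = 4 \cdot |C(y) \cap E|$ for $y \in \Fits^n$, where $C(y)$ denotes the set of cut edges of $y$. Applied to $L_G$ and $L_H$, this lets me express the number of satisfied constraints as
\begin{align*}
\val_\phi(x) \cdot m &= |E_G \setminus E_H| - |C(y) \cap (E_G \setminus E_H)| + |C(y) \cap E_H| \\
&= |E_G| - |E_H| - \tfrac{1}{4}\, y^\top L_G y + \tfrac{1}{2}\, y^\top L_H y.
\end{align*}

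Finally, I would observe that at $y = \pm \one$, both Laplacian quadratic forms vanish, so $\val_\phi(\pm x^*) \cdot m = |E_G| - |E_H|$. For any other $y \in \Fits^n \setminus \{\one, -\one\}$, the hypothesis $y^\top L_H y < \tfrac{1}{2}\, y^\top L_G y$ is exactly the statement that $\tfrac{1}{2}\, y^\top L_H y - \tfrac{1}{4}\, y^\top L_G y < 0$, which by the displayed identity means the number of satisfied constraints at such $y$ is strictly less than $|E_G| - |E_H|$. Hence $\pm x^*$ are the unique maximizers.

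I do not anticipate a main obstacle here — the statement is essentially a reformulation of the relative cut inequality as an optimality condition, and the only real content is the bookkeeping that links the corrupted/uncorrupted split of $E_G$ to the cut-based Laplacian quadratic forms. The one place to be slightly careful is the sign flip between corrupted and uncorrupted edges when converting from ``satisfied'' to ``cut'', which is what produces the factor of $2$ in front of $y^\top L_H y$ and therefore the $\tfrac{1}{2}$ threshold in the hypothesis.
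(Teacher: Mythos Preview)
Your proposal is correct and follows essentially the same approach as the paper: both perform the change of variables $y = x \odot x^*$ and rewrite the objective as a constant minus $\tfrac{1}{4}\,y^\top (L_G - 2L_H)\,y$, then invoke the hypothesis. The only cosmetic difference is that the paper passes through adjacency and degree matrices to reach the Laplacian expression, whereas you go directly via cut sizes.
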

Note that the condition $x^{\top} L_H x < \frac{1}{2} x^{\top} L_G x$ for $x \notin \{\one, -\one\}$ implies that $G$ is connected, as otherwise $L_G$ has a kernel of dimension $\geq 2$, which would contradict this assumption.
\begin{proof}
Let $x \in \Fits^n$ be any assignment. We wish to show that $\phi(x)$ is uniquely maximized when $x = x^*, -x^*$. We observe that 
\begin{flalign*}
&\phi(x) = \sum_{(i,j) \in E_G} x_i x_j b_{ij} = \sum_{(i,j) \in E_G} x_i  x_j x^*_i x^*_j - 2 \sum_{(i,j) \in E_H} x_i  x_j x^*_i x^*_j \enspace.
\end{flalign*}
Hence, by replacing $x$ with $x \odot x^*$, without loss of generality we can assume that $x^* = \one$. Now, let $D_G, D_H$ and $A_G, A_H$ be the degree and adjacency matrices of $G$ and $H$, so that $L_G = D_G - A_G$ and $L_H = D_H - A_H$. We thus have that
\begin{flalign*}
2\phi(x) &= x^\top A_G x - 2 x^\top A_H x = x^{\top} (D_G - 2 D_H) x - x^{\top} (L_G - 2 L_H) x \\
&= 2(\abs{E_G} - 2\abs{E_H}) - x^{\top} (L_G - 2 L_H) x \enspace.
\end{flalign*}
By assumption, if $x \in \Fits^n$ and $x \ne \one, -\one$, then we have that $x^{\top} (L_G - 2 L_H) x > 0$, which implies that $\phi(x) < \phi(\one)$, and finishes the proof.
\end{proof}

\cref{fact:infotheoretic} shows that if we can argue that $x^{\top}  L_H x < \frac{1}{2} x^{\top} L_G x$ for every $x \in \Fits^n \setminus \{\one, -\one\}$, then at least information-theoretically we can uniquely determine $x^*$. Observe that if we view $x$ as the signed indicator vector of a subset $S \subseteq [n]$, then $x^{\top} L_G x = E_G(S,\bar{S})$, the number of edges in $G$ crossing the cut defined by $S$, and similarly for $x^{\top} L_H x$. So, one can view the condition in \cref{fact:infotheoretic} as saying that the subgraph $H$ needs to be a (one-sided) cut sparsifier of $G$, i.e., it needs to roughly preserve the size of all cuts in $G$. The following relative cut approximation result of Karger~\cite{Karger94} shows that this will hold with high probability when $H$ is a randomly chosen subset of $G$, provided that the minimum cut in $G$ is not too small.
\begin{lemma}[Relative cut approximation \cite{Karger94}]
\label{lem:cutsparsification}
    Let $\eta \in (0, 1)$.
    Suppose an $n$-vertex graph $G$ has min-cut $c_{\min} \geq \frac{12\log n}{\eta}$, and suppose $H$ is a subgraph of $G$ by selecting each edge with probability $\eta$. Then, with probability $1-o(1)$,
    \begin{equation*}
        (1-\delta) x^{\top} L_G x \leq \frac{1}{\eta} \cdot x^{\top} L_H x \leq (1+\delta) x^{\top} L_G x \mcom \quad \text{for all $x \in \Fits^n$}
    \end{equation*}
    for $\delta = \sqrt{\frac{12 \log n}{\eta c_{\min}}}$.
\end{lemma}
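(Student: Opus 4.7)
The plan is to follow Karger's classical cut-sampling argument. First I would observe that over $x \in \Fits^n$ the quadratic form $x^\top L_G x$ counts cut edges: writing $S = \{i : x_i = 1\}$, each edge $(i,j) \in E_G$ contributes $(x_i - x_j)^2 \in \{0, 4\}$ depending on whether $S$ cuts it, so $x^\top L_G x = 4 \abs{E_G(S,\bar{S})}$ and similarly $x^\top L_H x = 4 \abs{E_H(S,\bar{S})}$. The desired inequality for Boolean $x$ is therefore equivalent to the cut-approximation statement: for every $S \subseteq [n]$,
\[
(1-\delta)\,\abs{E_G(S,\bar{S})} \;\le\; \tfrac{1}{\eta}\,\abs{E_H(S,\bar{S})} \;\le\; (1+\delta)\,\abs{E_G(S,\bar{S})}.
\]
Note this is only a cut-sparsification (Boolean) guarantee, which is what Karger's theorem actually proves and is all that \cref{lem:cutsparsification} asks for.

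Next, for each fixed cut $(S,\bar{S})$, $\abs{E_H(S,\bar{S})}$ is a sum of $\abs{E_G(S,\bar{S})}$ independent $\mathrm{Bernoulli}(\eta)$ variables of mean $\eta \abs{E_G(S,\bar{S})}$. A standard multiplicative Chernoff bound gives
\[
\Pr\!\Bigl[\,\bigl|\abs{E_H(S,\bar{S})} - \eta \abs{E_G(S,\bar{S})}\bigr| > \delta \eta \abs{E_G(S,\bar{S})}\,\Bigr] \;\le\; 2\exp\!\Bigl(-\tfrac{\delta^2 \eta}{3} \abs{E_G(S,\bar{S})}\Bigr).
\]
Substituting $\delta^2 = 12(\log n)/(\eta c_{\min})$, a cut of size $\alpha c_{\min}$ with $\alpha \ge 1$ fails with probability at most $2 n^{-4\alpha}$.

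The last step is to union-bound over all cuts. A naive union bound over the $2^n$ subsets of $[n]$ is hopeless, so I would invoke Karger's celebrated tree-contraction bound: for every $\alpha \ge 1$, the number of cuts of value at most $\alpha c_{\min}$ is at most $n^{2\alpha}$. Letting $N(\alpha)$ denote this cumulative count, Abel summation (integration by parts) bounds the total failure probability by
\[
\int_1^\infty 2 n^{-4\alpha} \, dN(\alpha) \;\le\; 8 \log n \int_1^\infty n^{-4\alpha} \cdot n^{2\alpha} \, d\alpha \;=\; O(1/n^2) = o(1).
\]

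The main obstacle is exactly this last accounting: since Karger's bound permits up to $n^{2\alpha}$ cuts at size $\alpha c_{\min}$, the Chernoff tail must decay faster than $n^{-2\alpha}$ for the union bound to close. The constant $12$ in the minimum-cut hypothesis is calibrated so that the Chernoff exponent is $4\alpha \log n$, leaving a net decay of $n^{-2\alpha}$ that is just fast enough to sum to $o(1)$; any smaller constant would cause the sum to diverge, so the bucketing or Abel-summation step has to be done somewhat carefully rather than by a one-shot dyadic union bound.
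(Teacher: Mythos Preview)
Your argument is correct and is precisely Karger's original proof. The paper does not actually prove this lemma; it is stated with attribution to \cite{Karger94} and used as a black box, so there is no ``paper's own proof'' to compare against. Your reconstruction---reducing the Laplacian inequality over $\Fits^n$ to cut approximation, applying Chernoff per cut, and closing the union bound via Karger's $n^{2\alpha}$ count on $\alpha$-near-minimum cuts---is the standard way this result is established, and your tracking of the constant $12$ against the Chernoff exponent is accurate.
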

With \cref{lem:cutsparsification} and \cref{fact:infotheoretic} in hand, we now have at least an information-theoretic algorithm with the same guarantees as in \cref{mthm:algxor}. We follow the strategy highlighted in \cref{strat:alg}. To decompose the graph $G$, we recursively find a min cut and split if it is below the threshold in \cref{lem:cutsparsification}. Notice that this discards at most $O(n \log n) = o(m)$ constraints (for $m \gg n \log n$), and these are precisely the constraints that we ``give up'' on and do not determine which ones are corrupted. Then, with high probability the local optimal assignment is consistent with $x^*$, and so locally we have learned \emph{exactly} which constraints are corrupted.
Hence, we have produced two sets of constraints: $E_1$, the $o(1)$-fraction of edges discarded during the decomposition, and $E_2 = (G \setminus E_1) \cap \Err_{\phi}$, which is exactly the set of corrupted constraints after discarding $E_1$. We note that it is a priori not obvious that this is achievable even for an \emph{exponential-time} algorithm, 
 as even though the $2^n$-time brute force algorithm will find the best assignment $x$ to $\phi$, it may not necessarily be $x^*$, and so the set of constraints violated by the globally optimal assignment might not be $\Err_{\phi}$.

\subsection{Efficient exact recovery from relative spectral approximation}
\label{sec:sparsificationtechniques}
Information-theoretic uniqueness implies that the planted assignment $x^*$ is the unique optimal assignment. But can we efficiently recover $x^*$? One natural approach is to simply solve the basic SDP relaxation of $\phi$: for $X \in \R^{n \times n}$, maximize $\phi(X) \defeq \sum_{(i,j) \in G} X_{ij} b_{ij}$ subject to $X \succeq 0$, $X = X^{\top}$, and $\diag(X) = \Id$. If the optimal SDP solution is simply $X = x^* {x^*}^{\top}$, then we trivially recover $x^*$ from the SDP solution. We thus ask: does the min cut condition of \cref{fact:infotheoretic,lem:cutsparsification} imply that $x^* {x^*}^{\top}$ is the unique optimal solution to the SDP? Namely, is the min cut condition sufficient for the SDP to certify that $x^*$ is the unique optimal assignment?

Unfortunately, it turns out that this is not the case, and we give a counterexample in \cref{sec:sparsification-example}. We thus require a stronger condition than the min cut one in order to obtain efficient algorithms. Nonetheless, an analogue of \cref{fact:infotheoretic} continues to hold, although now we require a stronger version that holds for all SDP solutions $X$, not just $x \in \Fits^n$. This stronger statement shows the SDP can \emph{certify} that $x^*$ is the unique optimal assignment if and only if a certain relative spectral approximation guarantee holds for the corrupted edges.

\begin{lemma}[SDP-certified uniqueness from relative spectral approximation]
\label{lem:sdpuniqueness}
    Let $G = (V,E_G)$ be an $n$-vertex connected graph, and let $H = (V, E_H)$ be a subgraph of $G$ where $E_H \subseteq E_G$.
    Let $L_G, L_H$ be the unnormalized Laplacians of $G$ and $H$.
    Consider a noisy planted 2-XOR instance $\phi$ on $G$ with planted assignment $x^* \in \Fits^n$ (\cref{def:semirandomxor}), and suppose $E_H$ is the set of corrupted edges.

    The SDP relaxation of $\phi$ satisfies
    \begin{equation*}
        \max_{X \succeq 0,\ X = X^{\top},\ \diag(X)=\Id} \phi(X) = \phi(x^*) = |E_G| - 2|E_H| \mcom
    \end{equation*}
    where $X = x^* x^{*\top}$ is the \emph{unique} optimum if and only if $G$ and $H$ satisfy
    \begin{align*}
        \Iprod{X, L_H} < \frac{1}{2} \Iprod{X, L_G},\quad
        \forall X\succeq 0,\ X = X^{\top},\ \diag(X) = \Id,\ X \neq \one \one^\top \mper
    \end{align*}
\end{lemma}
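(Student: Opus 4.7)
The plan is to reduce the claim to a direct calculation by ``twisting away'' the planted assignment $x^*$ from the SDP objective. Let $D^{*} \coloneqq \diag(x^*)$; since $(x^*_i)^2 = 1$, the map $X \mapsto Y \coloneqq D^{*} X D^{*}$ is an involution on the SDP feasible set $\cF \coloneqq \set{X \succeq 0,\, X = X^\top,\, \diag(X) = \Id}$ (it preserves PSD-ness, symmetry, and the unit diagonal). Moreover, $Y = \one\one^\top$ if and only if $X = x^{*}{x^{*}}^\top$, so the involution swaps the two ``special'' points of $\cF$.

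I will then rewrite the objective in the $Y$-variable. Using $b_{ij} = x^*_i x^*_j \bigl(1 - 2\cdot \1[(i,j) \in E_H]\bigr)$ from \cref{def:semirandomxor}, we have $\phi(X) = \sum_{(i,j) \in E_G} Y_{ij} - 2\sum_{(i,j) \in E_H} Y_{ij}$. Using $\diag(Y) = \Id$ and the handshake lemma, a one-line expansion of $\langle Y, L_F\rangle = \langle Y, D_F\rangle - \langle Y, A_F\rangle$ for any graph $F$ yields the identity $\sum_{(i,j) \in E_F} Y_{ij} = |E_F| - \tfrac{1}{2}\langle Y, L_F\rangle$. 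Applying this to both $G$ and $H$ gives the key formula
\[
\phi(X) \;=\; (|E_G| - 2|E_H|) \;-\; \Bigl(\tfrac{1}{2}\langle Y, L_G\rangle - \langle Y, L_H\rangle\Bigr).
\]
At the planted point $X = x^{*}{x^{*}}^\top$ we have $Y = \one\one^\top$, which lies in the kernel of every graph Laplacian, so both inner products vanish and $\phi(x^{*}{x^{*}}^\top) = |E_G| - 2|E_H| = \phi(x^*)$, as required.

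Both directions of the equivalence then follow by tracking the sign of the bracketed quantity. For the \emph{if} direction, assuming $\langle X, L_H\rangle < \tfrac{1}{2}\langle X, L_G\rangle$ on every $X \in \cF \setminus \set{\one\one^\top}$, I apply this hypothesis with $X$ replaced by $Y$ (which is in $\cF$ and differs from $\one\one^\top$ precisely when $X \neq x^{*}{x^{*}}^\top$) to conclude $\phi(X) < |E_G| - 2|E_H|$ for all feasible $X \neq x^{*}{x^{*}}^\top$, yielding both the claimed value and the uniqueness. For the \emph{only if} direction, uniqueness of $x^{*}{x^{*}}^\top$ as the SDP optimum with value $|E_G| - 2|E_H|$ forces $\tfrac{1}{2}\langle Y, L_G\rangle > \langle Y, L_H\rangle$ for every $Y \in \cF \setminus \set{\one\one^\top}$; since $X \leftrightarrow Y$ is a bijection of $\cF$ that swaps $\one\one^\top$ and $x^{*}{x^{*}}^\top$, renaming the dummy variable recovers the stated inequality on all $X \in \cF \setminus \set{\one\one^\top}$. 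The only mild obstacle is the bookkeeping between the untwisted $X$-coordinates (in which the combinatorial hypothesis is phrased in $x^*$-free form) and the twisted $Y$-coordinates (in which the objective factors cleanly around the Laplacian-kernel point $\one\one^\top$); the involution structure makes this translation automatic because $\cF$ is invariant and the two special points are simply swapped.
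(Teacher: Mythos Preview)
Your proposal is correct and follows essentially the same approach as the paper: the paper simply declares ``without loss of generality $x^* = \one$'' and then writes $\phi(X) = \tfrac{1}{2}\langle X, A_G - 2A_H\rangle = (|E_G|-2|E_H|) + \tfrac{1}{2}\langle X, 2L_H - L_G\rangle$, which is exactly your key identity with the twist $Y = D^{*} X D^{*}$ made explicit rather than absorbed into a WLOG. Your more careful bookkeeping of the involution (swapping $\one\one^\top \leftrightarrow x^{*}{x^{*}}^\top$) is a cleaner way to handle the fact that the spectral hypothesis excludes $\one\one^\top$ while the uniqueness claim is about $x^{*}{x^{*}}^\top$, but the content is identical.
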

\begin{proof}
    Recall that each $e = \{i,j\}\in E$ corresponds to a constraint $x_i x_j = b_{e}$ where $b_{e} = x_i^* x_j^*$ if $e \in E_G \setminus E_H$ and $b_{e} = - x_i^* x_j^*$ if $e\in E_H$, meaning that $\phi(X) = \sum_{\{i,j\}\in G\setminus E} X_{ij} x_i^* x_j^* - \sum_{\{i,j\}\in E} X_{ij} x_i^* x_j^*$.
    Without loss of generality, we can assume that $x^* = \one$ and that $\phi(X) = \frac{1}{2} \iprod{X, A_G - 2A_H}$, where $A_G$, $A_H$ are the adjacency matrices of $G$ and $H$.

    Note that $L_G = D_G - A_G$ and $L_H = D_H - A_H$, and $\tr(D_G) = 2|E_G|$, $\tr(D_H) = 2|E_H|$.
    For any $X \succeq 0$ with $\diag(X) = \Id$,
    \begin{equation*}
        \Iprod{X, A_G - 2A_H} = \Iprod{X, (D_G - L_G)- 2(D_H - L_H)}
        = 2(|E_G| - 2|E_H|) + \Iprod{X, 2L_H - L_G} \mper
    \end{equation*}
    Suppose $\iprod{X, L_H} < \frac{1}{2} \iprod{X, L_G}$ for all $X \neq \one\one^\top$.
    Since $\iprod{\one\one^\top, L_G} = \iprod{\one\one^\top, L_H} = 0$, we have that the maximum of $\frac{1}{2}\iprod{X, A_G - 2A_H}$ is $|E_G| - 2|E_H|$ and $X = \one\one^\top$ is the unique maximum.

    For the other direction, suppose there is an $X \neq \one\one^\top$ such that $\iprod{X, L_H} \geq \frac{1}{2} \iprod{X, L_G}$.
    Then, $\phi(X) \geq |E_G| - 2|E_H| = \phi(\one\one^\top)$, meaning that $\one\one^\top$ is not the unique optimum.
\end{proof}
\parhead{Relative spectral approximation from uniform subsamples.}
We now come to a key technical observation. Suppose that $H$ is a \emph{spectral sparsifier} of $G$, so that $v^{\top} (\frac{1}{\eta} L_H) v$ is  $(1 \pm \delta) v^{\top} L_G v$ for any $v \in \R^n$. Then clearly $\ip{X, L_H} < \frac{1}{2}\ip{X, L_G}$ if $\eta < 1/2$ and $\delta = o(1)$, as we can write $X = \sum_{i = 1}^n \lambda_i v_i v_i^{\top}$, and
\begin{equation*}
\ip{X, L_H} = \sum_{i = 1}^n \lambda_i v_i^{\top} L_H v_i \leq \eta(1 + \delta) \sum_{i = 1}^n \lambda_i v_i^{\top} L_G v_i = \eta(1 + \delta) \cdot \ip{X, L_G} < \frac{1}{2} \ip{X, L_G} \mper
\end{equation*}
Furthermore, note that above we only required that $L_H \preceq \eta(1 + \delta) L_G$, i.e., we only use the upper part of the spectral approximation.

We are now ready to state the key relative spectral approximation lemma. We observe that when $H$ is a uniformly random subsample of $G$ and $G$ has a \emph{spectral gap} and minimum degree $\polylog(n)$, then with high probability $L_H \preceq \eta(1 + \delta) L_G$.
We note that, while we do not provide a formal proof, the same argument using the lower tail of Matrix Chernoff can also establish a lower bound on $L_H$, which proves that $H$ is indeed a spectral sparsifier of $G$.
\begin{lemma}[Relative spectral approximation from uniform subsamples]
\label{lem:spectral-sparsification}
    Let $\eta \in (0,1)$.
    Suppose $G = (V,E)$ is an $n$-vertex graph with minimum degree $d_{\min}$ (self-loops allowed) and spectral gap $\lambda_2(\widetilde{L}_G) = \lambda$ such that $d_{\min}\lambda > \frac{18}{\eta} \log n$, where  $\widetilde{L}_G := D_G^{-1/2} {L}_G D_G^{-1/2}$ is the \emph{normalized} Laplacian.
    Let $H$ be a subgraph of $G$ obtained by selecting each edge with probability $\eta$.
    Then, with probability at least $1-O(n^{-2})$,
    \begin{align*}
        L_H \preceq \eta(1+\delta) \cdot L_G
    \end{align*}
    for $\delta = \sqrt{\frac{18 \log n}{\eta d_{\min} \lambda}}$.
\end{lemma}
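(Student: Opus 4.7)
My plan is to reduce the desired PSD inequality to an upper-tail matrix Chernoff bound after moving to normalized-Laplacian coordinates. Decompose $L_H = \sum_{e \in E} X_e L_e$, where $X_e \sim \mathrm{Bernoulli}(\eta)$ are independent and $L_e = (\mathbf{1}_u - \mathbf{1}_v)(\mathbf{1}_u - \mathbf{1}_v)^\top$ is the rank-one edge Laplacian for $e = \{u,v\}$ (and $L_e = 0$ for self-loops). Conjugating by $D_G^{-1/2}$, the target inequality $L_H \preceq \eta(1+\delta) L_G$ is equivalent to $\widetilde L_H \preceq \eta(1+\delta) \widetilde L_G$, where I set $\widetilde L_H := \sum_e X_e \widetilde L_e$ with $\widetilde L_e := D_G^{-1/2} L_e D_G^{-1/2}$, so that $\mathbb{E}[\widetilde L_H] = \eta \widetilde L_G$ by construction. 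I work in these coordinates because the spectral gap hypothesis is natively a statement about $\widetilde L_G$.

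Next, I would further normalize by $\widetilde L_G^{+/2}$ and set $W_e := \widetilde L_G^{+/2} \widetilde L_e \widetilde L_G^{+/2}$. These are PSD matrices summing to the projector $\widetilde \Pi$ onto the orthogonal complement of $\mathrm{span}(D_G^{1/2}\mathbf{1})$, which is the kernel of $\widetilde L_G$ (connectivity follows from $\lambda > 0$). Because both sides of the desired inequality vanish on this kernel, the reduction gives the clean target $\lambda_{\max}\bigl(\sum_e X_e W_e\bigr) \leq \eta(1+\delta)$. Each $W_e$ is rank one of the form $w_e w_e^\top$ with $w_e := \widetilde L_G^{+/2} D_G^{-1/2}(\mathbf{1}_u - \mathbf{1}_v)$, and the per-edge operator-norm bound factors as
$$\|W_e\| = \|w_e\|^2 \;\leq\; \|\widetilde L_G^+\| \cdot \|D_G^{-1/2}(\mathbf{1}_u - \mathbf{1}_v)\|^2 \;\leq\; \tfrac{1}{\lambda} \cdot \bigl(\tfrac{1}{d_u} + \tfrac{1}{d_v}\bigr) \;\leq\; R := \tfrac{2}{\lambda d_{\min}}.$$
This is the single step where both hypotheses enter.

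Finally, I would apply Tropp's upper-tail matrix Chernoff inequality to the independent PSD sum in ambient dimension $n$ with $\mu_{\max} = \lambda_{\max}(\eta \widetilde \Pi) = \eta$ and the $R$ above: for $\delta \in (0,1)$, the failure probability is at most $n \exp(-\delta^2 \eta \lambda d_{\min}/6)$, which is $O(n^{-2})$ for the stated $\delta^2 = 18 \log n/(\eta d_{\min} \lambda)$; the hypothesis $d_{\min}\lambda > 18\log n/\eta$ is exactly what ensures $\delta < 1$, validating the simplified Chernoff form. The main conceptual hurdle is picking the right normalization: conjugating instead with $L_G^{+/2}$ turns the per-edge bound into an effective resistance that mixes $\lambda_2(L_G)$ with degree imbalances and does not cleanly factor through hypotheses stated in terms of $\widetilde L_G$ and $d_{\min}$, whereas routing through $\widetilde L_G$ separates them cleanly as above.
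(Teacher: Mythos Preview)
Your proposal is correct and is essentially identical to the paper's proof: both conjugate $L_H$ by $D_G^{-1/2}$ and then by $(\widetilde{L}_G^{\dagger})^{1/2}$, bound each per-edge summand by $R = 2/(d_{\min}\lambda)$, and apply the upper-tail matrix Chernoff inequality with $\mu_{\max} = \eta$ to obtain the stated $\delta$. Your explicit discussion of why normalizing via $\widetilde{L}_G$ (rather than $L_G$) cleanly separates the $\lambda$ and $d_{\min}$ contributions is a nice touch not made explicit in the paper, but the arguments are otherwise the same.
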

\begin{proof}
    First, note that $\one$ lies in the kernel of both $L_G$ and $L_H$, and because of the spectral gap of $G$, $\dim(\ker(L_G)) = 1$.
    Therefore, recalling that $L_G = D_G^{1/2} \widetilde{L}_G D_G^{1/2}$, it suffices to prove that
    \begin{align*}
        \Norm{ (\widetilde{L}_G^{\dagger})^{1/2} D_G^{-1/2} L_H D_G^{-1/2} (\widetilde{L}_G^{\dagger})^{1/2} }_2 \leq \eta(1+\delta) \mper
    \end{align*}
    Here $\widetilde{L}_G^{\dagger}$ is the pseudo-inverse of $\widetilde{L}_G$, and $\|\widetilde{L}_G^{\dagger}\|_2 \leq 1/\lambda$ because $G$ has spectral gap $\lambda$.
    We will write $X \coloneqq (\widetilde{L}_G^{\dagger})^{1/2} D_G^{-1/2} L_H D_G^{-1/2} (\widetilde{L}_G^{\dagger})^{1/2}$ for convenience.

    Note that $L_G = \sum_{e\in E} L_e$, where $L_e \succeq 0$ is the Laplacian of a single edge $e$ and $\|L_e\|_2 = 2$.
    Let $X_e = (\widetilde{L}_G^{\dagger})^{1/2} D_G^{-1/2} L_e D_G^{-1/2} (\widetilde{L}_G^{\dagger})^{1/2}$ if $e$ is chosen in $H$ and $0$ otherwise.
    Then, $X = \sum_{e\in E} X_e$ and $\|\E[X]\|_2 = \eta$.
    Moreover, each $X_e$ satisfies $X_e \succeq 0$ and $\|X_e\|_2 \leq \|\widetilde{L}_G^{\dagger}\|_2 \cdot \|D_G^{-1}\|_2 \cdot \|L_e\|_2 \leq \frac{2}{ d_{\min} \lambda}$.
    Thus, by Matrix Chernoff (\Cref{fact:matrix-chernoff}),
    \begin{align*}
        \Pr\Brac{\|X\|_2 \geq \eta(1+\delta)} \leq n \cdot \exp\Paren{-\frac{\delta^2 \eta}{3} \cdot \frac{d_{\min} \lambda}{2}} \leq O(n^{-2})
    \end{align*}
    as long as $\frac{18 \log n}{\eta d_{\min} \lambda} \leq \delta^2 \leq 1$.
\end{proof}
\parhead{Finishing the algorithm.}
By \cref{lem:sdpuniqueness,lem:spectral-sparsification}, we can thus recover $x^*$ exactly if the constraint graph $G$ of $\phi$ has a nontrivial spectral gap and minimum degree $d_{\min} \geq \polylog(n)$. To finish the implementation of \cref{strat:alg}, we thus need to explain how to algorithmically decompose any graph $G$ into subgraphs $G_1, \dots, G_T$, each with reasonable min degree and nontrivial spectral gap, while only discarding a $o(1)$-fraction of the edges in $G$. This is the well-studied task of expander decomposition, for which we appeal to known results \cite{KannanVV04, SpielmanT11, Wulff17, SaranurakW19}.

This completes the high-level description of the algorithm in the even $k$ case. Below, we summarize the steps of the final algorithm.
\begin{mdframed}
    \begin{algorithm}[Algorithm for $k$-XOR for even $k$]
    \label{alg:evenk}
    \mbox{}
      \begin{description}
      \item[Input:] $k$-XOR instance $\psi$ on $n$ variables with $m$ constraints and constraint hypergraph $\cH$.

      \item[Output:] Disjoint sets of constraints $\cA_1, \cA_2 \subseteq \cH$ such that $\abs{\cA_1} \leq o(m)$ and only depends on $\cH$, and $\cA_2 = (\cH \setminus \cA_1) \cap \Err_{\psi}$.

      \item[Operation:] \mbox{}
        \begin{enumerate}
            \item Construct the $2$-XOR instance $\phi$ with constraint graph $G$, as described in \cref{def:reductionto2xor}.
            \item Remove small-degree vertices and run expander decomposition on $G$ to produce expanders $G_1, \dots, G_T$. Set $\cA_1$ to be the set of discarded constraints of size $o(m)$.
            \item For each $i \in [T]$, solve the basic SDP on the subinstance $\phi_i$ defined by the constraints $G_i$. Let $\cA^{(i)}_2$ denote the set of constraints violated by the optimal local SDP solution.
            \item Output $\cA_1$ and $\cA_2 = \bigcup_{i = 1}^T \cA^{(i)}_2$.
        \end{enumerate}
      \end{description}
    \end{algorithm}
\end{mdframed}

\subsection{The case of odd \texorpdfstring{$k$}{k}}
\label{sec:oddk}
We are now ready to briefly explain the differences in the case when $k$ is odd. For the purposes of this overview, we will focus only on the case of $k = 3$. Recall that we are given a $3$-XOR instance $\psi$, specified by a $3$-uniform hypergraph $\cH \subseteq \binom{[n]}{3}$, as well as the right-hand sides $b_C \in \Fits$ for $C \in \cH$, where $b_C = x^*_C$ with probability $1 - \eta$ and $b_C = -x^*_C$ otherwise and $x^* \in \Fits^n$ is the planted assignment.

We now produce a $4$-XOR instance using the well-known ``Cauchy-Schwarz trick'' from CSP refutation \cite{CojaGL07}. The general idea is to, for any pair of clauses $(C, C')$ that intersect, add the ``derived constraint'' $x_C x_{C'} = b_C b_{C'}$ to the $4$-XOR instance. Notice that if, e.g., $C = \{u, i, j\}$ and $C' = \{u, i', j'\}$, then $x_u$ appears twice on the left-hand side, and thus the constraint is $x_i x_j x_{i'} x_{j'} = b_C b_{C'}$. Given this $4$-XOR, we produce a $2$-XOR following a similar strategy as in \cref{def:reductionto2xor}. The above description omits many technical details, which we handle in \cref{sec:decomposition,sec:bipartite-kXOR}; we remark here that these are the same issues that arise in the CSP refutation case, and we handle them using the techniques in \cite{GuruswamiKM22}.

We have thus produced a $2$-XOR instance $\phi$ that is noisy but not in the sense of \cref{def:semirandomxor}. Indeed, each edge $e$ in $\phi$ is ``labeled'' by a pair $(C,C')$ of constraints in $\psi$, and $e$ is noisy if and only if \emph{exactly} one of $(C,C')$ is, and so the noise is not independent across constraints. Nonetheless, we can still follow the general strategy as in \cref{alg:evenk}. The main technical challenge is to argue that the relative spectral approximation guarantee of \cref{lem:spectral-sparsification} holds even when the noise has the aforementioned correlations, and we do this in \cref{lem:spectral-sparsification-odd}. This allows us to recover, for most intersecting pairs $(C,C')$, the quantity $\xi(C) \xi(C')$, where $\xi(C) = -1$ if $C$ is corrupted, and is $1$ otherwise, i.e., $b_C = x^*_C \xi(C)$; we do not determine $\xi(C) \xi(C')$ if and only if the pair $(C,C')$ corresponds to an edge $e$ that was discarded during the expander decomposition.

However, we are not quite done, as we would like to recover $\xi(C)$ for most $C$, but we only know $\xi(C) \xi(C')$ for most intersecting pairs $(C,C')$. Let us proceed by assuming that we know $\xi(C) \xi(C')$ for all intersecting pairs $(C,C')$, and then we will explain how to do a similar decoding process when we only know most pairs. Let us fix a vertex $u$, and let $\cH_u$ denote the set of $C \in \cH$ containing $u$. Now, we know $\xi(C) \xi(C')$ for all $C, C' \in \cH_u$, and so by Gaussian elimination we can determine $\xi(C)$ for all $C \in \cH_u$ up to a global sign.
Now, we know that the vector $\{\xi(C)\}_{C \in \cH_u}$ should have roughly $\eta \abs{\cH_u}$ entries that are $-1$. So, choosing the global sign that results in fewer $-1$'s, we thus correctly determine $\xi(C)$ for all $C \in \cH_u$. We can then repeat this process for each choice of $u$ to decode $\xi(C)$ for all $C$.

Of course, we only actually know $\xi(C) \xi(C')$ for most intersecting pairs $(C,C')$. This implies that for most choices of $u$, the graph $G_u$ with vertices $\cH_u$ and edges $(C,C')$ if we know $\xi(C) \xi(C')$ is obtained from the complete graph on vertices $\cH_u$ and deleting some $o(1)$-fraction of edges. This implies that $G_u$ has a connected component of size $(1 - o(1))\abs{\cH_u}$, and again via Gaussian elimination and picking the proper global sign, we can determine $\xi(C)$ on this large connected component. By repeating this process for each choice of $u$, we thus recover $\xi(C)$ for most $u$.

\subsection{Organization}
The rest of the paper is organized as follows. In \cref{sec:prelims}, we introduce some notation, and recall the various concentration inequalities and facts that we will use in our proofs. In \cref{sec:reductiontoxor}, we prove \cref{mthm:main} from \cref{mthm:algxor} by reducing semirandom planted CSPs to noisy XOR. In \cref{sec:decomposition,sec:bipartite-kXOR}, we prove \cref{mthm:algxor}; \cref{sec:decomposition} handles the reduction from $k$-XOR to ``bipartite $k$-XOR'', and then \cref{sec:bipartite-kXOR} gives the algorithm for the bipartite $k$-XOR case.

\section{Preliminaries}
\label{sec:prelims}

\paragraph{Notation.}
Given a graph $G = (V, E)$ with $n$ vertices and $m$ edges (including self-loops\footnote{Each self-loop contributes $1$ to the degree of a vertex.}), we write $D_G \in \R^{n\times n}$ as the diagonal degree matrix, $A_G \in \R^{n\times n}$ as the adjacency matrix, and $L_G = D_G - A_G$ as the unnormalized Laplacian (note that the self-loops do not contribute to $L_G$).
Furthermore, we write $\widetilde{L}_G = D_G^{-1/2} L_G D_G^{-1/2}$ to be the \emph{normalized} Laplacian, and denote its eigenvalues as $0 = \lambda_1(\widetilde{L}_G) \leq \lambda_2(\widetilde{L}_G) \leq \cdots \leq \lambda_n(\widetilde{L}_G) \leq 2$.

For any subset $S \subseteq V$, we denote $G[S]$ as the subgraph of $G$ induced by $S$, and $G\{S\}$ as the induced subgraph $G[S]$ but with self-loops added so that any vertex in $S$ has the same degree as its degree in $G$.

\begin{definition}[Uniform hypergraphs]
A $k$-uniform hypergraph $\cH$ on $n$ vertices is a collection $\cH$ of subsets of $[n]$ of size exactly $k$. For a set $\setQ \subseteq [n]$, we define $\deg(\setQ) := \abs{\{C \in \cH : \setQ \subseteq C\}}$.
\end{definition}

\subsection{Concentration inequalities}

\begin{fact}[Chernoff bound]  \label{fact:chernoff}
    Let $X_1,\dots,X_n$ be independent random variables taking values in $\zo$.
    Let $X = \sum_{i=1}^n X_i$ and $\mu = \E[X]$.
    Then, for any $\delta \in [0,1]$,
    \begin{align*}
        \Pr\Brac{|X-\mu| \geq \delta \mu} \leq 2 e^{-\delta^2\mu /3} \mper
    \end{align*}
\end{fact}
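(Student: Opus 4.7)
The plan is to apply the standard Chernoff method (the exponential moment / Bernstein trick), treating the upper and lower tails separately and then combining via a union bound to get the factor of $2$. First, for each $i$ let $p_i = \Pr[X_i = 1]$ so that $\mu = \sum_i p_i$, and fix a parameter $t > 0$ to be chosen later. By independence,
\begin{equation*}
\E[e^{tX}] = \prod_{i=1}^n \E[e^{tX_i}] = \prod_{i=1}^n \bigl(1 + p_i(e^t - 1)\bigr) \leq \prod_{i=1}^n e^{p_i(e^t-1)} = e^{(e^t-1)\mu} \mper
\end{equation*}
Applying Markov's inequality to $e^{tX}$ yields, for the upper tail,
\begin{equation*}
\Pr[X \geq (1+\delta)\mu] \leq \frac{\E[e^{tX}]}{e^{t(1+\delta)\mu}} \leq \exp\bigl((e^t - 1 - t(1+\delta))\mu\bigr) \mper
\end{equation*}

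Next I would optimize over $t > 0$. The optimal choice $t = \ln(1+\delta)$ gives the classical bound $\Pr[X \geq (1+\delta)\mu] \leq \bigl(e^\delta / (1+\delta)^{1+\delta}\bigr)^\mu$. To convert this into the Gaussian-looking form $e^{-\delta^2 \mu / 3}$, I would prove the elementary calculus inequality $(1+\delta)\ln(1+\delta) - \delta \geq \delta^2/3$ for all $\delta \in [0,1]$. This is the main technical step: differentiate $f(\delta) = (1+\delta)\ln(1+\delta) - \delta - \delta^2/3$ twice to show $f''(\delta) = \frac{1}{1+\delta} - \frac{2}{3} \geq 0$ on $[0,1/2]$ and handle $[1/2,1]$ separately, or more cleanly use the Taylor expansion $(1+\delta)\ln(1+\delta) - \delta = \sum_{j \geq 2}\frac{(-\delta)^j}{j(j-1)}$ and bound the alternating tail. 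Either way one obtains $\Pr[X \geq (1+\delta)\mu] \leq e^{-\delta^2 \mu / 3}$.

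For the lower tail I would run the same argument with $t < 0$: Markov on $e^{-|t|X}$ gives $\Pr[X \leq (1-\delta)\mu] \leq \exp\bigl((e^{-|t|} - 1 + |t|(1-\delta))\mu\bigr)$, and choosing $|t| = -\ln(1-\delta)$ yields $\bigl(e^{-\delta}/(1-\delta)^{1-\delta}\bigr)^\mu$. The corresponding inequality $(1-\delta)\ln(1-\delta) + \delta \geq \delta^2/2 \geq \delta^2/3$ for $\delta \in [0,1)$ follows from the same power-series comparison (here all terms are of the same sign, so the bound is even easier). Combining the two tails by a union bound gives
\begin{equation*}
\Pr[|X - \mu| \geq \delta \mu] \leq \Pr[X \geq (1+\delta)\mu] + \Pr[X \leq (1-\delta)\mu] \leq 2 e^{-\delta^2 \mu / 3} \mcom
\end{equation*}
which is the claimed bound. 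The main obstacle is really just the numerical calculus inequality controlling $(1+\delta)\ln(1+\delta) - \delta$ from below by $\delta^2/3$ on $[0,1]$; once that is in hand, everything else is a mechanical application of independence, Markov, and optimizing the exponential tilt.
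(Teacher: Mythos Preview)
Your proposal is correct and is exactly the standard textbook derivation of the multiplicative Chernoff bound. The paper itself does not prove this statement at all: it is listed in the preliminaries as a known concentration inequality (alongside Matrix Chernoff, which is cited to Tropp), so there is no ``paper's own proof'' to compare against. Your exponential-moment argument with the calculus inequality $(1+\delta)\ln(1+\delta)-\delta \geq \delta^2/3$ on $[0,1]$ is the canonical route and would be entirely acceptable if a proof were required.
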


\begin{fact}[{Matrix Chernoff~\cite[Theorem 5.1.1]{Tropp15}}] \label{fact:matrix-chernoff}
    Let $X_1,\dots, X_n\in \R^{d\times d}$ be independent, random, symmetric matrices such that $X_i \succeq 0$ and $\lambda_{\max}(X_i) \leq R$ almost surely.
    Let $X = \sum_{i=1}^n X_i$ and $\mu = \lambda_{\max}(\E[X])$.
    Then, for any $\delta \in [0,1]$,
    \begin{align*}
        \Pr\Brac{\lambda_{\max}(X) \geq (1+\delta) \mu} \leq d \cdot \exp\Paren{-\frac{\delta^2\mu}{3R}} \mper
    \end{align*}
\end{fact}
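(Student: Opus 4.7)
The plan is to follow the matrix Laplace transform method of Ahlswede--Winter, as refined by Tropp using Lieb's concavity theorem. First I would convert the spectral tail bound into a control on a matrix moment generating function. For any $\theta > 0$,
\begin{align*}
\Pr\Brac{\lambda_{\max}(X) \geq (1+\delta)\mu}
= \Pr\Brac{\lambda_{\max}(e^{\theta X}) \geq e^{\theta(1+\delta)\mu}}
\leq e^{-\theta(1+\delta)\mu} \cdot \E \tr \exp(\theta X) \mcom
\end{align*}
using scalar Markov and $\lambda_{\max}(e^{\theta X}) \leq \tr(e^{\theta X})$ since $e^{\theta X} \succeq 0$.

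The main obstacle is bounding $\E \tr \exp(\theta \sum_i X_i)$ in the face of non-commutativity: the scalar identity $e^{\sum_i A_i} = \prod_i e^{A_i}$ fails for matrices, so we cannot directly factorize the MGF over the independent summands. I would invoke \emph{Lieb's concavity theorem}, which states that $A \mapsto \tr \exp(H + \log A)$ is concave on positive definite $A$ for any Hermitian $H$. Peeling off the $X_i$ one at a time and applying Jensen's inequality gives the Tropp master inequality
\begin{align*}
\E \tr \exp\Paren{\theta \textstyle\sum_i X_i} \leq \tr \exp\Paren{\textstyle\sum_i \log \E \exp(\theta X_i)} \mper
\end{align*}

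Next I would bound each matrix cumulant generating function. Because $0 \preceq X_i$ and $\lambda_{\max}(X_i) \leq R$, applying the functional calculus to the scalar inequality $e^{\theta x} \leq 1 + \frac{e^{\theta R}-1}{R} x$ valid on $[0,R]$ gives $e^{\theta X_i} \preceq \Id + \frac{e^{\theta R}-1}{R} X_i$, hence $\E e^{\theta X_i} \preceq \Id + \frac{e^{\theta R}-1}{R} \E X_i$. Operator monotonicity of $\log$ together with $\log(\Id + A) \preceq A$ for $A \succeq 0$ then yields $\log \E e^{\theta X_i} \preceq \frac{e^{\theta R}-1}{R} \E X_i$. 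Summing over $i$ and using Weyl monotonicity of the eigenvalues (which gives $\tr \exp(A) \leq \tr \exp(B)$ whenever $A \preceq B$),
\begin{align*}
\E \tr \exp(\theta X) \leq \tr \exp\Paren{\tfrac{e^{\theta R}-1}{R} \E X} \leq d \cdot \exp\Paren{\tfrac{e^{\theta R}-1}{R} \mu} \mcom
\end{align*}
where the last inequality uses $\tr \exp(M) \leq d \cdot \exp(\lambda_{\max}(M))$.

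Finally I would optimize $\theta$. Choosing $\theta = R^{-1}\log(1+\delta)$ gives $e^{\theta R}-1 = \delta$ and $\theta(1+\delta)\mu = R^{-1}(1+\delta)\log(1+\delta)\mu$, so combining with the matrix Markov step yields
\begin{align*}
\Pr\Brac{\lambda_{\max}(X) \geq (1+\delta)\mu} \leq d \cdot \Paren{\frac{e^{\delta}}{(1+\delta)^{1+\delta}}}^{\mu/R} \mper
\end{align*}
The elementary scalar bound $(1+\delta)\log(1+\delta) - \delta \geq \delta^2/3$ for $\delta \in [0,1]$ then delivers $d \exp(-\delta^2 \mu / (3R))$, matching the claimed bound. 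The only nontrivial analytic ingredient is Lieb's theorem, which is what allows the sum in the exponent to be handled despite non-commutativity; everything else amounts to careful bookkeeping with operator monotonicity and scalar calculus on $[0,R]$.
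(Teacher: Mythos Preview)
The paper does not give a proof of this fact; it is quoted directly from \cite[Theorem~5.1.1]{Tropp15} and used as a black box. Your proposal is correct and is exactly the standard argument from that reference: matrix Laplace transform, Lieb's concavity to obtain the master inequality, the secant-line bound $e^{\theta x}\le 1+\tfrac{e^{\theta R}-1}{R}x$ on $[0,R]$ transferred via operator monotonicity of $\log$, and optimization in $\theta$ followed by the scalar estimate $(1+\delta)\log(1+\delta)-\delta\ge \delta^2/3$ on $[0,1]$.
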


\subsection{Graph pruning and expander decomposition}

It is a standard result that given a graph with $m$ edges and average degree $d$, one can delete vertices such that the resulting graph has minimum degree $\eps d$ and at least $(1-2\eps)m$ edges. We include a short proof for completeness.
\begin{lemma}[Graph pruning] \label{lem:graph-pruning}
    Let $G$ be an $n$-vertex graph with average degree $d$ and $m = \frac{nd}{2}$ edges, and let $\eps \in (0,1/2)$.
    There is an algorithm that deletes vertices of $G$ such that the resulting graph has minimum degree $\eps d$ and at least $(1-2\eps)m$ edges.
\end{lemma}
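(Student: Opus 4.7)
The plan is to apply a simple greedy peeling procedure: while there exists a vertex whose current degree in the remaining subgraph is strictly less than $\eps d$, delete it together with all its incident edges. This procedure terminates after at most $n$ iterations since each round strictly decreases the vertex count, runs in polynomial time, and by construction guarantees that every vertex of the final subgraph has degree at least $\eps d$. So the only nontrivial task is to bound the number of edges that get removed by the peeling.

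The edge accounting will be immediate. Each time a vertex $v$ is deleted, the number of edges removed at that step equals $v$'s degree in the current subgraph, which is strictly less than $\eps d$ by the rule of the procedure. Every removed edge is charged to exactly one such deletion (the step at which the first of its two endpoints is peeled), and at most $n$ vertices are ever deleted. Hence the total number of removed edges is strictly less than $n \cdot \eps d = 2\eps m$, where I use the identity $m = nd/2$. Consequently at least $m - 2\eps m = (1-2\eps)m$ edges survive.

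There is essentially no obstacle here; the only conceptual point worth flagging is that the threshold $\eps d$ must be applied to degrees in the \emph{current} subgraph rather than to degrees in the original $G$. Using the current degree is simultaneously what yields the final minimum-degree guarantee and what keeps the per-step edge bound at $<\eps d$, which is exactly what makes the total edge-loss bound $< 2\eps m$ go through.
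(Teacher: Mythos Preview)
Your proof is correct and follows essentially the same approach as the paper: the identical greedy peeling algorithm, with the edge loss bounded by $n \cdot \eps d = 2\eps m$. The paper additionally verifies that the average degree never decreases during peeling, but this invariant is not actually needed for the edge-count bound, so your more direct argument is perfectly fine.
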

\begin{proof}
    The algorithm is simple: repeatedly remove any vertex with degree $<\eps d$.
    First, we show by induction that each deletion cannot decrease the average degree.
    Suppose there are $n' \leq n$ vertices left and average degree $d' \geq d$. Then, after deleting a vertex $u$ with degree $d_u < \eps d$, the average degree becomes $\frac{n'd'-2d_u}{n'-1} > \frac{n' d - 2\eps d}{n'-1} = d \cdot \frac{n'-2\eps}{n'-1}$.
    Thus, for $\eps < 1/2$, the average degree is always at least $d$.
    Furthermore, since the algorithm can delete at most $n$ vertices, it can delete at most $\eps d n = 2\eps m$ edges.
\end{proof}

We will also need an algorithm that partitions a graph into expanding clusters such that total number of edges across different clusters is small.
Expander decomposition has been developed in a long line of work \cite{KannanVV04,SpielmanT11,Wulff17,SaranurakW19} and has a wide range of applications. For our algorithm, we only require a very simple expander decomposition that recursively applies Cheeger's inequality.
\begin{fact}[Expander decomposition]
\label{fact:expander-decomp}
    Given a (multi)graph $G = (V,E)$ with $m$ edges and a parameter $\eps \in (0,1)$, there is a polynomial-time algorithm that finds a partition of $V$ into $V_1,\dots, V_T$ such that $\lambda_2(\widetilde{L}_{G\{V_i\}}) \geq \Omega(\eps^2/\log^2 m)$ for each $i\in[T]$ and the number of edges across partitions is at most $\eps m$.
 \end{fact}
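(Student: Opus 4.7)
The plan is to run the standard recursive Cheeger decomposition. Fix a threshold $\phi^\star := c\eps^2/\log^2 m$ for a small universal constant $c > 0$ to be chosen. Starting from $S = V$, on each current cluster $S$ compute $\lambda_2 := \lambda_2(\widetilde{L}_{G\{S\}})$ together with its Fiedler eigenvector in polynomial time. If $\lambda_2 \geq \phi^\star$, declare $S$ a final cluster and stop. Otherwise, apply the algorithmic direction of Cheeger's inequality on $\widetilde{L}_{G\{S\}}$ via a sweep of the Fiedler vector to obtain a nonempty $T \subsetneq S$ with $\mathrm{vol}_G(T) \leq \tfrac{1}{2}\mathrm{vol}_G(S)$ and $|E_G(T, S\setminus T)| \leq \sqrt{2\lambda_2}\cdot \mathrm{vol}_G(T) \leq \sqrt{2\phi^\star}\cdot \mathrm{vol}_G(T)$; then recurse on $G\{T\}$ and $G\{S\setminus T\}$. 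The per-cluster spectral-gap bound claimed in the statement is then immediate from the termination rule: any output cluster $V_i$ satisfies $\lambda_2(\widetilde{L}_{G\{V_i\}}) \geq \phi^\star = \Omega(\eps^2/\log^2 m)$.

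For the bound on crossing edges, charge each split to the vertices on its smaller side. At a recursion node splitting $S$ into $(T, S\setminus T)$ with $\mathrm{vol}_G(T) \leq \mathrm{vol}_G(S\setminus T)$, assign weight $\sqrt{2\phi^\star}\cdot d_G(v)$ to each $v \in T$; these weights sum to at least the number of cut edges by the Cheeger bound. Any fixed vertex $v$ can fall on the smaller side in at most $\log_2 \mathrm{vol}_G(V) = O(\log m)$ splits along its root-to-leaf path, since each such step at least halves the $G$-volume of the cluster containing $v$. Summing over $v$ therefore yields a total crossing-edge count of at most $\sqrt{2\phi^\star}\cdot O(\log m)\cdot \sum_v d_G(v) \leq O(\sqrt{c})\cdot \eps m$, which is below $\eps m$ once $c$ is chosen small enough. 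The runtime is polynomial because the recursion tree has at most $n - 1$ internal nodes (each split strictly partitions a vertex set) and each node performs a single polynomial-time eigendecomposition and Cheeger sweep.

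The main technical subtlety, and essentially the only place where care is needed, is the self-loop bookkeeping built into the notation $G\{S\}$. The operator $L_{G\{S\}}$ ignores self-loops, so for every $x \in \R^S$ the quadratic form $x^\top L_{G\{S\}} x$ counts only genuine edges of $G[S]$, yet $D_{G\{S\}}$ keeps the original $G$-degrees; consequently $\widetilde{L}_{G\{S\}}$ is the normalized Laplacian of a weighted graph whose conductance is measured against $\mathrm{vol}_G$. One must verify that Cheeger's inequality and its sweep rounding go through verbatim in this weighted/self-loop setting, so that the cut produced at each recursion step really has conductance $\leq \sqrt{2\phi^\star}$ with respect to $\mathrm{vol}_G$, as used in the charging argument; this is standard but is the only step where the unusual self-loop convention has to be handled carefully. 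Once this is set up, the argument concludes; one could alternatively invoke the near-linear-time expander decomposition of \cite{SaranurakW19} for sharper runtimes, but the recursive Cheeger procedure suffices at our parameter regime.
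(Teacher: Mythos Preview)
Your proposal is correct and follows essentially the same approach as the paper: the recursive Cheeger-sweep decomposition with threshold $\Theta(\eps^2/\log^2 m)$, and the same charging argument bounding crossing edges by noting that each unit of volume can be on the smaller side at most $O(\log m)$ times (the paper phrases this as charging to ``half-edges,'' which is equivalent to your vertex-weighted-by-degree charge). Your explicit remark about the self-loop bookkeeping in $G\{S\}$ and Cheeger applying there is a nice clarification that the paper leaves implicit.
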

 \begin{proof}
    Fix $\lambda = c\eps^2/\log^2 m$ for some constant $c$ to be chosen later.
    The algorithm is very simple. Given a graph $G = (V,E)$ (with potentially parallel edges and self-loops), if $\lambda_2(\wt{L}_G) < \lambda$, then by Cheeger's inequality we can efficiently find a subset $S\subseteq V$ with $\vol(S) \leq \vol(\ol{S})$ such that $\frac{|E(S, \ol{S})|}{\vol(S)} < \sqrt{2\lambda}$.
    Here $\vol(S) \defeq \sum_{v\in S} \deg(v)$.
    Then, we cut along $S$, add self-loops to the induced subgraphs $G[S]$ and $G[\ol{S}]$ so that the vertex degrees remain the same (each self-loop contributes $1$ to the degree). This produces two graphs $G\{S\}$ and $G\{\ol{S}\}$, and we recurse on each.
    By construction, in the end we will have partitions $V_1,\dots,V_T$ where either $V_i$ is either a single vertex or satisfies $\lambda_2(\wt{L}_{G\{V_i\}}) \geq \lambda$.

    We now bound the number of edges cut via a charging argument.
    Consider the ``half-edges'' in the graph, where each edge $(u,v)$ contributes one half-edge to $u$ and one to $v$, and each self-loop counts as one half-edge.
    Then, $\vol(S)$ equals the number of half-edges attached to $S$.
    Now, imagine we have a counter for each half-edge, and every time we cut along $S$ we add $\sqrt{2\lambda}$ to each half-edge attached to $S$ (the smaller side).
    Since $E(S,\ol{S}) < \sqrt{2\lambda}\cdot \vol(S)$, it follows that the number of edges cut is at most the total sum of the counters.
    On the other hand, each half-edge can appear on the smaller side of the cut at most $\log_2 2m$ times, as each time the half-edge is on the smaller side of the cut, $\vol(S)$ decreases by at least a factor of $2$, and $\vol([n]) = 2m$.
    So, the total sum must be $\leq \sqrt{2\lambda} \cdot 2m \log_2 2m \leq \eps m$ for a small enough constant $c$.
 \end{proof}
\section{From Planted CSPs to Noisy XOR}
\label{sec:reductiontoxor}
In this section, we show how to use \cref{mthm:algxor} to prove \cref{mthm:main}.
Before we delve into the formal proof, we will first explain the reduction given in \cite{FeldmanPV15}. We begin with some definitions.

\parhead{Setup.} Let $\Psi$ be sampled from $\Psi(\vec{\cH}, x^*, Q)$, where $x^* \in \Fits^n$, $\vec{\cH} \subseteq [n]^k$, and $Q$ is a planting distribution for the predicate $P$. Let $Q(y) = \sum_{S \subseteq [k]} \hat{Q}(S) \prod_{i \in S} y_i$ be the Fourier decomposition of $Q$, where $\hat{Q}(S) = \frac{1}{2^k} \sum_{y \in \Fits^k} Q(y) \prod_{i \in S} y_i \in [-2^{-k}, 2^{-k}]$.
Recall (\cref{def:semirandomcsp}) that $\Psi$ is specified by a collection $\vec{\cH} \subseteq [n]^k$ of scopes, along with a vector $\ell(\vec{C}) \in \Fits^k$ for each $\vec{C} \in \vec{\cH}$ of literal negations.

\begin{definition}
\label{def:xorsubinstances}
 Let $S \subseteq [k]$ be nonempty. Let $\psi^{(S,+)}$ be the
 $\abs{S}$-XOR instance obtained by, for each constraint $\vec{C}$ in $\Psi$, adding the constraint $\prod_{i \in S} x_{\vec{C}_i} = \prod_{i \in S} \ell(\vec{C})_i$. Similarly, let $\psi^{(S,-)}$ have constraints $\prod_{i \in S} x_{\vec{C}_i} = -\prod_{i \in S} \ell(\vec{C})_i$.
\end{definition}
We make use of the following simple claim.
\begin{claim}
\label{claim:noisysubinstance}
For each nonempty $S \subseteq [k]$, $\psi^{(S, +)}$ is a noisy $\abs{S}$-XOR instance (\cref{def:semirandomxor}) with planted assignment $x^*$ and noise $\eta = \frac{1}{2}(1 - 2^k \hat{Q}(S))$. Similarly, $\psi^{(S, -)}$ is a noisy $\abs{S}$-XOR instance with planted assignment $x^*$ and noise $\eta = \frac{1}{2}(1 + 2^k \hat{Q}(S))$.
\end{claim}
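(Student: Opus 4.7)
The plan is to reduce the claim to a one-line Fourier computation applied to each scope independently. Fix a scope $\vec{C} \in \vec{\cH}$ and recall that the literal pattern $\ell(\vec{C}) \in \Fits^k$ has law $\Pr[\ell(\vec{C}) = \ell] = Q(\ell \odot x^*_{\vec{C}})$ by \cref{def:semirandomcsp}. The constraint that $\psi^{(S,+)}$ adds for $\vec{C}$ agrees with the planted assignment $x^*$ precisely when $\prod_{i \in S} \ell(\vec{C})_i = \prod_{i \in S} x^*_{\vec{C}_i}$, and is otherwise corrupted, so the per-constraint noise rate $\eta$ must satisfy $1 - 2\eta = \E[\prod_{i \in S} \ell(\vec{C})_i \, x^*_{\vec{C}_i}]$. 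The same identity will drive the $\psi^{(S,-)}$ case, only with the two outcomes interchanged.

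I would then evaluate this expectation by substituting $y = \ell \odot x^*_{\vec{C}}$, which is a bijection on $\Fits^k$ and satisfies $\ell_i x^*_{\vec{C}_i} = y_i$ coordinate-wise:
\[
\E\Big[\prod_{i \in S} \ell(\vec{C})_i \, x^*_{\vec{C}_i}\Big] = \sum_{\ell \in \Fits^k} Q(\ell \odot x^*_{\vec{C}}) \prod_{i \in S} \ell_i x^*_{\vec{C}_i} = \sum_{y \in \Fits^k} Q(y) \prod_{i \in S} y_i = 2^k \hat{Q}(S),
\]
where the last equality is just the definition of $\hat{Q}(S)$. Solving yields $\eta = \frac{1}{2}(1 - 2^k \hat{Q}(S))$ for $\psi^{(S,+)}$ as claimed. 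The computation for $\psi^{(S,-)}$ is identical up to the extra global sign on the right-hand side of its defining constraint, which swaps the roles of ``agrees'' and ``corrupted'' and therefore yields noise $\frac{1}{2}(1 + 2^k \hat{Q}(S))$.

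To finish matching the model of \cref{def:semirandomxor}, I would note that the literal patterns $\ell(\vec{C})$ are sampled independently across $\vec{C} \in \vec{\cH}$ in \cref{def:semirandomcsp}, so the corruption indicators of the constraints of $\psi^{(S,\pm)}$ are mutually independent Bernoullis with the noise rate computed above, which is exactly the i.i.d.\ coin-flip structure of $\psi(\cH, x^*, \eta)$ once we identify the constraint hypergraph with the multiset $\{\{\vec{C}_i : i \in S\} : \vec{C} \in \vec{\cH}\}$. There is essentially no obstacle here: the entire statement is bookkeeping around the Fourier inversion formula after the ``shift by $x^*$'' reparameterization. The only minor point to check is that projecting each ordered scope $\vec{C}$ to its $S$-coordinates produces a (multi)hypergraph of the type allowed in \cref{def:semirandomxor}, which is fine because $\vec{\cH}$ is itself permitted to be a multiset.
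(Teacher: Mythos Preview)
Your proof is correct and essentially identical to the paper's: both reparameterize via $y = \ell \odot x^*_{\vec{C}}$ (equivalently, sample $y \gets Q$ and set $\ell(\vec{C}) = y \odot x^*_{\vec{C}}$), then compute $\Pr[\prod_{i\in S} y_i = -1] = \tfrac{1}{2}(1 - 2^k\hat{Q}(S))$ from the definition of the Fourier coefficient, and note independence across scopes. Your write-up is slightly more explicit about the independence and the multiset-hypergraph bookkeeping, but the argument is the same.
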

\begin{proof}
For each $\vec{C}$, the literal negation $\ell(\vec{C})$ is sampled such that $\Pr[\ell(\vec{C}) = \ell] = Q(\ell \odot x^*_{\vec{C}})$, where $\odot$ denotes the element-wise product. This is equivalent to sampling $y \gets Q$ and setting $\ell(\vec{C}) = y \odot x^*_{\vec{C}}$.
It thus follows that the probability that the constraint $\vec{C}$ produces a corrupted constraint in $\psi^{(S, +)}$ is
\begin{equation*}
\Pr_{y \gets Q} \Brac{\prod_{i \in S} y_i = -1} = \frac{1}{2} \Paren{1 - \E_{y \gets Q}\Brac{\prod_{i \in S} y_i} } = \frac{1}{2} (1 - 2^k \hat{Q}(S)) \enspace,
\end{equation*}
and is independent for each $\vec{C}$. A similar calculation handles the case of $\psi^{(S, -)}$.
\end{proof}
With the above observations in hand, we can now easily describe the reduction in \cite{FeldmanPV15}. First, their reduction requires the algorithm to have a description of the distribution $Q$. Given $Q$, the algorithm then finds the smallest $S$ such that $\hat{Q}(S)$ is nonzero. Since they know the exact value of $\hat{Q}(S)$, they can determine its sign correctly. Suppose that $\hat{Q}(S) > 0$ (the other case is similar). Then, by solving the $\abs{S}$-XOR instance $\psi^{(S, +)}$, they recover the planted assignment of $\psi^{(S, +)}$ \emph{exactly}.\footnote{Here, they also treat $\abs{\hat{Q}(S)}$ as constant, as if $\abs{\hat{Q}(S)} \ll 1/n$, say, then their algorithm would not succeed in recovering the planted assignment on the XOR instance.}
But this planted assignment is precisely $x^*$, and so they have also succeeded in recovering the planted assignment of $\psi$.

The aforementioned reduction clearly does not generalize to the semirandom setting, as in general the subinstances $\psi^{(S, \pm)}$ will not uniquely determine $x^*$. Furthermore, their reduction additionally requires knowing $Q$, and while it is not too unreasonable to assume this for random planted CSPs (as it is perhaps natural for the algorithm to know the distribution), in the semirandom setting this assumption is a bit strange because we want to view semirandom CSPs as ``moving towards'' worst case ones.

We now prove \cref{mthm:main} from \cref{mthm:algxor}.
\begin{proof}[Proof of \cref{mthm:main} from \cref{mthm:algxor}]
We will present the proof in three steps. 
First, like \cite{FeldmanPV15}, we will assume that the algorithm is given a description of $Q$ and we will assume that each $\abs{\hat{Q}(S)}$ is either $0$ or at least $2^{-k} \eps > 0$.\footnote{This assumption is implicit in \cite{FeldmanPV15}; see the previous footnote.} Then, we will remove this assumption provided that $Q(y) > 2 \eps$ for all $y$ with $Q(y) > 0$, i.e., the every $y$ in the support of $Q$ has some minimum probability. Finally, we will remove the last assumption.

\parhead{Step 1: the proof when we are given $Q$.}
For each $S$ where $\hat{Q}(S) \ne 0$, we construct the instance $\psi^{(S,+)}$ (if $\hat{Q}(S) > 0$) or $\psi^{(S,-)}$ (if $\hat{Q}(S) < 0$). We then apply\footnote{Note that \cref{mthm:algxor} only applies when $\abs{S} \geq 2$. When $\abs{S} = 1$, there is a trivial algorithm; see \cref{sec:1xor} for details.} \cref{mthm:algxor} to each such instance.
Note that by \cref{claim:noisysubinstance}, the instance has noise $\eta = \frac{1}{2}(1 - 2^k |\hat{Q}(S)|) \leq \frac{1}{2}(1 - \eps)$ (because we picked the correct sign when choosing between $\psi^{(S,+)}$ and $\psi^{(S,-)}$, and we assume $\abs{\hat{Q}(S)} \geq 2^{-k} \eps$).
Then, since $m \geq c^k n^{k/2} \cdot \frac{\log^3 n}{\eps^9}$ and $\abs{S} \leq k$, by applying \cref{mthm:algxor} with noise $\eta$ and parameter $\eps' \defeq 2^{-k} \eps$, we obtain sets $\vec{\cH}^{(S, 1)}$ (the discarded set) and $\vec{\cH}^{(S,2)}$ (the corrupted constraints) where $\abs{\vec{\cH}^{(S, 1)}} \leq \eps' m$ and $\vec{\cH}^{(S,2)} = (\vec{\cH} \setminus \vec{\cH}^{(S, 1)})\cap \Err_{\psi^{(S)}}$.
Hence, for every constraint $\vec{C} \in \vec{\cH} \setminus \vec{\cH}^{(S,1)}$, it follows that we have learned $\prod_{i \in S} x^*_{\vec{C}_i}$, where $x^*$ is the planted assignment for $\Psi$. By setting $\vec{\cH'} \defeq \vec{\cH} \setminus \cup_{S : \hat{Q}(S) \ne 0} \vec{\cH}^{(S, 1)}$, it follows that we know $\prod_{i \in S} x^*_{\vec{C}_i}$ for all $ \vec{C} \in \vec{\cH'}$ and $S$ with $\hat{Q}(S) \ne 0$, where $\abs{\vec{\cH'}} \geq (1 - 2^k \eps')m = (1-\eps)m$.

We now solve the system of linear equations given by $\prod_{i \in S} x^*_{\vec{C}_i}$ for all $ \vec{C} \in \vec{\cH'}$ and $S$ with $\hat{Q}(S) \ne 0$ to obtain some assignment $x \in \Fits^n$. As $x^*$ is a valid solution to these equations, such an $x$ exists, although it may not be $x^*$.

The final step is to argue that for every $\vec{C} \in \vec{\cH'}$, $x$ satisfies the constraint $\vec{C}$, namely that $P(\ell(\vec{C})_1 x_{\vec{C}_1}, \ell(\vec{C})_2 x_{\vec{C}_2}, \ldots, \ell(\vec{C})_k x_{\vec{C}_k}) = 1$. Indeed, if this is true then we are done, as $x$ satisfies at least $(1 - \eps) m$ constraints in $\Psi$, and so we have obtained the desired assignment.

Let $\vec{C} \in \vec{\cH'}$. We know that for every $S$ with $\hat{Q}(S) \ne 0$, we have that $\prod_{i \in S} x_{\vec{C}_i} = \prod_{i \in S} x^*_{\vec{C}_i}$. Hence, it follows that 
\begin{equation*}
Q(\ell(\vec{C}) \odot x) = \sum_{S \subseteq [k]} \hat{Q}(S) \prod_{i \in S} \ell(\vec{C})_i x_{\vec{C}_i} = \sum_{S \subseteq [k]} \hat{Q}(S) \prod_{i \in S} \ell(\vec{C})_i x^*_{\vec{C}_i} = Q(\ell(\vec{C}) \odot x^*) > 0 \mcom
\end{equation*}
where the last inequality is because $\ell(\vec{C})$ was sampled from the distribution $Q(\ell(\vec{C}) \odot x^*)$, and so it must be sampled with nonzero probability. As $Q$ is supported only on satisfying assignments to the predicate $P$, it thus follows that $\ell(\vec{C}) \odot x^*$ must also satisfy $P$.

\parhead{Step 2: removing the dependence on $Q$ assuming a lower bound on $Q(y)$.} First, we observe that because $k$ is constant, we can, for each $S$, guess a symbol $\{0, +, -\}$, where $0$ denotes, informally, the belief that $\abs{\hat{Q}(S)} < 2^{-k}\eps$, $+$ denotes that $\hat{Q}(S) \geq 2^{-k}\eps$, and $-$ denotes that $\hat{Q}(S) \leq -2^{-k}\eps$. For each of the $3^{2^k}$ choices of guesses, i.e., functions $f \colon \{S \subseteq [k]\} \to \{0, +, -\}$, we run algorithm mentioned in the previous step. Namely, for each $S$: \begin{inparaenum}[(1)]
\item if $f(S) = 0$, then we ignore $S$,
\item if $f(S) = +$, then we run \cref{mthm:algxor} on $\psi^{(S,+)}$ to obtain $\vec{\cH}^{(S, 1)}$ and $\vec{\cH}^{(S,2)}$, and 
\item if $f(S) = -$, then we run \cref{mthm:algxor} on $\psi^{(S,+)}$ to obtain $\vec{\cH}^{(S, 1)}$ and $\vec{\cH}^{(S,2)}$.
\end{inparaenum}
As before, we solve the system of linear equations to obtain some assignment $x^{(f)} \in \Fits^n$.
By enumerating over all possible choices of $f$, we obtain a list of at most $3^{2^k} = O(1)$ assignments. We then try all of them and output the best one.

It thus remains to show that at least one of the assignments in the list has high value. As one may expect, this will be the assignment $x^{(f^*)}$, where $f^*$ is the correct label function. Indeed, when $f = f^*$, then we are precisely running the algorithm in Step 1, and as observed, after solving the linear system of equations we obtain an assignment $x \defeq x^{(f^*)}$ with the following property. For every $\vec{C} \in \vec{\cH'}$ and every $S$ with $\abs{\hat{Q}(S)} \geq 2^{-k}\eps$, we have that $\prod_{i \in S} x_{\vec{C}_i} = \prod_{i \in S} x^*_{\vec{C}_i}$, where $\vec{\cH'} \subseteq \vec{\cH}$ has size $\geq (1 - \eps)m$.

Finally, we show that for every $\vec{C} \in \vec{\cH'}$, $x$ satisfies the constraint $\vec{C}$. Namely, we have $P(\ell(\vec{C})_1 x_{\vec{C}_1}, \ell(\vec{C})_2 x_{\vec{C}_2}, \ldots, \ell(\vec{C})_k x_{\vec{C}_k}) = 1$. Let $\vec{C} \in \vec{\cH'}$. We know that for every $S$ with $\abs{\hat{Q}(S)} \geq 2^{-k}\eps$, we have that $\prod_{i \in S} x_{\vec{C}_i} = \prod_{i \in S} x^*_{\vec{C}_i}$. Hence, it follows that 
\begin{flalign*}
\Abs{Q(\ell(\vec{C}) \odot x) - Q(\ell(\vec{C}) \odot x^*)} &= \Abs{\sum_{S \subseteq [k]} \hat{Q}(S) \prod_{i \in S} \ell(\vec{C})_i x_{\vec{C}_i} - \sum_{S \subseteq [k]} \hat{Q}(S) \prod_{i \in S} \ell(\vec{C})_i x^*_{\vec{C}_i} } \\
&= \Abs{\sum_{S \subseteq [k] : \abs{\hat{Q}(S)} < 2^{-k}\eps} \hat{Q}(S) \Paren{\prod_{i \in S} \ell(\vec{C})_i x_{\vec{C}_i} - \prod_{i \in S} \ell(\vec{C})_i  x^*_{\vec{C}_i} }} \leq 2^k \cdot 2^{-k+1} \eps \mper
\end{flalign*}
Now, if we assume that $Q(y) > 2\eps$ for every $y \in \Fits^k$ with $Q(y) > 0$, then it follows that $Q(\ell(\vec{C}) \odot x) > 0$, and so $x$ satisfies the constraint $P(\ell(\vec{C})_1 x_{\vec{C}_1}, \ell(\vec{C})_2 x_{\vec{C}_2}, \ldots, \ell(\vec{C})_k x_{\vec{C}_k}) = 1$.

\parhead{Step 3: removing the lower bound on $Q(y)$.} In Step 2, we assumed that $Q(y) > 2\eps$ for all $y \in \Fits^k$ with $Q(y) > 0$. However, we only used this fact in the final step, when we argue that $Q(\ell(\vec{C}) \odot x) > 0$ by observing that $Q(\ell(\vec{C}) \odot x) \geq Q(\ell(\vec{C}) \odot x^*) - 2 \eps > 0$. To remove the assumption, we will show that for at most $2^{k+2} \eps$ constraints $\vec{C} \in \vec{\cH}$, it holds that $Q(\ell(\vec{C}) \odot x^*) \leq 2\eps$. This then implies that $x$ satisfies at least $(1 - \eps - 2^{k+2} \eps)m = (1 - O(\eps))m$ constraints, which finishes the proof.

Let $\cS$ denote the set of $\vec{C} \in \vec{\cH}$ where $Q(\ell(\vec{C}) \odot x^*) \leq 2 \eps$. Observe that the probability, over the choice of $\ell(\vec{C})$, that $\vec{C} \in \cS$ is at most $2^k \cdot 2 \eps = 2^{k+1} \eps$, and moreover this is independent for each $\vec{C} \in \vec{\cH}$. Thus, by a Chernoff bound, it follows that with probability $\geq 1 - \exp(-O(\eps m)) \geq 1 - 1/\poly(n)$, it holds that $\abs{\cS} \leq 2 \cdot 2^{k+1} \eps$, and so we are done.
\end{proof}

\begin{remark}[Tolerating fewer constraints for structured $Q$'s]
\label{rem:distcomplexity}
We have shown that the above algorithm succeeds in finding an assignment $x$ that satisfies at least $(1 - O(\eps))m$ constraints when $m \geq n^{k/2} \cdot \poly(\log n, 1/\eps)$. However, if the distribution $Q$ has $\abs{\hat{Q}(S)} < 2^{-k} \eps$ for all $S$ with $\abs{S} > r$, then we only need $n^{r/2} \cdot \poly(\log n, 1/\eps)$ constraints. (If $r = 0$, then for small enough constant $\eps$, $Q$ will be supported on all of $\Fits^k$, and so any assignment satisfies all constraints. If $r = 1$, we require $O(n \cdot \frac{\log n}{\eps})$ constraints; see \cref{lem:alg1xor}.)
Indeed, this follows because for such $Q$, the true label function $f^*$ will have $f^*(S) = 0$ for any $S$ with $\abs{S} > r$. Hence, for this choice of $f^*$, we only call \cref{mthm:algxor} on noisy $t$-XOR instances for $t \leq r$, and so we have enough constraints. It therefore follows that the assignment $x^{(f^*)}$ that we obtain for the label function $f^*$ will be, with high probability an assignment that satisfies at least $(1 - O(\eps))m$ constraints.

An example where this gives an improvement is the well-studied NAE-$3$-SAT (not-all-equal-3SAT) predicate~\cite{AnderssonE98, AchlioptasCIM01, DingSS14}.
Suppose $Q$ is the uniform distribution over satisfying assignments to NAE-$3$-SAT: $Q(x_1,x_2,x_3) = \frac{1}{6}\cdot \frac{1}{4}(3 - x_1x_2 - x_2 x_3 - x_1 x_3)$.
Then, we only need $m \geq \tilde{O}(n)$ constraints, even though it is a $3$-CSP ($k=3$).
\end{remark}

\section{From \texorpdfstring{$k$}{k}-XOR to Spread Bipartite \texorpdfstring{$k$}{k}-XOR}
\label{sec:decomposition}
In this section, we begin the proof of \cref{mthm:algxor}.
See \cref{def:semirandomxor} for a reminder of our semirandom planted $k$-XOR model $\psi(\cH, x^*, \eta)$ given a $k$-uniform hypergraph $\cH$, assignment $x^* \in \Fits^n$, and noise parameter $\eta \in (0,1/2)$. Recall also that $\Err_{\psi}$ denotes the set of corrupted hyperedges.

We think of $\Alg_1(\cH)$ as the small set of edges that we discard (or give up on), and this will only depend on the hypergraph $\cH$. For the rest of the graph, the algorithm will correctly identify which edges are corrupted.

Our proof of \cref{mthm:algxor} goes via a reduction to \emph{spread bipartite $t$-XOR} instances for $t = 2, \dots, k$, which are $t$-XOR instances with some additional desired structure. Such instances were introduced in \cite{GuruswamiKM22} to study the refutation of semirandom $k$-XOR instances. The reduction here is nearly identical to the corresponding reduction in \cite[Section 4]{GuruswamiKM22}.

\begin{definition}[Spread bipartite $k$-XOR] \label{def:spread-bipartite-kXOR}
A $p$-bipartite $k$-XOR instance $\psi$ on $n$ variables with $m$ constraints is defined by a collection of $(k-1)$-uniform hypergraphs $\cH = \{\cH_u\}_{u \in [p]}$ on the vertex set $[n]$, as well as ``right-hand sides'' $b_{u, C}$ for each $u \in [p]$ and $C \in \cH_u$. There are two sets of variables of $\psi$: the ``normal'' variables $x_1, \dots, x_n$, and the ``special'' variables $y_1, \dots, y_p$. The constraints of $\psi$ are $y_u \prod_{i \in C} x_i = b_{u, C}$ for each $u \in [p]$, $C \in \cH_u$. 

We furthermore say that $\psi$ is \emph{$\tau$-spread} if it has the following additional properties:
\begin{enumerate}[(1)]
\item $\abs{\cH_u} = \frac{m}{p} \geq 2\floor{\frac{1}{2\tau^2}}$ and $\frac{m}{p}$ is even for each $u \in [p]$,
\item For each $u \in [p]$ and set $Q \subseteq [n]$, $\deg_u(Q) \leq \frac{1}{\tau^2}\max(1, n^{\frac{k}{2} - 1 - \abs{Q}})$.
\end{enumerate}

Analogously to \cref{def:semirandomxor}, we call $\psi$ a \emph{semirandom planted} instance with planted assignment $(x^*, y^*)$ and noise parameter $\eta$ if the right-hand sides $b_{u, C}$ are generated by setting $b_{u, C} = y^*_u \prod_{i \in C} x^*_i$ with probability $1 - \eta$ and $b_{u, C} = -y^*_u \prod_{i \in C} x^*_i$ otherwise, independently for each choice of $u, C$. For a choice of $x^*, y^*$, $\cH = \{\cH_u\}_{u \in [p]}$, and $\eta$, we call this distribution $\psi(\{\cH_u\}_{u \in [p]}, x^*, y^*, \eta)$. As before, if an edge $(u,C)$ has $b_{u, C} = -y^*_u \prod_{i \in C} x^*_i$, we call $(u,C)$ a \emph{corrupted} hyperedge, and we denote the set of corrupted hyperedges in $\psi$ by $\Err_{\psi}$.
\end{definition}

The main technical result of the paper is the following lemma, which gives an algorithm to find the noisy constraints in a semirandom planted $\tau$-spread bipartite $k$-XOR instance.
\begin{lemma}[Algorithm for $\tau$-spread bipartite $k$-XOR]
\label{lem:algbipartitexor}
Let $k\geq 2$, $n, p \in \N$, $\eps\in (0,1)$, $\eta \in [0, 1/2)$, and let $\gamma \coloneqq 1-2\eta > 0$.
Let $\tau \leq \frac{c\gamma}{\sqrt{k\log n}}$, and let $m \geq C n^{\frac{k-1}{2}} \sqrt{p} \cdot \BipartiteBeta$ for some universal constants $c, C$.
There is a polynomial-time algorithm $\Alg$ that takes as input an $\tau$-spread $p$-bipartite $k$-XOR instance $\psi$ with constraint hypergraph $\cH = \{\cH_u\}_{u \in [p]}$ and outputs two disjoint sets $\Alg_1(\cH), \Alg_2(\psi) \subseteq \cH$ with the following guarantee:
\begin{inparaenum}[(1)] \item for any instance $\psi$ with $m$ constraints, $\abs{\Alg_1(\cH)} \leq \eps m$ and $\Alg_1(\cH)$ only depends on $\cH$, and \item for any $x^* \in \Fits^n, y^* \in \Fits^p$ and any $\cH = \{\cH_u\}_{u \in [p]}$ with $\abs{\cH} \defeq \sum_{u \in [p]} \abs{\cH_u} \geq m$, with probability $1-\frac{1}{\poly(n)}$ over $\psi \gets \psi(\{\cH_u\}_{u \in [p]}, x^*, y^*, \eta)$, it holds that $\Alg_2(\psi) = \Err_{\psi} \cap (\cH \setminus \Alg_1(\cH))$. \end{inparaenum}
\end{lemma}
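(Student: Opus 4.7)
The plan is to follow \cref{strat:alg} adapted to the bipartite setting: reduce the $\tau$-spread bipartite $k$-XOR instance $\psi$ to a noisy planted 2-XOR instance $\phi$, apply graph pruning and expander decomposition, and then invoke the SDP-certified uniqueness argument of \cref{lem:sdpuniqueness} with the relative spectral approximation bound of \cref{lem:spectral-sparsification} on each expander piece.

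\textbf{Step 1 (Reduction to 2-XOR).} For each constraint $(u, C) \in \cH$ with $|C|=k-1$, canonically partition $C = C_1 \sqcup C_2$ with $|C_1| = \lfloor (k-1)/2 \rfloor$ and $|C_2| = \lceil (k-1)/2 \rceil$. Introduce ``left'' meta-variables $z^L_{(u, S)}$ representing $y_u \prod_{i \in S} x_i$ (for $|S|=|C_1|$) and ``right'' meta-variables $z^R_T$ representing $\prod_{i \in T} x_i$ (for $|T|=|C_2|$). The resulting 2-XOR instance $\phi$ has a bipartite constraint graph $G$ with one edge $\{(u, C_1), C_2\}$ and right-hand side $b_{u, C}$ per original constraint. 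Crucially, because each $b_{u,C}$ is independently corrupted with probability $\eta$, the corrupted subgraph $H \subseteq G$ is precisely a uniformly random $\eta$-fraction edge subsample, so $\phi$ is a noisy planted 2-XOR instance in the sense of \cref{def:semirandomxor} with planted assignment $z^*$ determined by $(x^*, y^*)$.

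\textbf{Step 2 (Pruning and expander decomposition).} Apply \cref{lem:graph-pruning} to discard low-degree vertices of $G$, then apply \cref{fact:expander-decomp} to partition the remaining graph into pieces $G\{V_1\}, \dots, G\{V_T\}$ with $\lambda_2(\widetilde{L}_{G\{V_i\}}) \gtrsim \eps^2/\log^2 m$, discarding at most $\eps m$ edges in total. Since both steps depend only on $G$ (equivalently, on $\cH$), the discarded edges form $\Alg_1(\cH)$.

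\textbf{Step 3 (Exact local recovery via SDP).} For each piece $G\{V_i\}$, use the $\tau$-spread property and the lower bound on $m$ to verify that $d_{\min} \cdot \lambda_2(\widetilde{L}_{G\{V_i\}}) \gtrsim \log n / \gamma^2$. Then \cref{lem:spectral-sparsification} yields $L_{H_i} \preceq \eta(1+\delta) L_{G\{V_i\}}$ with some $\delta < \gamma/(2\eta)$ with probability $1 - O(n^{-2})$, so $\eta(1+\delta) < 1/2$, and \cref{lem:sdpuniqueness} implies the basic SDP on $\phi_i$ has the \emph{unique} optimum $z^*_i (z^*_i)^{\top}$, where $z^*_i$ is the local planted assignment up to a global sign. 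Solving the SDP and reading off a rank-one factorization recovers $z^*_i$ up to sign. For each edge $\{(u, C_1), C_2\}$ in $V_i$ corresponding to constraint $(u, C)$, the product $z^*_{i, (u, C_1)} \cdot z^*_{i, C_2} = y^*_u \prod_{j \in C} x^*_j$ is sign-invariant, so comparing it with $b_{u, C}$ exactly identifies whether $(u, C) \in \Err_{\psi}$. Collecting these across pieces yields $\Alg_2(\psi)$, and a union bound over the $T \leq n^{O(1)}$ pieces controls the total failure probability.

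\textbf{Main obstacle.} The crux is Step 3: one must carefully leverage the $\tau$-spread condition to simultaneously ensure that (i) the total number of edges discarded in pruning and expander decomposition is at most $\eps m$, (ii) every surviving piece has minimum degree $\gtrsim \log n / \gamma^2$ (required to achieve $\delta < \gamma/(2\eta)$ in \cref{lem:spectral-sparsification}), and (iii) every piece has spectral gap $\gtrsim 1/\polylog(n)$. The spread bound $\deg_u(Q) \leq \tau^{-2} \max(1, n^{k/2 - 1 - |Q|})$ caps the degree of each meta-variable and effectively reduces the count of ``active'' meta-variables to roughly $\sqrt{p} \cdot n^{(k-1)/2}/\tau$, which is exactly what makes the stated threshold $m \geq C n^{(k-1)/2}\sqrt{p} \cdot (k \log n)^{3/2}/(\tau \gamma^2 \eps^{3/2})$ sufficient to yield the required minimum degree after pruning. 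The sign ambiguity in $z^*_i$ on each piece is benign because the quantity used to detect corruption is manifestly sign-invariant.
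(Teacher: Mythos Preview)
Your Step~1 reduction is the natural ``direct'' one --- one 2-XOR edge per original constraint --- but it produces a graph that is too sparse for \cref{lem:spectral-sparsification} to bite. Each left meta-variable $z^L_{(u,S)}$ with $|S|=\lfloor(k-1)/2\rfloor$ has degree at most $\deg_u(S)\le\tau^{-2}$ by the $\tau$-spread bound (since $k/2-1-|S|\le 0$). Hence every expander piece $G\{V_i\}$ containing a left vertex has $d_{\min}\le\tau^{-2}=O(k\log n/\gamma^2)$, while \cref{fact:expander-decomp} only guarantees $\lambda\gtrsim\eps^2/\log^2 m$; together $d_{\min}\lambda=O\bigl(\eps^2/(\gamma^2 k\log n)\bigr)$, which is $o(1)$ rather than the $\Omega(\log n/\gamma^2)$ you need to force $\delta\lesssim\gamma$ in \cref{lem:spectral-sparsification}. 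Your heuristic that the active meta-variable count is about $\sqrt{p}\,n^{(k-1)/2}/\tau$ does not rescue this: the \emph{maximum} left-degree is still capped at $\tau^{-2}$, and that cap --- not the average --- bounds $d_{\min}$ in every piece after pruning and decomposition.

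The paper's remedy is the Cauchy--Schwarz trick (\cref{def:2XOR-from-bipartite-kXOR}): split each $\cH_u$ into halves $\cH_u^{(L)},\cH_u^{(R)}$ and create a 2-XOR edge for every \emph{pair} $(C,C')\in\cH_u^{(L)}\times\cH_u^{(R)}$ with right-hand side $b_{u,C}b_{u,C'}$. This yields $m^2/(4p)$ edges on at most $n^{k-1}$ vertices, so average degree $\ge\beta^2/4$, which now survives the $\eps^2/\log^2 m$ spectral-gap loss. But two genuinely new ingredients are then required that your outline omits. First, the noise on the 2-XOR edges becomes \emph{correlated} (an edge is corrupted iff exactly one of $C,C'$ is), so \cref{lem:spectral-sparsification} no longer applies; one needs the two-stage matrix-Chernoff argument of \cref{lem:spectral-sparsification-odd}, and this is precisely where the $\tau$-spread bound is used --- via \cref{lem:degree-bounds-eps-spread} it caps the degrees of the subgraphs $G^{(L)}_{u,C},G^{(R)}_{u,C'}$ by $\Delta=\tau^{-2}$. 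Second, after the SDPs you only know the products $\xi_u(C)\xi_u(C')$ for most pairs, not the individual $\xi_u(C)$; a separate decoding step (\cref{alg:recovery}, \cref{lem:recovery}) --- finding a large connected component in the near-biclique of recovered pairs and fixing the global sign by majority --- is needed to finish.
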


Note that as $\eta \to \frac{1}{2}$, $\gamma = 1-2\eta \to 0$ and $\tau \to 0$, which blows up $m$. This is the expected behavior since when $\eta = \frac{1}{2}$, it is impossible to recover the planted assignment since the signs of the constraints are uniformly random.

\subsection{Proof of \texorpdfstring{\cref{mthm:algxor}}{Theorem~\ref{mthm:algxor}} from \texorpdfstring{\cref{lem:algbipartitexor}}{Lemma~\ref{lem:algbipartitexor}}}

With \cref{lem:algbipartitexor}, we can finish the proof of \cref{mthm:algxor}.
The high-level idea of this proof is very simple. First, we decompose the $k$-XOR instance $\psi$ into subinstances $\psi^{(t)}$ for each $t = 2, \dots, k$, using a hypergraph decomposition algorithm very similar to the one used in \cite{GuruswamiKM22,HsiehKM23}. The algorithm and its guarantees are shown in \cref{sec:hypergraph-decomp}.
Then, we run the algorithm in \cref{lem:algbipartitexor} to identify a set of corrupted constraints and a small set of discarded constraints within each subinstance $\psi^{(t)}$. We then take the union of these outputs to be the final output of the algorithm.

\begin{proof}[Proof of \cref{mthm:algxor}]
We begin with the decomposition of $\psi$ into $\psi^{(2)}, \dots, \psi^{(k)}$ along with a set of ``discarded'' hyperedges $\cH^{(1)}$, which is done using \cref{alg:decomp} with spread parameter $\tau \coloneqq \frac{c(1-2\eta)}{\sqrt{k\log n}}$ where $c$ is the constant in \cref{lem:algbipartitexor}.
For each $t = 2, \dots, k$, $\psi^{(t)}$ is a semirandom (with noise $\eta$) planted $\tau$-spread $p^{(t)}$-bipartite $t$-XOR instance specified by $(t-1)$-uniform hypergraphs $\{\cH^{(t)}_u\}_{u \in [p^{(t)}]}$.

Let $m^{(t)} \coloneqq \sum_{u\in [p^{(t)}]} \abs{\cH_u^{(t)}}$.
\cref{alg:decomp} has the following guarantees:
\begin{enumerate}[(1)]
\item The runtime is $n^{O(k)}$,
\item For each $t\in \{2,\dots,k\}$ and $u\in[p^{(t)}]$, $\abs{\cH_u^{(t)}} = \frac{m^{(t)}}{p^{(t)}} = 2 \floor{\frac{1}{2\tau^2} \max(1, n^{ t - \frac{k}{2} - 1})}$; in particular, $|\cH_u^{(t)}|$ is even and is at least $2\floor{\frac{1}{2\tau^2}}$,
\item For each $t = 2, \dots, k$, the instance $\psi^{(t)}$ is $\tau$-spread,
\item The number of ``discarded'' hyperedges is $m^{(1)} \defeq \abs{\cH^{(1)}} \leq \frac{1}{k \tau^2} n^{\frac{k}{2}}$,

\item For $t\in \{2,\dots,k\}$, each $C\in \cH_u^{(t)}$ is obtained by removing $k-(t-1)$ vertices from an edge in the original hypergraph $\cH$. Thus, there is a one-to-one map $\decomp \colon \cH \to \cH^{(1)} \cup \bigcup_{t = 2}^k \{\cH^{(t)}_u\}_{u \in [p^{(t)}]}$, such that an edge $C \in \cH$ is corrupted if and only if the edge $\decomp(C)$ is corrupted in the instance $\psi^{(t)}$ that it lies in.
\end{enumerate}

For convenience, we denote $\gamma \defeq 1-2\eta$ and $\beta \defeq 4C \cdot \BipartiteBeta = \frac{4C}{c}\cdot \frac{k^2\log^2 n}{\gamma^3 \eps^{3/2}}$ where $C, c$ are the constants in \cref{lem:algbipartitexor}.
The algorithm in \cref{mthm:algxor} works as follows. First, it runs \cref{alg:decomp} to produce the instances $\psi^{(2)}, \dots, \psi^{(k)}$. Then, for each $t = 2, \dots, k$, if $m^{(t)} \geq n^{\frac{t-1}{2}} \sqrt{p^{(t)}} \cdot \beta$, we run \cref{lem:algbipartitexor} on $\psi^{(t)}$ and obtain, with probability $1 - 1/\poly(n)$, a set $A_1^{(t)}$ where $\abs{A_1^{(t)}} \leq \frac{\eps}{2} m^{(t)}$ and $A_2^{(t)} = \Err_{\psi^{(t)}} \setminus A_1^{(t)}$. Otherwise, if $m^{(t)} < n^{\frac{t-1}{2}} \sqrt{p^{(t)}} \cdot \beta$, we set $A_1^{(t)} = \cH^{(t)}$ and $A_2^{(t)} = \emptyset$. Finally, we output $\cA_1 \defeq \cH^{(1)} \cup \bigcup_{t = 2}^k {\decomp}^{-1}(A_1^{(t)})$ and $\cA_2 \defeq \bigcup_{t=2}^k {\decomp}^{-1}(A_2^{(t)})$, where $\decomp$ is the mapping in property (5) of \cref{alg:decomp}.

Note that $m^{(t)} = p^{(t)} |\cH_u^{(t)}| \geq p^{(t)} \cdot \frac{1}{2\tau^2} n^{t-\frac{k}{2}-1}$, which means $p^{(t)} \leq 2\tau^2 n^{\frac{k}{2}-t+1} m^{(t)}$, and since $\sum_{t} \sqrt{m^{(t)}} \leq \sqrt{k \sum_t m^{(t)}} \leq \sqrt{km}$ by Cauchy-Schwarz, we have
\begin{flalign*}
    \sum_{t=2}^k n^{\frac{t-1}{2}} \sqrt{p^{(t)}} \cdot \beta
    \leq O(\tau) \cdot n^{\frac{k}{4}} \sqrt{km} \cdot \beta
    \leq o(\eps) m
\end{flalign*}
as long as $m \gg n^{\frac{k}{2}} \cdot k\tau^2 \beta^2/\eps^2$.
Moreover, $m^{(1)} \leq \frac{1}{k\tau^2} n^{\frac{k}{2}} = \frac{\log n}{c^2 \gamma^2} n^{\frac{k}{2}} \leq o(\eps) m$.
One can verify, by plugging in $\beta$, that the lower bound on $m$ in \cref{mthm:algxor} suffices.

By union bound over $t$, it thus follows that
\begin{flalign*}
    \abs{\cA_1} \leq m^{(1)} + \sum_{t=2}^k \frac{\eps}{2} m^{(t)} + \sum_{t=2}^k n^{\frac{t-1}{2}} \sqrt{p^{(t)}} \beta
    \leq \eps m \mcom
\end{flalign*}
and $\cA_2 = \Err_{\psi} \setminus \cA_1$. Moreover, by \cref{lem:algbipartitexor}, $\cA_1$ only depends on the hypergraph $\cH$. This completes the proof.
\end{proof}
\section{Identifying Noisy Constraints in Spread Bipartite \texorpdfstring{$k$}{k}-XOR}
\label{sec:bipartite-kXOR}
In this section, we prove \cref{lem:algbipartitexor}. The proof will be decomposed into the following steps.
First, we take the semirandom planted bipartite $k$-XOR instance $\psi$ and transform it into a $2$-XOR instance $\phi$. Second, we decompose the constraint graph of $\phi$ into expanders. For each expander in the decomposition, we argue that the SDP solution to this subinstance is rank $1$, and moreover agrees \emph{exactly} with the planted assignment. This allows us to identify, for each expanding subinstance, \emph{exactly} which edges in $\phi$ are errors. Finally, we use this information to identify the set of corrupted constraints in the original instance $\psi$, which finishes the proof.

\subsection{Setup and key notation}
\label{sec:setup}
We now introduce the key notation that shall be used throughout this section. Let $\psi$ be the semirandom $\tau$-spread $p$-bipartite $k$-XOR instance (recall \cref{def:spread-bipartite-kXOR}) with $m$ constraints given as the input to the algorithm. Recall that the instance $\psi$ is specified by a collection of $p$ hypergraphs $\{\cH_u\}_{u \in [p]}$, where each $\cH_u$ is a $(k-1)$-uniform hypergraph on $n$ vertices and $|\cH_u| = m/p$. Each constraint in $\psi$ is specified by a pair $(u,C)$ where $u \in [p]$, $C \in \cH_u$, and has a right-hand side $b_{u,C} \in \Fits$, and the constraints are $y_u \prod_{i \in C} x_i = b_{u,C}$, where $\{y_u\}_{u \in [p]}$ and $\{x_i\}_{i \in [n]}$ are variables. Because the instance $\psi$ is semirandom with noise parameter $\eta$ and planted assignment $(x^*, y^*)$, for each constraint $(u, C)$ we have, with probability $1 - \eta$ independently, $b_{u,C} = y^*_u \prod_{i \in C} x^*_i$, and otherwise $b_{u,C} = -y^*_u \prod_{i \in C} x^*_i$.
Our goal is to output, in $n^{O(k)}$-time, a set $\Alg_1(\cH)$ of size $\leq \tau m$ to discard, and then for the rest of the instance, identify exactly the corrupted constraints, i.e., those for which $b_{u,C} = -y^*_u \prod_{i \in C} x^*_i$.

We now define the $2$-XOR instance $\phi$ from $\psi$. An example is shown in \cref{fig:constraint-graph-example}.

\begin{definition}[$2$-XOR instance $\phi$ from bipartite $k$-XOR $\psi$]
\label{def:2XOR-from-bipartite-kXOR}
    For every $u \in [p]$ and $\cH_u$, we partition $\cH_u$ arbitrarily into two sets $\cH_u^{(L)}$ and $\cH_u^{(R)}$ of equal size.
    \begin{itemize}
        \item If $k$ is odd, then there are $\binom{n}{\frac{k-1}{2}}^2$ variables in $\phi$, one variable $z_{(S_1, S_2)}$ for each pair of sets $S_1, S_2 \subseteq [n]$ where $|S_1| = |S_2| = \frac{k-1}{2}$.
        
        \item If $k$ is even, then there are $2 \binom{n}{\ceil{\frac{k-1}{2}}} \binom{n}{\floor{\frac{k-1}{2}}}$ variables in $\phi$, one variable $z_{(S_1, S_2)}$ for each pair of sets $S_1, S_2 \subseteq [n]$ where either $\abs{S_1} = \ceil{\frac{k-1}{2}}$ and $\abs{S_2} = \floor{\frac{k-1}{2}}$ or $\abs{S_1} = \floor{\frac{k-1}{2}}$ and $\abs{S_2} = \ceil{\frac{k-1}{2}}$.
    \end{itemize}
    For each $u\in [p]$,  $C \in \cH_u^{(L)}$ and $C' \in \cH_u^{(R)}$, we arbitrarily partition $C$ into sets $S_1 \cup S_2$ and $C'$ into sets $S'_1 \cup S'_2$, where $\abs{S_1} = \abs{S'_1} = \ceil{\frac{k-1}{2}}$ and $\abs{S_2} = \abs{S'_2} = \floor{\frac{k-1}{2}}$. We then add the constraint $z_{(S_1,S'_2)} z_{(S_2,S_1')} = b_{u,C} b_{u,C'}$ to $\phi$.
\end{definition}
It is intuitive to think of clauses from $\cH_u^{(L)}$ and $\cH_u^{(R)}$ as having different colors, and each variable $z_{(S_1,S_2')}$ contains roughly $k/2$ of each color.
See \cref{fig:constraint-graph-example} for an example of a 2-XOR $\phi$ constructed from a bipartite $k$-XOR $\psi$.

\begin{observation}[Size of $\phi$]
\label{obs:edges-in-phi}
    The number of variables in $\phi$ is at most $n^{k-1}$ (for both even and odd $k$).
    Since each $|\cH_u| = m/p$, $\abs{\cH_u^{(L)}} = \abs{\cH_u^{(R)}} = \frac{m}{2p}$, and the number of constraints in $\phi$ is exactly $p \cdot (\frac{m}{2p})^2 = \frac{m^2}{4p}$.
    In particular, when $m \geq n^{\frac{k-1}{2}} \sqrt{p} \cdot \beta$ for $\beta = \poly(\log n)$ as assumed in \cref{lem:algbipartitexor}, the average degree of $\phi$ is at least $\frac{1}{4}\beta^2$.
\end{observation}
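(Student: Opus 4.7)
The proof is essentially a direct counting calculation, and I anticipate no real obstacle. The plan has three short steps, one for each assertion in the observation.

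First, for the variable count, I would just use the trivial bound $\binom{n}{r} \leq n^r$ on each binomial factor. When $k$ is odd the variables are indexed by ordered pairs $(S_1,S_2)$ of $\binom{[n]}{(k-1)/2}$, giving $\binom{n}{(k-1)/2}^2 \leq n^{k-1}$. When $k$ is even, \cref{def:2XOR-from-bipartite-kXOR} gives $2\binom{n}{\lceil (k-1)/2\rceil}\binom{n}{\lfloor (k-1)/2\rfloor}$ variables; since $\lceil(k-1)/2\rceil+\lfloor(k-1)/2\rfloor=k-1$, this is bounded by $n^{k-1}$ in the relevant regime (the combinatorial denominators absorb the factor of $2$ once $k\geq 3$, and the case $k=2$ is reduced to $2$-XOR directly without this construction).

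Second, for the edge count, I would simply read off the construction: for each fixed $u\in[p]$, \cref{def:2XOR-from-bipartite-kXOR} produces exactly one constraint per ordered pair $(C,C')\in \cH_u^{(L)}\times \cH_u^{(R)}$. Since $\cH_u$ is split into halves of size $|\cH_u|/2 = m/(2p)$, each $u$ contributes $(m/(2p))^2$ constraints, and summing the $p$ disjoint contributions gives exactly $p\cdot(m/(2p))^2 = m^2/(4p)$.

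Third, for the average-degree bound, I would just apply the identity $\text{avg.\ deg.} = 2|E|/|V|$. Plugging in the two previous estimates gives average degree at least $2\cdot (m^2/(4p))/n^{k-1} = m^2/(2pn^{k-1})$. The hypothesis $m\geq n^{(k-1)/2}\sqrt{p}\cdot\beta$ squares to $m^2\geq n^{k-1}p\beta^2$, so the average degree is at least $\beta^2/2$, which is comfortably $\geq \beta^2/4$ as claimed.
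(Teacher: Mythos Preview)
Your proposal is correct and follows the same direct counting that the paper leaves implicit in the observation; the paper gives no separate proof beyond the computations embedded in the statement itself. One small remark: your caveat about $k=2$ is well taken, since for bipartite $2$-XOR the construction in \cref{def:2XOR-from-bipartite-kXOR} really does produce $2n$ variables rather than $n^{k-1}=n$, so the paper's bound is off by a factor of $2$ there; this is harmless downstream (it only affects $\log N$ and the average-degree constant), and for even $k\ge 4$ the denominators $\lceil(k-1)/2\rceil!\,\lfloor(k-1)/2\rfloor!$ indeed absorb the leading $2$ as you say.
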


\begin{remark}[Corrupted constraints in $\phi$]
    A constraint $z_{(S_1,S'_2)} z_{(S_2,S_1')} = b_{u,C} b_{u,C'}$ in $\phi$ is \emph{corrupted} if exactly one of $b_{u,C}$ and $b_{u,C'}$ is corrupted in $\psi$. Thus, if each constraint in $\psi$ is corrupted with probability $\eta \in (0,1/2)$, then each constraint in $\phi$ is corrupted with probability $2\eta(1-\eta) < 1/2$. Note, however, that the constraints in $\phi$ are not corrupted independently.
\end{remark}

We need some more definitions about the constraint graph of $\phi$.

\begin{definition}[Constraint graph of $\phi$]
    Let $G(\phi) = (V, E)$ be the constraint graph of $\phi$. Notice that each edge $e\in E$ uniquely identifies $u(e) \in [p]$ and $C_L(e)\in \cH_{u(e)}^{(L)}$, $C_R(e) \in \cH_{u(e)}^{(R)}$.
    For each $u \in [p]$, $C \in \cH_u^{(L)}$, define $G_{u,C}^{(L)}(\phi)$ to be the subgraph of $G$ that $C$ participates in, i.e., with edge set $\Set{e \in E: u(e) = u,\ C_L(e) = C}$.
    We similarly define $G_{u,C'}^{(R)}(\phi)$ for $C' \in \cH_u^{(R)}$.
\end{definition}

\begin{figure}[ht!]
    \centering
    \includegraphics[width=0.9\textwidth]{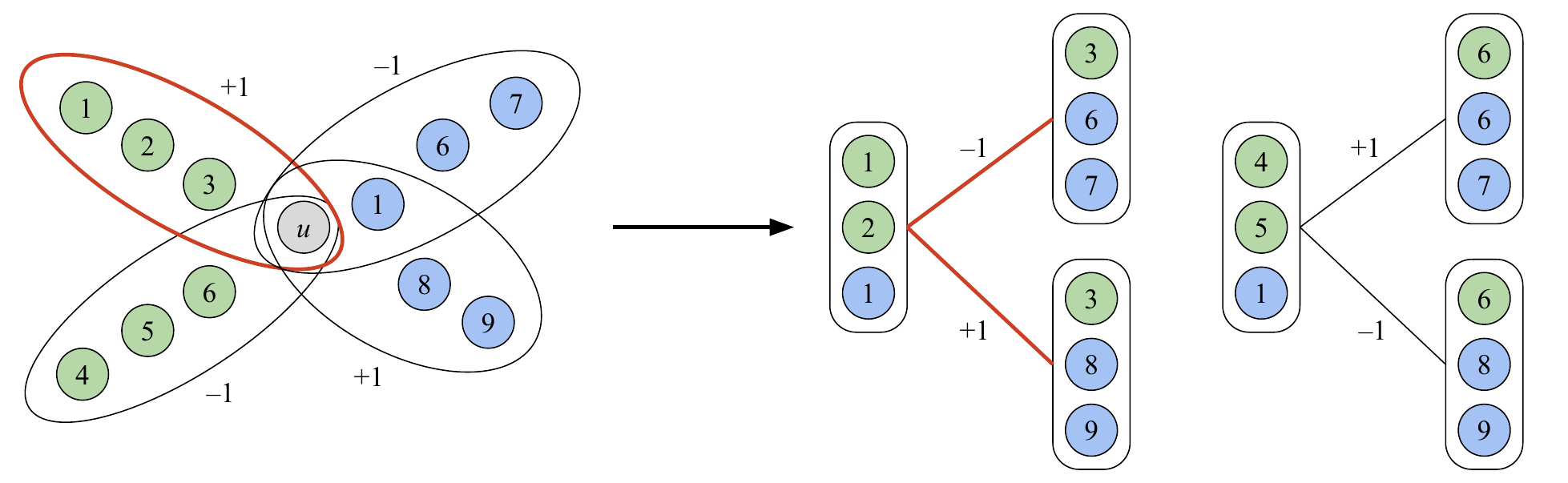}
    \caption{An example of the 2-XOR instance $\phi$ from a bipartite $4$-XOR $\psi$ (\cref{def:2XOR-from-bipartite-kXOR}). On the left, $\cH_u^{(L)}$ consists of $C_1 = \{1,2,3\}$ and $C_2 = \{4,5,6\}$ (with green vertices), and $\cH_u^{(R)}$ consists of $C_1' = \{1,6,7\}$ and $C_2' = \{1,8,9\}$ (with blue vertices).
    On the right, the constraint graph $G(\phi)$ has vertices $z_{S_1,S_2}$ where either $|S_1| = 2$, $|S_2|=1$ or $|S_1|=1$, $|S_2|=2$ (we can view $S_1$, $S_2$ as having green, blue vertices).
    Each edge corresponds to two clauses in $\psi$; for example, the edge $\Set{z_{\{1,2\},\{1\}}, z_{\{3\},\{6,7\}} }$ comes from the clauses $C_1$ and $C_1'$. \\
    \textbf{Corruptions.} In the figure, we label a clause $-1$ if it is corrupted and $+1$ otherwise. An edge in $G$ is corrupted if exactly one of the two corresponding clauses in $\psi$ is corrupted. \\
    \textbf{Degree of $G_{u,C}^{(L)}(\phi)$.} For $C_1 \in \cH_u^{(L)}$, the subgraph $G_{u,C_1}^{(L)}(\phi)$ corresponds to the edges colored red, i.e., all edges that $C_1$ participates in. The vertex $z_{\{1,2\},\{1\}}$ has degree 2 in $G_{u,C_1}^{(L)}(\phi)$ because $|C_1' \cap C_2'| = 1$.
    }
    \label{fig:constraint-graph-example}
\end{figure}

We next make the important observation that the degree of a vertex in $G_{u,C}^{(L)}(\phi)$ is upper bounded by the number of $C' \in \cH_u^{(R)}$ sharing at least $\floor{\frac{k-1}{2}}$ vertices.
See \cref{fig:constraint-graph-example} also for an illustration.
Therefore, assuming that $\psi$ is $\tau$-spread, we have a maximum degree bound on $G_{u,C}^{(L)}(\phi)$ and $G_{u,C'}^{(R)}(\phi)$ for all $u\in[p]$, $C \in \cH_u^{(L)}$ and $C'\in \cH_u^{(R)}$.

\begin{lemma}[Degree bounds for $G_{u,C}^{(L)}$, $G_{u,C'}^{(R)}$]
\label{lem:degree-bounds-eps-spread}
    Let $\psi$ be an $\tau$-spread $p$-bipartite $k$-XOR instance. Then, for any $u\in [p]$, $C \in \cH_u^{(L)}$ and $C'\in \cH_u^{(R)}$, the maximum degree of $G_{u,C}^{(L)}(\phi)$, $G_{u,C'}^{(R)}(\phi)$ is at most $1/\tau^2$.
\end{lemma}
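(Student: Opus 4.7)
The plan is a direct counting argument that combines the $\tau$-spread condition from \cref{def:spread-bipartite-kXOR} with the explicit combinatorial structure of the $2$-XOR instance $\phi$ defined in \cref{def:2XOR-from-bipartite-kXOR}. I would fix an arbitrary $u \in [p]$, an arbitrary $C \in \cH_u^{(L)}$ (together with its fixed partition $C = S_1 \cup S_2$ chosen in the construction), and an arbitrary vertex $v = z_{(T_1,T_2)}$ of $G(\phi)$; the goal is to bound the number of edges of $G_{u,C}^{(L)}(\phi)$ incident to $v$, since the edges of $G_{u,C}^{(L)}(\phi)$ are in bijection with $\cH_u^{(R)}$.

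By construction, for each $C' \in \cH_u^{(R)}$ with its fixed partition $C' = S_1' \cup S_2'$, the corresponding edge in $G_{u,C}^{(L)}(\phi)$ is $\{z_{(S_1,S_2')}, z_{(S_2,S_1')}\}$. For $v = z_{(T_1,T_2)}$ to be one of these endpoints, either $(T_1,T_2) = (S_1,S_2')$ or $(T_1,T_2) = (S_2,S_1')$. These two cases are mutually exclusive for any given $v$: for even $k$ the sizes $\abs{S_1}$ and $\abs{S_2}$ differ, and for odd $k$ the sets $S_1, S_2$ are disjoint so $T_1$ cannot simultaneously equal both. In either case, $T_2$ is forced to be a subset of $C'$, and its size lies in $\{\floor{(k-1)/2}, \ceil{(k-1)/2}\}$.

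Hence the degree of $v$ in $G_{u,C}^{(L)}(\phi)$ is at most the number of $C' \in \cH_u^{(R)} \subseteq \cH_u$ that contain $T_2$, which is at most $\deg_u(T_2)$. Since $k/2 - 1 - \abs{T_2} \leq 0$ for every allowed value of $\abs{T_2}$, the $\tau$-spread property immediately gives $\deg_u(T_2) \leq \frac{1}{\tau^2}\max(1, n^{k/2-1-\abs{T_2}}) = \frac{1}{\tau^2}$. The bound for $G_{u,C'}^{(R)}(\phi)$ follows by swapping the roles of $L$ and $R$ throughout.

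I expect no significant obstacle in this argument: it is essentially a bookkeeping exercise that unpacks the reduction in \cref{def:2XOR-from-bipartite-kXOR} and applies the spread hypothesis exactly once. The only subtle point is verifying that both partition-matching cases cannot simultaneously hold for a single $v$, which is why the final bound is $1/\tau^2$ rather than $2/\tau^2$; this is handled by the case split between even and odd $k$ described above.
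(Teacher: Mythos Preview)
Your proposal is correct and is essentially the same argument as the paper's proof: both fix $u$ and $C$, identify that an edge of $G_{u,C}^{(L)}(\phi)$ incident to a given vertex forces a subset of size at least $\lfloor (k-1)/2\rfloor$ of the corresponding $C'\in\cH_u^{(R)}$ to be determined, and then invoke the $\tau$-spread bound on $\deg_u$ of that subset. Your treatment is slightly more explicit than the paper's in that you spell out the mutual-exclusivity of the two endpoint-matching cases (via the size disparity for even $k$ and disjointness of $S_1,S_2$ for odd $k$), whereas the paper leaves this implicit when asserting ``either $S_1'=S_1''$ or $S_2'=S_2''$''.
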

\begin{proof}
    Consider any $C\in \cH_u^{(L)}$ and two adjacent edges $\set{z_{(S_1,S_2')}, z_{(S_2,S_1')}}$ and $\set{z_{(S_1,S_2'')}, z_{(S_2,S_1'')}}$ in $G_{u,C}^{(L)}(\phi)$ formed by joining $C = S_1 \cup S_2$ with $C' = S_1' \cup S_2'$ and $C'' = S_1'' \cup S_2'' \in \cH_u^{(R)}$. As the edges are adjacent, it must be the case that either $S_1' = S_1''$ or $S_2' = S_2''$, which means that $\abs{C' \cap C''} \geq \floor{\frac{k-1}{2}}$. Thus, the degree of a vertex $z_{(S_1,S_2')}$ in $G$ is upper bounded by the maximum number of $C' \in \cH_u^{(R)}$ that all share the same $\floor{\frac{k-1}{2}}$ variables.

    Suppose $\psi$ is $\tau$-spread, meaning that $\deg_u(Q) \leq \frac{1}{\tau^2} \max(1, n^{\frac{k}{2}-1-|Q|})$ for $Q \subseteq [n]$. Since $\frac{k}{2}-1- \floor{\frac{k-1}{2}} \leq 0$, we have that $G_{u,c}^{(L)}(\phi)$ has maximum degree $\leq 1/\tau^2$.
\end{proof}


\subsection{Proof outline}
\label{sec:bipartite-kXOR-outline}

With the setup in \cref{sec:setup} in hand, our proof now proceeds in three conceptual steps.

\parhead{Step 1: graph pruning and expander decomposition.}
Suppose the instance $\phi$ has average degree $d$. We first prune the instance using \cref{lem:graph-pruning} such that the resulting constraint graph has minimum degree $\geq \eps d$ while only removing $\eps$ fraction of the constraints, where $\eps = o(1)$.
We further apply expander decomposition (\cref{fact:expander-decomp}) to the pruned instance to obtain subinstances $\phi_1, \dots, \phi_T$ while discarding only a $\eps$ fraction of the constraints of $\phi$ such that the constraint graph of each $\phi_i$ has spectral gap $\wt{\Omega}(\eps^2)$.

\parhead{Step 2: relative spectral approximation and recovery of corrupted pairs.} We show that for each expanding subinstance $\phi_i$, the basic SDP for the $2$-XOR instance $\phi_i$ is equal to $x^* (x^*)^{\top}$, where $x^*$ is the planted assignment for $\phi$. That is, the SDP solution is \emph{rank $1$} and agrees with the \emph{planted assignment} for $\phi$. We show this by arguing that, for each $\phi_i$, the Laplacian of the corrupted constraints in $\phi_i$ is a \emph{spectral sparsifier} of the Laplacian of the constraint graph of $\phi_i$ (see \cref{lem:sdpuniqueness}). Here, we crucially use that each such constraint graph has large minimum degree and spectral gap.

From this, it is trivial to identify the corrupted edges in each $\phi_i$, as they are the ones violated by the SDP solution. We are not quite done yet, however, because each constraint in $\phi$ corresponds to a \emph{pair} of constraints in the original instance $\psi$.

\parhead{Step 3: recovery of corrupted constraints from corrupted pairs.} The previous step shows that for all but a $\eps$ fraction of tuples $(u,C,C')$ where $u \in [p]$, $C \in \cH_u^{(L)}$, and $C' \in \cH_u^{(R)}$, we can recover the product $\xi_u(C)\xi_u(C')$, where $\xi_u(C) = -1$ if $(u,C)$ is noisy in $\psi$, and is $+1$ otherwise. Because $\eps$ is small, it must be the case that for most $u \in [p]$, we know the product $\xi_u(C)\xi_u(C')$ (from Step 2) for \emph{most} pairs $(C,C')$ with $C \in \cH_u^{(L)}$ and $C' \in \cH_u^{(R)}$.

Suppose we knew $\xi_u(C)\xi_u(C')$ for all $(C,C') \in \cH_u^{(L)} \times \cH_u^{(R)}$. Then, it is trivial to decode $\xi_u(C)$ \emph{up to a global sign}. Formally, we could obtain $z \in \Fits^{\cH_u}$ where $z_C = \alpha \xi_u(C)$ for some $\alpha \in \Fits$. From this, it is easy to obtain $\xi_u(C)$, as the fraction of $C \in \cH_u$ for which $\xi_u(C) = -1$ should be roughly $\eta < \frac{1}{2}$; so, if $z$ has $< \frac{1}{2}$-fraction of $-1$'s, then $z = \xi_u(C)$, and otherwise $-z = \xi_u(C)$. This, however, requires $|\cH_u| \geq \Omega\Bigparen{\frac{\log n}{(1-2\eta)^2}}$ for a high-probability result.

Additionally, we do not quite know $\xi_u(C)\xi_u(C')$ for all $(C,C') \in \cH_u^{(L)} \times \cH_u^{(R)}$: we only know this for all but a $\eps_u$-fraction of the pairs. By forming a graph $G_u$ where we have an edge $(C,C')$ if $(C,C')$ is a pair where we know $\xi_u(C)\xi_u(C')$, we can thus obtain such a $z$ for all $C$ in the largest connected component of $G_u$. Because $G_u$ is obtained by taking a \emph{complete biclique} and deleting only a $\eps_u$-fraction of all edges, the largest connected component has size $(1-\eps_u)\abs{\cH_u}$, and so we can recover $\xi_u(C)$ for all but a $\eps_u$-fraction of constraints in $\cH_u$. We do this for each partition $u$, which finishes the proof.
\subsection{Graph pruning and expander decomposition}

This step is a simple combination of graph pruning and expander decomposition.

\begin{lemma} \label{lem:pruning-expander-decomp}
    Fix $\eps \in (0,1)$.
    There is a polynomial-time algorithm such that, given a 2-XOR instance $\phi$ whose constraint graph has $m$ edges and average degree $d$, outputs subinstances $\phi_1,\dots,\phi_T$ on disjoint variables with the following guarantees:
    $\phi_1,\dots,\phi_T$ contain at least $1- \eps$ fraction of the constraints in $\phi$, and for each $i\in [T]$, the constraint graph $G_i$ of $\phi_i$, after adding some self-loops, has minimum degree at least $\frac{1}{3}\eps d$ and $\lambda_2(\wt{L}_{G_i}) \geq \Omega(\eps^2/\log^2 m)$.
\end{lemma}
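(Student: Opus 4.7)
The approach combines the two tools already in hand---graph pruning (\cref{lem:graph-pruning}) and expander decomposition (\cref{fact:expander-decomp})---in sequence with parameters tuned so that the total fraction of discarded constraints is at most $\eps$, while the resulting pieces simultaneously inherit a minimum-degree lower bound from pruning and a spectral-gap lower bound from expander decomposition.

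Concretely, the plan is as follows. First, apply \cref{lem:graph-pruning} to the constraint graph $G$ of $\phi$ with parameter $\eps/3$, iteratively deleting every vertex whose current degree is below $\eps d/3$; this produces a subgraph $G'$ of minimum degree $\geq \eps d/3$ while losing at most $\tfrac{2\eps}{3} m$ edges. The key point here is that the proof of \cref{lem:graph-pruning} shows the average degree never decreases during pruning, so the threshold $\eps d/3$ is expressed in terms of the \emph{original} average degree $d$ of $\phi$, which is precisely what the lemma statement requires. Second, run \cref{fact:expander-decomp} on $G'$ with parameter $\eps/3$, obtaining a partition of its vertex set into $V_1, \dots, V_T$ such that $\lambda_2(\widetilde{L}_{G'\{V_i\}}) \geq \Omega(\eps^2/\log^2 m)$ for every $i$, while cutting at most $\tfrac{\eps}{3} m$ further edges. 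We define $\phi_i$ to consist of all constraints of $\phi$ whose underlying edge has both endpoints in $V_i$, so its constraint graph---after adding the self-loops prescribed by the $G'\{\cdot\}$ convention---is exactly $G'\{V_i\}$.

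To verify the guarantees: the total number of discarded constraints is at most $\tfrac{2\eps}{3}m + \tfrac{\eps}{3}m = \eps m$; the $\phi_i$ are on disjoint variables because $V_1, \dots, V_T$ partition the vertex set; the spectral-gap bound is exactly what \cref{fact:expander-decomp} delivers; and the minimum-degree guarantee of $\eps d/3$ for each $G'\{V_i\}$ is immediate, since self-loops are added so that every $v \in V_i$ retains its degree in $G'$, which is at least $\eps d/3$ by the pruning step.

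The only subtlety---more a matter of sequencing than a real obstacle---is that pruning must precede expander decomposition. If one first decomposed and then pruned inside each cluster, the pruning could destroy the spectral-gap guarantee, since deleting vertices can arbitrarily change $\lambda_2$ of the normalized Laplacian; and conversely, running expander decomposition second is harmless because the $G'\{\cdot\}$ self-loop convention preserves per-vertex $G'$-degrees, so the min-degree bound passes through to each $G'\{V_i\}$ without any loss. With this ordering fixed, the proof is a direct invocation of the two lemmas with the $\eps$ budget split as $\tfrac{2\eps}{3} + \tfrac{\eps}{3}$.
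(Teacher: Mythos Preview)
Your proposal is correct and matches the paper's proof essentially line for line: prune with parameter $\eps/3$ via \cref{lem:graph-pruning}, then apply \cref{fact:expander-decomp} with parameter $\eps/3$, and observe that the self-loop convention $G'\{V_i\}$ preserves the min-degree bound from pruning. The paper's proof is terser but identical in substance, including the $\tfrac{2\eps}{3}+\tfrac{\eps}{3}$ budget split.
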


The self-loops in \cref{lem:pruning-expander-decomp} are only for the analysis of $\wt{L}_{G_i}$ and do not correspond to actual constraints in $\phi_i$.
Observe that adding self-loops to a graph $G$ does not change the \emph{unnormalized} Laplacian $L_G$, but as $D_G$ (the degree matrix) increases, the spectral gap of the \emph{normalized} Laplacian, i.e.\ $\lambda_2(\wt{L}_G) = \lambda_2(D_G^{-1/2} L_G D_G^{-1/2})$, may decrease.
The expander decomposition algorithm (\cref{fact:expander-decomp}) guarantees that each piece, even after adding self-loops to preserve degrees, has large spectral gap.
This does not change the subinstances $\phi_1,\dots,\phi_T$, but in the next section, it is crucial that we use this stronger guarantee to ensure a lower bound on the minimum degree.

\begin{proof}[Proof of \cref{lem:pruning-expander-decomp}]
    We first apply the graph pruning algorithm (\cref{lem:graph-pruning}) such that the resulting instance has minimum degree $\geq \frac{\eps}{3} d$ and at least $(1-\frac{2}{3}\eps)m$ constraints.
    Then, we apply expander decomposition (\cref{fact:expander-decomp}) that partitions the vertices of the pruned graph $G'$ into $V_1,\dots, V_T$ such that the number of edges across partitions is at most $\frac{\eps}{3} m$, and for each $i\in[T]$, the normalized Laplacian satisfies $\lambda_2( \wt{L}_{G'\{V_i\}}) \geq \Omega(\eps^2/\log^2 m)$.
    Here we recall that $G'\{V_i\}$ is the induced subgraph of $G'$ with self-loops such that the vertices in $G'\{V_i\}$ have the same degrees as in $G'$.
    
    In total, we have removed at most $\eps m$ edges. This completes the proof.
\end{proof}


\subsection{Rank-1 SDP solution from expansion and relative spectral approximation}

We next show that for each subinstance $\phi_i$ obtained from \cref{lem:pruning-expander-decomp}, its constraint graph $G$ and the subgraph of corrupted edges $H$ satisfy $L_H \prec \frac{1}{2} L_G$.
Recall from \cref{lem:sdpuniqueness,lem:spectral-sparsification} that this implies the basic SDP for the 2-XOR $\phi_i$ is rank $1$ and agrees with the planted assignment of $\phi$.

The next lemma is analogous to \cref{lem:spectral-sparsification} but differs in an important way: a constraint in $\phi$ is corrupted if and only if exactly one of the two corresponding constraints in $\psi$ is corrupted; thus, the corruptions in $\phi$ are \emph{correlated}.
This is why each constraint in $\phi$ is obtained from one clause in $\cH_u^{(L)}$ and one clause in $\cH_u^{(R)}$ (recall \cref{def:2XOR-from-bipartite-kXOR}), so that in the proof below we have independent randomness to perform a ``2-step sparsification'' proof.
It is also worth noting that the following lemma requires not just a lower bound on the minimum degree and spectral gap of $G$ but also that the original bipartite $k$-XOR instance $\psi$ is \emph{well-spread}, which allows us to apply \cref{lem:degree-bounds-eps-spread}.

Same as \cref{lem:spectral-sparsification}, the following lemma is a purely graph-theoretic statement.

\newcommand{\xiL}[1]{\xi^{(1)}_{#1}}
\newcommand{\xiR}[1]{\xi^{(2)}_{#1}}
\newcommand{\GL}[1]{G^{(1)}_{#1}}
\newcommand{\GR}[1]{G^{(2)}_{#1}}

\begin{lemma}[Relative spectral approximation with correlated subsamples]
\label{lem:spectral-sparsification-odd}
    Suppose $G = (V,E)$ is an $n$-vertex graph with minimum degree $d_{\min}$ (self-loops and parallel edges allowed) and spectral gap $\lambda_2(\widetilde{L}_G) = \lambda > 0$.
    Let $m_1, m_2 \in \N$, $\eta \in [0,1/2)$, and let $\xiL{1}, \dots, \xiL{m_1}, \xiR{1},\dots,\xiR{m_2}$ be i.i.d.\ random variables that take value $-1$ with probability $\eta$ and $+1$ otherwise.
    Suppose there is an injective map that maps each edge $e \mapsto (c_1(e), c_2(e)) \in [m_1] \times [m_2]$, and for each $i\in [m_1]$ (resp.\ $j\in [m_2]$) define $\GL{i}$ (resp.\ $\GR{j}$) be the subgraph of $G$ with edge set $\{e\in E: c_1(e) = i\}$ (resp.\ $\{e\in E: c_2(e)=j\}$).
    Moreover, suppose $\GL{i}$ and $\GR{j}$ have maximum degree $\leq \Delta$ for all $i\in [m_1]$, $j\in [m_2]$.

    Let $H$ be the subgraph of $G$ with edge set $\bigset{e\in E: \xiL{c_1(e)} \xiR{c_2(e)} = -1}$.
    There is a universal constant $B > 0$ such that if $d_{\min}\lambda \geq B \Delta \log n$, then with probability $1-O(n^{-2})$,
    \begin{align*}
        L_H \preceq \max\Paren{ (1+\delta) \cdot 2\eta(1-\eta),\ \frac{1}{3}} \cdot L_G
    \end{align*}
    for $\delta = \sqrt{\frac{B \Delta \log n}{d_{\min}\lambda}}$.
\end{lemma}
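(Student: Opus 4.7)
The strategy is a two-stage conditioning argument that reduces this correlated-sampling setting to two successive applications of matrix concentration. The main obstacle, compared to \cref{lem:spectral-sparsification}, is that the indicators $\mathbf{1}[e\in H]$ are no longer independent: edges sharing a $c_1$-label share the same $\xi^{(1)}$, and similarly for $c_2$. However, once we condition on $\xi^{(1)}$, the remaining randomness factors independently across $j\in[m_2]$ through $\xi^{(2)}_j$, allowing an ``inner'' Matrix Chernoff; the outer randomness in $\xi^{(1)}$ is then handled by a separate concentration step on the resulting (random) conditional mean. The $\max\bigparen{(1+\delta)\cdot 2\eta(1-\eta),\,1/3}$ form in the conclusion arises because the additive error produced by matrix concentration can dominate the multiplicative one whenever $2\eta(1-\eta)$ is small.

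For the inner step, given $\xi^{(1)}$, group the edges by their $c_2$-label: for each $j\in[m_2]$, set $M_j^\pm := \sum_{e:\,c_2(e)=j,\,\xi^{(1)}_{c_1(e)}=\pm 1} L_e$ and $Y_j := \mathbf{1}[\xi^{(2)}_j=-1]\,M_j^+ + \mathbf{1}[\xi^{(2)}_j=+1]\,M_j^-$, so that $L_H = \sum_j Y_j$ is a sum of matrices that are independent conditional on $\xi^{(1)}$, with $\|Y_j\| \leq 2\Delta$ by the max-degree bound on $G^{(2)}_j$. Writing $\wt X := L_G^{\dagger/2} X L_G^{\dagger/2}$ throughout, each $\|\wt Y_j\|\leq R := 2\Delta/(d_{\min}\lambda)$, so \cref{fact:matrix-chernoff} gives $\lambda_{\max}(\wt L_H) \leq \|\wt K\| + O\bigparen{\sqrt{R\log n\cdot \|\wt K\|}} + O(R\log n)$ with probability $\geq 1-O(n^{-2})$, where $K := \E[L_H\mid\xi^{(1)}] = \eta L_{G_+} + (1-\eta) L_{G_-}$. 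For the outer step, using $L_{G_+}+L_{G_-}=L_G$ we rewrite $\wt K = (1-\eta)\Pi + (2\eta-1)\wt L_{G_+}$ with $\Pi := L_G^{\dagger/2} L_G L_G^{\dagger/2}$, so that $\E[\wt K] = 2\eta(1-\eta)\Pi$. Since $|2\eta-1|\leq 1$ and $\wt L_{G_+} = \sum_i \mathbf{1}[\xi^{(1)}_i=+1]\,\wt L^{(1)}_i$ is a sum of independent PSD matrices with $\|\wt L^{(1)}_i\|\leq R$ and matrix variance $\preceq \eta(1-\eta) R\,\Pi$, matrix Bernstein yields $\|\wt K - 2\eta(1-\eta)\Pi\| \leq O\bigparen{\sqrt{\eta(1-\eta)\,R\log n}} + O(R\log n)$ with probability $\geq 1-O(n^{-2})$.

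Combining these bounds (via AM-GM on the cross term $\sqrt{R\log n \cdot \|\wt K\|}$) gives, with probability $\geq 1 - O(n^{-2})$,
\begin{align*}
\lambda_{\max}(\wt L_H) \leq 2\eta(1-\eta) + O\bigparen{\sqrt{\eta(1-\eta)\,R\log n}} + O(R\log n).
\end{align*}
The hypothesis $d_{\min}\lambda \geq B\Delta\log n$ yields $R\log n \leq 2/B$, so the proof concludes with a regime split: if $\eta(1-\eta)\geq 1/8$, the error terms are at most $\delta\cdot 2\eta(1-\eta)$ for $\delta = \sqrt{B\Delta\log n/(d_{\min}\lambda)}$ with $B$ a sufficiently large absolute constant, giving $\lambda_{\max}(\wt L_H) \leq (1+\delta)\cdot 2\eta(1-\eta)$; if $\eta(1-\eta) < 1/8$ then $2\eta(1-\eta) < 1/4$ and the two error terms together are at most $1/12$, so $\lambda_{\max}(\wt L_H) \leq 1/3$. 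Either way, $L_H \preceq \max\bigparen{(1+\delta)\cdot 2\eta(1-\eta),\,1/3}\,L_G$, as claimed. The hardest part is engineering the two-stage conditioning so that the inner and outer concentration errors compose additively rather than multiplicatively; once the decomposition $L_H=\sum_j Y_j$ above is set up correctly, both concentration steps are standard and the final case split absorbs the regime where $\eta$ is near $0$.
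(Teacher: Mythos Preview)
Your two-stage conditioning argument is correct and matches the paper's core strategy: condition on one family of $\xi$'s, apply matrix concentration to the inner sum (grouped by the other label), then apply concentration again to the conditional mean. The paper conditions on $\xi^{(2)}$ and groups by $i\in[m_1]$, you do the symmetric thing; the substance is identical.

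Where you genuinely differ is in the small-$\eta$ regime. The paper uses \emph{multiplicative} Matrix Chernoff (the form stated in \cref{fact:matrix-chernoff}, valid only for $\delta\in[0,1]$), which requires a lower bound $\|\E[\wh L_H\mid\xi^{(2)}]\|_2\geq\eta$; this forces a case split at $\eta_0=1/12$, and for $\eta<\eta_0$ the paper introduces the monotone supergraph $\wt H=\{e:\xi^{(1)}_{c_1(e)}=-1\text{ or }\xi^{(2)}_{c_2(e)}=-1\}$ and couples to the $\eta=\eta_0$ case. You instead use Bernstein-type bounds with additive error $O(R\log n)$, so the small-$\eta$ case requires no new idea and falls out of the final arithmetic. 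This is cleaner and avoids the coupling trick, at the cost of invoking a slightly stronger concentration inequality than the paper explicitly states (the bound you attribute to \cref{fact:matrix-chernoff} is really matrix Bernstein, or Matrix Chernoff with the $\delta>1$ tail).

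One minor technicality: your normalization $\wt X = L_G^{\dagger/2} X L_G^{\dagger/2}$ needs $\|L_G^\dagger\|\leq 1/(d_{\min}\lambda)$, i.e., $\lambda_2(L_G)\geq d_{\min}\lambda_2(\wt L_G)$, which is not immediate from the hypotheses since $\lambda$ is the spectral gap of the \emph{normalized} Laplacian. The paper sidesteps this by normalizing as $(\wt L_G^\dagger)^{1/2}D_G^{-1/2}$, for which $\|\wh L_e\|\leq \|\wt L_G^\dagger\|\cdot\|D_G^{-1}\|\cdot\|L_e\|\leq 2/(d_{\min}\lambda)$ is direct. Your argument goes through verbatim with that normalization.
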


Let $\gamma \defeq 1- 2\eta > 0$ since $\eta < \frac{1}{2}$.
Notice that $2\eta(1-\eta) = \frac{1}{2}(1 - \gamma^2)$, which approaches $\frac{1}{2}$ as $\eta \to \frac{1}{2}$.
Thus, if $\delta \leq \gamma^2$, then $(1+\delta) \cdot 2\eta(1-\eta) \leq (1+\gamma^2) \cdot \frac{1}{2}(1-\gamma^2) < \frac{1}{2}$, and $L_H \prec \frac{1}{2}L_G$ suffices to conclude via \cref{lem:sdpuniqueness} that the SDP relaxation on the expanding subinstance is rank 1 and recovers the planted assignment, which also gives us the set of corrupted constraints.

\begin{proof}[Proof of \cref{lem:spectral-sparsification-odd}]
    First, note that by the definition of Laplacian and the spectral gap of $L_G$, $\spn(\vec{1})$ is exactly the null space of $L_G$ and is contained in the null space of $L_H$.
    Therefore, recalling that $L_G = D_G^{1/2} \wt{L}_G D_G^{1/2}$, it suffices to prove that
    \begin{equation} \label{eq:L_H-norm-condition}
        \Norm{ (\wt{L}_G^{\dagger})^{1/2} D_G^{-1/2} L_H D_G^{-1/2} (\wt{L}_G^{\dagger})^{1/2} }_2 \leq \max\Paren{ (1+\delta) \cdot 2\eta(1-\eta),\ \frac{1}{3}}  \mper
    \end{equation}
    Here $\wt{L}_G^{\dagger}$ is the pseudo-inverse of $\wt{L}_G$, and $\|\wt{L}_G^{\dagger}\|_2 \leq 1/\lambda$.
    For simplicity, for any graph $G'$, we will write $\widehat{L}_{G'} \coloneqq (\wt{L}_G^{\dagger})^{1/2} D_G^{-1/2} L_{G'} D_G^{-1/2} (\wt{L}_G^{\dagger})^{1/2}$. Thus,
    \begin{equation*}
        \wh{L}_H = \sum_{e\in E} \1\Paren{\xiL{c_1(e)} \xiR{c_2(e)} = -1} \cdot \wh{L}_e \mcom
        \text{ and } \E[\wh{L}_H] = 2\eta (1-\eta) \sum_{e\in E} \wh{L}_e  \mper
    \end{equation*}
    Note that $\sum_{e\in E} \wh{L}_e = \wh{L}_G$, a projection matrix, thus $\bignorm{\sum_{e\in E}\wh{L}_e}_2 = 1$.
    
    For each $i\in [m_1]$, we further define $\GL{i,+}$ and $\GL{i,-}$ to be (random) edge-disjoint subgraphs of $\GL{i}$ where $\GL{i,+}$ has edge set $\bigset{e\in E: c_1(e)=i, \xiR{c_2(e)} = +1}$ and $\GL{i,-}$ has edge set $\bigset{e\in E: c_1(e)=i, \xiR{c_2(e)} = -1}$.
    Note that $\GL{i,+}$, $\GL{i,-}$ are independent of $\xiL{} = (\xiL{1},\dots,\xiL{m_1})$.
    By the maximum degree bound on $\GL{i}$, we have that $\bignorm{L_{\GL{i,+}}}_2$ and $\bignorm{L_{\GL{i,-}}}_2 \leq \bignorm{L_{\GL{i}}}_2 \leq 2\Delta$. Thus,
    \begin{equation}  \label{eq:hat-L_Gi-norm}
        \Bignorm{\wh{L}_{\GL{i,+}}}_2,\ \Bignorm{\wh{L}_{\GL{i,-}}}_2 
        \leq \Bignorm{\wh{L}_{\GL{i}}}_2 
        \leq 2\Delta \cdot \Norm{\wt{L}_G^\dagger}_2 \cdot \Norm{D_G^{-1}}_2 \leq \frac{2\Delta}{d_{\min}\lambda} \mper
    \end{equation}
    Similarly, for $j\in [m_2]$, $\GR{j,+}$ and $\GR{j,-}$ are (random) edge-disjoint subgraphs of $\GR{j}$ independent of $\xiR{} = (\xiR{1},\dots,\xiR{m_2})$ such that $\bignorm{\wh{L}_{\GR{j,+}}}_2$ and $\bignorm{\wh{L}_{\GR{j,-}}}_2 \leq \frac{2\Delta}{d_{\min}\lambda}$.

    Now, we first fix $\xiR{} \in \Fits^{m_2}$. Observe that we can write $\wh{L}_H$ as
    \begin{equation} \label{eq:hat-L_H}
        \wh{L}_H = \sum_{i\in [m_1]} \1(\xiL{i} = +1) \cdot \wh{L}_{\GL{i,-}} + \1(\xiL{i} = -1) \cdot \wh{L}_{\GL{i,+}} \mcom
    \end{equation}
    and
    \begin{equation}  \label{eq:hat-L_H-cond-z}
    \begin{aligned}
        \E[\wh{L}_H | \xiR{}] &= (1-\eta) \sum_{i\in [m_1]} \wh{L}_{\GL{i,-}} + \eta \sum_{i\in[m_1]} \wh{L}_{\GL{i,+}} \\
        &= \sum_{e\in E} \Bigparen{(1-\eta) \cdot \1(\xiR{c_2(e)}=-1) + \eta \cdot \1(\xiR{c_2(e)} = +1)} \cdot \wh{L}_e  \\
        &\coloneqq \sum_{e\in E} w_{c_2(e)} \cdot \wh{L}_e \mper
    \end{aligned}
    \end{equation}
    Here $w_{c_2(e)} \in \{\eta, 1-\eta\}$, thus $\bignorm{\E[\wh{L}_H | \xiR{}]}_2 \geq \eta \bignorm{\sum_{e\in E} \wh{L}_e}_2 = \eta$.

    We now split the analysis into two cases. Let $\eta_0 \defeq 1/12$.

    \parhead{Case 1: $\eta \geq \eta_0$.}
    
    In light of \cref{eq:hat-L_H}, we define $X_i \coloneqq \1(\xiL{i} = +1) \cdot \wh{L}_{\GL{i,-}} + \1(\xiL{i} = -1) \cdot \wh{L}_{\GL{i,+}}$ such that $\wh{L}_H = \sum_{i\in [m_1]} X_i$.
    Moreover, we have that $X_i \succeq 0$ and $\|X\|_2 \leq \frac{2\Delta}{d_{\min}\lambda}$ almost surely from \Cref{eq:hat-L_Gi-norm}.
    Thus, applying matrix Chernoff (\Cref{fact:matrix-chernoff}), we get
    \begin{equation} \label{eq:first-matrix-chernoff}
    \begin{aligned}
        \Pr_{\xiL{}} \Brac{\Bignorm{\wh{L}_H}_2 \geq (1+\delta) \Norm{\E[\wh{L}_H | \xiR{}]}_2 } &\leq n \cdot \exp\Paren{-\frac{1}{3}\delta^2 \Norm{\E[\wh{L}_H | \xiR{}]}_2 \cdot \frac{d_{\min}\lambda}{2\Delta}} \\
        &\leq n \cdot \exp\Paren{-\frac{\delta^2 \eta d_{\min} \lambda}{6 \Delta}} \mcom
    \end{aligned}
    \end{equation}
    which is at most $O(n^{-2})$ as long as $\delta^2 \geq  \frac{B_1\Delta \log n}{d_{\min}\lambda}$ for a large enough constant $B_1$.

    Next, we similarly prove concentration for $\bignorm{\E[\wh{L}_H | \xiR{}]}_2$ over $\xiR{}$. Recalling \Cref{eq:hat-L_H-cond-z},
    \begin{equation*}
        \E[\wh{L}_H|\xiR{}] = \sum_{e\in E} w_{c_2(e)} \cdot \wh{L}_e
        = \sum_{j\in [m_2]} w_j \sum_{e\in \GR{j}} \wh{L}_e = \sum_{j\in [m_2]} w_j \cdot \wh{L}_{\GR{j}} \mper
    \end{equation*}
    $\E[w_j] = 2\eta(1-\eta)$, and $\bignorm{\E_{\xiR{}} \E[\wh{L}_H|\xiR{}]}_2 = 2\eta(1-\eta) \bignorm{\sum_{e\in E} \wh{L}_e}_2 = 2\eta(1-\eta)$.
    Since $\bignorm{w_j \wh{L}_{\GR{j}}}_2 \leq \frac{2(1-\eta)\Delta}{d_{\min}\lambda}$, we can apply matrix Chernoff again:
    \begin{equation} \label{eq:second-matrix-chernoff}
        \Pr_{\xiR{}} \Brac{\Norm{\E[\wh{L}_H|\xiR{}]}_2 \geq (1+\delta') \cdot 2\eta(1-\eta)} \leq n \cdot \exp\Paren{-\frac{1}{3}\delta'^2 \cdot 2\eta(1-\eta) \cdot \frac{d_{\min}\lambda}{2(1-\eta)\Delta}}
    \end{equation}
    which is at most $O(n^{-2})$ as long as $\delta'^2 \geq \frac{B_2\Delta \log n}{d_{\min}\lambda}$ for a large enough constant $B_2$.
    Combining both tail bounds, by the union bound, we have that with probability at least $1 - O(n^{-2})$, $\bignorm{\wh{L}_H}_2 \leq (1+\delta) \cdot 2\eta(1-\eta)$ as long as $\delta^2 \geq \frac{B\Delta \log n}{d_{\min}\lambda}$ for a large enough $B$.
    This establishes \Cref{eq:L_H-norm-condition}, proving the lemma for this case.

    \parhead{Case 2: $\eta < \eta_0$.} To handle this case, observe that the exact same analysis goes through for $\wt{H} = \{e\in E: \xiL{c_1(e)} = -1 \text{ or } \xiR{c_2(e)} = -1\} \supseteq H$.
    Indeed, similar to \cref{eq:hat-L_H,eq:hat-L_H-cond-z}, we have $\wh{L}_{\wt{H}} = \sum_{i\in[m_1]} \wt{X}_i$ where $\wt{X}_i = \1(\xiL{i} = +1) \cdot \wh{L}_{\GL{i,-}} + \1(\xiL{i} = -1) \cdot \wh{L}_{\GL{i}}$ (notice the 2nd term is $\GL{i}$ instead of $\GL{i,+}$), and 
    \begin{align*}
        \E[\wh{L}_{\wt{H}} | \xiR{}] = (1-\eta) \sum_{i\in [m_1]} \wh{L}_{\GL{i,-}} + \eta \sum_{i\in[m_1]} \wh{L}_{\GL{i}}
        = \sum_{e\in E} \wt{w}_{c_2(e)} \cdot \wh{L}_e 
        = \sum_{j\in[m_2]} \wt{w}_j \cdot \wh{L}_{\GR{j}} \mcom
    \end{align*}
    where $\wt{w}_j = 1$ if $\xiR{j} = -1$ and $\eta$ if $\xiR{j} = +1$, hence $\E[\wt{w}_j] = \eta + \eta(1-\eta) = \eta(2-\eta)$.
    Moreover, $\bignorm{\E_{\xiR{}} \E[\wh{L}_{\wt{H}}|\xiR{}]}_2 = \eta(2-\eta) \bignorm{\sum_{e\in E} \wh{L}_e}_2 = \eta(2-\eta)$.

    First, set $\eta = \eta_0$, and let $\wt{H}_0$ be the random subgraph as defined above. Similar to \cref{eq:first-matrix-chernoff,eq:second-matrix-chernoff}, we apply matrix Chernoff (\cref{fact:matrix-chernoff}) and get that with probability $1-O(n^{-2})$, $\bignorm{\wh{L}_{\wt{H}_0}}_2 \leq (1+\delta) \cdot \eta_0(2-\eta_0)$ for $\delta = \sqrt{\frac{B\Delta\log n}{d_{\min}\lambda}} \leq 1$.
    In particular, this means that $L_{\wt{H}_0} \preceq 2\eta_0(2-\eta_0) L_G \preceq \frac{1}{3} L_G$ when $\eta_0 = 1/12$.
    
    Now, fix any $\eta < \eta_0$.
    We can obtain a coupling between this case and the case when $\eta = \eta_0$ by randomly changing $\xiL{i}$ and $\xiR{j}$ from $+1$ to $-1$ (while not flipping the ones with $-1$).
    Notice that $\wt{H}$ is monotone increasing as we change any $+1$ to $-1$ (whereas $H$ is not!), thus we must have $\wt{H} \subseteq \wt{H}_0$ in this coupling. Then, as $H \subseteq \wt{H}$, we have
    \begin{flalign*}
        L_H \preceq L_{\wt{H}} \preceq L_{\wt{H}_0} \preceq \frac{1}{3} L_G
    \end{flalign*}
    with probability $1-O(n^{-2})$. This finishes the proof of \cref{lem:spectral-sparsification-odd}.
\end{proof}

\subsection{Recovery of corrupted constraints from corrupted pairs}

We have thus shown that, with probability $\geq 1 - 1/\poly(n)$, we can \emph{exactly} recover the set of corrupted constraints within each expanding subinstance $\phi_1, \dots, \phi_T$. Recall that after pruning and expander decomposition (\cref{lem:pruning-expander-decomp}), the expanding subinstances contain a $(1 - \eps)$-fraction of all edges in the instance $\phi$, and the set of edges removed only depends on the constraint graph and not the right-hand sides of $\phi$. As stated in \cref{obs:edges-in-phi}, the instance $\phi$ has exactly $m^2/4p$ edges, and they correspond exactly to the set $\{(u,C,C') : u \in [p], C \in \cH_{u}^{(L)}, C' \in \cH_{u}^{(R)}\}$, and moreover an edge $e$ in $\phi$ is corrupted if and only if exactly one of the two constraints $(u,C),(u,C')$ is corrupted in the original instance $\psi$, where $e$ corresponds to $(u,C,C')$. For each $u \in [p]$ and $C \in \cH_u = \cH_u^{(L)} \cup \cH_u^{(R)}$, let $\xi_u(C) = -1$ if $(u,C)$ is corrupted in $\psi$, and $1$ otherwise. It thus follows that we have learned, for $1 - \eps$ fraction of all $\{(u,C,C') : u \in [p], C \in \cH_{u}^{(L)}, C' \in \cH_{u}^{(R)}\}$, the product $\xi_u(C) \cdot \xi_u(C')$.

It now remains to show how to recover $\xi_u(C)$ for most $u \in [p]$, $C \in \cH_u$. For each $u \in [p]$, let $P_u \subseteq \{(C,C') : C \in \cH_{u}^{(L)}, C' \in \cH_{u}^{(R)}\}$ such that we have determined $\xi_u(C) \cdot \xi_u(C')$, and let $P = \cup_{u \in [p]} P_u$. We know that $\abs{P} \geq (1 - \eps) \frac{m^2}{4p}$. Let $\eps_u$ be chosen so that $\abs{P_u} = (1 - \eps_u) \frac{m^2}{4p^2}$, i.e., $\eps_u$ is the fraction of pairs in $\cH_u^{(L)} \times \cH_u^{(R)}$ that were deleted in \cref{lem:pruning-expander-decomp}. Notice that we have
\begin{equation}\label{eq:averaging}
\begin{aligned}
&(1 - \eps) \frac{m^2}{4p} \leq \abs{P} = \sum_{u\in [p]} \abs{P_u} = \frac{m^2}{4p^2}\sum_{u\in [p]} (1 - \eps_u) \\
&\implies \frac{1}{p} \sum_{u\in [p]} \eps_u \leq \eps \mper
\end{aligned}
\end{equation}
One can think of this problem as a collection of disjoint \emph{satisfiable} (noiseless) $2$-XOR instances on $P_u$, where each $P_u$ is a biclique ($\frac{m}{2p}$ vertices on each side) with $\eps_u$ fraction of edges are removed.

\begin{mdframed}
  \begin{algorithm}[Recover corrupted constraints from corrupted pairs]
    \label{alg:recovery}\mbox{}
    \begin{description}
    \item[Given:]
        For each $u \in [p]$, a set $P_u \subseteq \cH_u^{(L)} \times \cH_u^{(R)}$ such that $|P_u| = (1-\eps_u) \frac{m^2}{4p^2}$ for $\eps_u\in [0,1]$, along with ``right-hand sides'' $\xi_u(C) \cdot \xi_u(C')$ for each $(C,C') \in P_u$.
    \item[Output:]
        For each $u \in [p]$, disjoint subsets $\cA_u^{(1)}, \cA_u^{(2)} \subseteq \cH_u$.
   \item[Operation:]\mbox{}

    \begin{enumerate}
        \item \textbf{Initialize:} $\cA_u^{(1)}, \cA_u^{(2)} = \emptyset$ for each $u \in [p]$.
        \item \textbf{For each $u \in [p]$}:
        \begin{enumerate}
            \item If $\eps_u \geq 1/3$, set $\cA_u^{(1)} = \cH_u$ and $\cA_u^{(2)} = \emptyset$.
            \item Else if $\eps_u < 1/3$, let $G_u$ be the graph with vertex set $\cH_u = \cH_u^{(L)} \cup \cH_u^{(R)}$ with edges given by $P_u$, and let $S_u$ be the size of the largest connected component in $G_u$.
            \item As $S_u$ is connected in $G_u$, and we know $\xi_u(C) \xi_u(C')$ for each edge $(C,C')$ in $G_u$, by solving a linear system of equations we obtain $z \in \Fits^{\cH_u}$ such that either $z_C = \xi_u(C)$ for all $C \in S_u$, or $z_C = -\xi_u(C)$ for all $C \in S_u$. That is, $z_C = \xi_u(C)$ up to a global sign.
            \label{step:find-z}
            \item Pick the global sign to minimize the number of $C \in S_u$ for which $z_C = -1$. Set $\cA_u^{(1)} = \cH_u \setminus S_u$ and $\cA_u^{(2)} = \{C \in S_u : z_C = -1\}$.
            \label{step:pick-sign}
        \end{enumerate}
            
        \item Output $\{\cA^{(1)}_u\}_{u \in [p]}$, $\{\cA^{(2)}_u\}_{u \in [p]}$.
      \end{enumerate}
    \end{description}
  \end{algorithm}
  \end{mdframed}
We now analyze \cref{alg:recovery} via the following lemma.
\begin{lemma}
\label{lem:recovery}
Let $\eta \in [0,1/2)$, and let $|\cH_u| = \frac{m}{p} \geq \frac{24k}{(1-2\eta)^2} \log n$ and $|P_u| = (1-\eps_u) \frac{m^2}{4p^2}$ with $\eps_u \in [0,1]$ for each $u\in[p]$, and $\frac{1}{p} \sum_{u\in[p]} \eps_u \leq \eps$.
The outputs of \cref{alg:recovery} satisfy the following:
\begin{inparaenum}[(1)] \item $\sum_{u\in[p]} |\cA_u^{(1)}| \leq 4\eps m$, and \item with probability $1 - n^{-k}$ over the noise $\{\xi_u(C)\}_{u \in [p], C \in \cH_u}$, for every $u \in [p]$ we have that $\cA_u^{(2)} = \{C \in \cH_u : \xi_u(C) = -1\} \setminus \cA_u^{(1)}$.
\end{inparaenum}
\end{lemma}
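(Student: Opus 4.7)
The plan is to establish the two stated guarantees separately. Both rest on a structural fact about the bipartite graph $G_u$ with parts $\cH_u^{(L)}$ and $\cH_u^{(R)}$ (each of size $s = m/(2p)$) and edge set $P_u$ of size $(1-\eps_u) s^2$: whenever $\eps_u < 1/2$, the largest connected component $S_u$ contains at least $2(1-\eps_u) s$ vertices.

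To prove this structural claim, I would enumerate the connected components $C_1, \dots, C_\ell$ and write $L_i = |C_i \cap \cH_u^{(L)}|$, $R_i = |C_i \cap \cH_u^{(R)}|$, so that $\sum_i L_i = \sum_i R_i = s$ and $\sum_i L_i R_i \geq (1-\eps_u) s^2$. Letting $j$ and $j'$ be the indices maximizing $R_i$ and $L_i$ respectively, the inequality $(1-\eps_u) s^2 \leq \sum_i L_i R_i \leq s \cdot R_j$ yields $R_j \geq (1-\eps_u) s$ and analogously $L_{j'} \geq (1-\eps_u) s$. I would then prove $j = j'$ by contradiction: if $j \ne j'$, then the constraints $L_j + L_{j'} \leq s$ and $R_j + R_{j'} \leq s$ force $L_j, R_{j'} \leq \eps_u s$, and a careful accounting of $\sum_i L_i R_i = L_j R_j + L_{j'} R_{j'} + \sum_{i \neq j, j'} L_i R_i$ that respects $\sum_{i \neq j, j'} L_i \leq \eps_u s - L_j$ and $\sum_{i \neq j, j'} R_i \leq \eps_u s - R_{j'}$ yields $\sum_i L_i R_i \leq 2 \eps_u (1-\eps_u) s^2$. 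Combined with the lower bound, this forces $\eps_u \geq 1/2$, a contradiction. Once $j = j'$, the component contains $L_j + R_j \geq 2(1-\eps_u) s$ vertices.

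Part 1 then follows: when $\eps_u < 1/3 < 1/2$ the structural claim gives $|\cA_u^{(1)}| = |\cH_u \setminus S_u| \leq 2\eps_u s = \eps_u (m/p)$, while for $\eps_u \geq 1/3$ we trivially have $|\cA_u^{(1)}| = m/p \leq 3\eps_u(m/p)$. Summing, $\sum_u |\cA_u^{(1)}| \leq 3(m/p) \sum_u \eps_u \leq 3\eps m \leq 4\eps m$. For Part 2, the case $\eps_u \geq 1/3$ is immediate since both sides of the desired equality are empty. For $\eps_u < 1/3$, step 3 of \cref{alg:recovery} solves the consistent linear system $z_C z_{C'} = \xi_u(C)\xi_u(C')$ over $(C,C') \in P_u$ restricted to $S_u$; since $S_u$ is connected, Gaussian elimination yields $z_C = \alpha \xi_u(C)$ for $C \in S_u$ and some global sign $\alpha \in \Fits$, and the minimization in step 4 picks $\alpha = +1$ exactly when fewer than half of the $\xi_u(C)$ for $C \in S_u$ are $-1$. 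Since the $\xi_u(C)$ are i.i.d.\ with $\Pr[\xi_u(C) = -1] = \eta$ and the structural claim gives $|S_u| \geq 2(1-\eps_u)s \geq (2/3)(m/p) \geq 16 k \log n / \gamma^2$, where $\gamma = 1-2\eta$, a Hoeffding bound shows the wrong-majority event has probability at most $\exp(-|S_u|\gamma^2/2) \leq n^{-8k}$. Union bounding over $p \leq \poly(n)$ values of $u$ yields total failure probability at most $n^{-k}$, and on success $\cA_u^{(2)} = \{C \in S_u : \xi_u(C) = -1\} = \{C \in \cH_u : \xi_u(C) = -1\} \setminus \cA_u^{(1)}$.

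The main obstacle is the tight edge-counting inside the structural lemma: a naive bound $L_j R_j, L_{j'} R_{j'} \leq \eps_u s^2$ only forces $\eps_u \geq (\sqrt{13}-3)/2 \approx 0.303$ for a contradiction, which does not cover the entire range $\eps_u < 1/3$ used by the algorithm. The sharper bound $L_j R_j + L_{j'} R_{j'} \leq 2\eps_u(1-\eps_u)s^2$, which genuinely uses $L_j + L_{j'} \leq s$ and $R_j + R_{j'} \leq s$ rather than bounding each factor in isolation, is what lets the threshold extend all the way up to $\eps_u = 1/2$ and makes the constant $4$ in the size bound work.
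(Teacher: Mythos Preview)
Your proof is correct and follows the same overall architecture as the paper: first a structural lemma bounding the largest connected component of $G_u$, then the counting argument for $\sum_u |\cA_u^{(1)}|$, and finally a Hoeffding bound (using that $S_u$ is determined before the noise is revealed) for the sign step. Your bounds in Parts~1 and~2 match the paper's, with slightly sharper constants in places.

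The one genuine difference is how you prove the structural claim. The paper's Claim~5.6 argues directly: if every component has size at most $M$, then by AM--GM
\[
\sum_i \ell_i r_i \;\le\; \sum_i \Bigl(\tfrac{\ell_i+r_i}{2}\Bigr)^2 \;\le\; \tfrac{M}{2}\sum_i \tfrac{\ell_i+r_i}{2} \;=\; \tfrac{sM}{2},
\]
so $(1-\eps_u)s^2 \le sM/2$ gives $M \ge 2(1-\eps_u)s$ in one line, with no case analysis and no restriction on $\eps_u$. Your route instead isolates the components $j,j'$ maximizing $R_i$ and $L_i$, and shows by a two-variable optimization that if $j\ne j'$ then $\sum_i L_iR_i \le 2\eps_u(1-\eps_u)s^2$, forcing $\eps_u\ge 1/2$. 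This is correct (reducing to two components by merging and then maximizing $s(x+y)-2xy$ over $x,y\in[0,\eps_u s]$ does give exactly $2\eps_u(1-\eps_u)s^2$), but it is longer, and your ``careful accounting'' paragraph is really a sketch that the reader must complete. The paper's AM--GM argument is what you would want here: it sidesteps the tightness issue you flag in your last paragraph entirely.
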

\begin{proof}
Suppose that $\eps_u < 1/3$. Observe that $G_u$ is a graph obtained by taking a biclique with left vertices $\cH_u^{(L)}$ and right vertices $\cH_u^{(R)}$, i.e., with $m/2p$ left vertices and $m/2p$ right vertices. The following lemma shows that the largest connected component $S_u$ in $G_u$ has size at least $\frac{m}{p}(1 - \eps_u)$.

\begin{claim}
\label{lem:connectedcomponent}
Let $K_{n,n}$ be the complete bipartite graph with $n$ left vertices $L$ and $n$ right vertices $R$. Let $G$ be a graph obtained by deleting $\eps n^2$ edges from $K_{n,n}$. Then, the largest connected component in $G$ has size $\geq 2n(1-\eps)$.
\end{claim}
 We postpone the proof of \cref{lem:connectedcomponent} to the end of the section, and continue with the proof of \cref{lem:recovery}.

We now argue that we can efficiently obtain the vector $z$ in Step (\ref{step:find-z}) of \cref{alg:recovery}. Indeed, this is done as follows. First, pick one $C_0 \in S_u$ arbitrarily, and set $z_{C_0} = 1$. 
Then, we propagate in a breadth-first search manner: for any edge $(C, C')$ in $S_u$ where $z_C$ is determined, set $z_{C'} = z_C \cdot \xi_u(C) \xi_u(C')$.
We repeat this process until we have labeled all of $S_u$.
Notice that as $S_u$ is a connected component, fixing $z_{C_0}$ for any $C_0 \in S_u$ uniquely determines the assignment of all $S_u$, thus we have obtained $z_C = \xi_u(C)$ up to a global sign.

Now, we observe that $S_u$ does not depend on the noise in $\psi$. Indeed, this is because the pruning and expander decomposition (and thus the graph $G_u$) depends solely on the constraint graph $G$ of the instance $\phi$, and not on the right-hand sides of the constraints. The following lemma thus shows that with high probability over the noise, the number of $C \in S_u$ where $\xi_u(C) = -1$ is strictly less than $1/2 \abs{S_u}$. Hence, in Step (\ref{step:pick-sign}), by picking the assignment $\pm z$ that minimizes the number of $C \in S_u$ with $\xi_u(C) = -1$, we see that $\cA_u^{(2)} = \{C \in S_u : z_C = -1\} = \{C \in S_u : \xi_u(C) = -1\}$.
\begin{claim}
\label{lem:localnoiseconc}
Let $\eta \in (0,1/2)$ be the corruption probability, and assume that $p \leq n^k$ and $\frac{m}{p} \geq \frac{24k}{(1-2\eta)^2} \log n$.
With probability $1 - n^{-k}$ over the noise in $\psi$, it holds that for each $u \in [p]$ with $\eps_u < 1/3$, $\abs{\{C \in S_u : \xi_u(C) = -1\}} <  \frac{1}{2}\abs{S_u}$.
\end{claim}
We postpone the proof of \cref{lem:localnoiseconc}, and finish the proof of \cref{lem:recovery}.
We next bound $\sum_{u\in[p]} |\cA_u^{(1)}|$.
By \cref{eq:averaging} we have that $\frac{1}{p}\sum_{u} \eps_u \leq \eps$. Thus,
\begin{flalign*}
    \sum_{u: \eps_u \geq 1/3} \abs{\cH_u} \leq \frac{m}{p}\sum_{u: \eps_u\geq 1/3} 3\eps_u \leq 3\eps m \mper
\end{flalign*}
Moreover, by \cref{lem:connectedcomponent} we have $|S_u| \geq (1-\eps_u) |\cH_u| = (1-\eps_u) \frac{m}{p}$.
Thus,
\begin{flalign*}
    \sum_{u:\eps_u < 1/3} \abs{\cH_u \setminus S_u} \leq \sum_{u: \eps_u < 1/3} \eps_u \cdot \frac{m}{p} \leq \eps m \mper
\end{flalign*}
Therefore, combining the two,
\begin{flalign*}
    \sum_{u\in [p]} \abs{\cA_u^{(1)}} = \sum_{u : \eps_u < 1/3} \abs{\cH_u \setminus S_u} + \sum_{u : \eps_u \geq 1/3} \abs{\cH_u} \leq 4\eps m \mcom
\end{flalign*}
which finishes the proof of \cref{lem:recovery}.
\end{proof}

In the following, we prove \cref{lem:connectedcomponent,lem:localnoiseconc}.

\begin{proof}[Proof of \cref{lem:connectedcomponent}]
Let $S_1, \dots, S_t$ be the connected components of $G$. Let $\ell_i = \abs{S_i \cap L}$ and $r_i = \abs{S_i \cap R}$. The number of edges in $G$ is at most $\sum_{i = 1}^t \ell_i r_i$.

Now, suppose that the largest connected component of $G$ has size at most $M$. Then, we have that $\ell_i + r_i \leq M$ for all $i \in [t]$. Notice that the number of edges deleted from $K_{n,n}$ to produce $G$ must be at least $n^2 - \sum_{i = 1}^t \ell_i r_i$, and this is at most $\eps n^2$. Hence, by maximizing the quantity $\sum_{i = 1}^t \ell_i r_i$ subject to $\ell_i + r_i \leq M$ for all $i \in [t]$ and $\sum_{i=1}^t \ell_i + r_i = 2n$, we can obtain a lower bound on the number of edges deleted from $K_{n,n}$ in order for the largest connected component of $G$ to have size at most $M$. We have that
\begin{flalign*}
    \sum_{i=1}^t \ell_i r_i \leq \sum_{i=1}^t \Paren{\frac{\ell_i+r_i}{2}}^2 \leq \frac{M}{2}\cdot \sum_{i=1}^t \frac{\ell_i + r_i}{2} = \frac{nM}{2} \mcom
\end{flalign*}
where the first inequality is by the AM-GM inequality. Thus,
\begin{flalign*}
    \eps n^2 \geq n^2 - \frac{nM}{2} \implies M \geq 2n(1 - \eps) \mcom
\end{flalign*}
which finishes the proof.
\end{proof}

\begin{proof}[Proof of \cref{lem:localnoiseconc}]
Let $u$ be such that $\eps_u < 1/3$, and let $S_u$ be the largest connected component in $G_u$. Observe that $S_u$ is determined solely by the constraint graph of $\phi$, and in particular does not depend on the noise in $\phi$ (and hence on the noise in $\psi$).
As $p \leq n^{k}$ by assumption, it thus suffices to show that for each $u \in [p]$, with probability $1 - n^{-2k}$ it holds that $\abs{\{C \in S_u : \xi_u(C) = -1\}} < \frac{1}{2}\abs{S_u}$. Notice that $\abs{\{C \in S_u : \xi_u(C) = -1\}}$ is simply the sum of $\abs{S_u}$ $\text{Bernoulli}(\eta)$ random variables. By Hoeffding's inequality, with probability $\geq 1 - \exp(-2\delta^2 \abs{S_u})$ it holds that $\abs{\{C \in S_u : \xi_u(C) = -1\}} \leq (\eta + \delta) \abs{S_u}$. We choose $\delta = \frac{1}{2}(\frac{1}{2} - \eta)$ such that $\eta + \delta < \frac{1}{2}$ for $\eta \in (0,\frac{1}{2})$. Then, by noting that $2\delta^2 \abs{S_u} \geq 2\delta^2 (1 - \eps_u)\abs{\cH_u} \geq \frac{1}{2}(\frac{1}{2}-\eta)^2 \cdot \frac{2}{3} \cdot \frac{m}{p} \geq 2k \log n$ since $\frac{m}{p} \geq \frac{24k}{(1-2\eta)^2} \log n$, \cref{lem:localnoiseconc} follows.
\end{proof}

\subsection{Finishing the proof of \texorpdfstring{\cref{lem:algbipartitexor}}{Lemma~\ref{lem:algbipartitexor}}}

\begin{proof}[Proof of \cref{lem:algbipartitexor}]
    We are given an $\tau$-spread $p$-bipartite $k$-XOR instance $\psi$ with constraint graph $\cH = \{\cH_u\}_{u\in [p]}$, where we recall from \cref{def:spread-bipartite-kXOR} that
    \begin{inparaenum}[(1)]
        \item $m = |\cH|$ and each $|\cH_u| = \frac{m}{p} \geq 2\floor{\frac{1}{2\tau^2}}$ and $\frac{m}{p}$ is even, and
        \item for any $Q\subseteq [n]$, $\deg_u(Q) \leq \frac{1}{\tau^2} \max(1, n^{\frac{k}{2}-1-|Q|})$.
    \end{inparaenum}
    For convenience, let $m \geq n^{\frac{k-1}{2}} \sqrt{p} \cdot \beta$ where $\beta \coloneqq C \cdot \BipartiteBeta$ and $\gamma \coloneqq 1-2\eta \in (0,1]$ since $\eta \in [0, \frac{1}{2})$.
    
    First, we construct the $2$-XOR instance $\phi$ defined in \cref{def:2XOR-from-bipartite-kXOR}. As stated in \cref{obs:edges-in-phi}, the average degree is at least $d \coloneqq \frac{1}{4}\beta^2$, and furthermore, by \cref{lem:degree-bounds-eps-spread}, the maximum degree of $G_{u,C}^{(L)}(\phi)$ and $G_{u,C'}^{(R)}(\phi)$ for any $u\in[p]$, $C\in \cH_u^{(L)}$ and $C' \in \cH_u^{(R)}$ is bounded by $\Delta \coloneqq 1/\tau^2$.
    The algorithm then follows the steps outlined in \cref{sec:bipartite-kXOR-outline}.
    
    \parhead{Step 1.} We apply graph pruning and expander decomposition (\cref{lem:pruning-expander-decomp}) with parameter $\eps' \coloneqq \frac{1}{4}\eps$, which decomposes $\phi$ into $\phi_1,\dots,\phi_T$ such that they contain $1-\eps'$ fraction of the constraints in $\phi$, and their constraint graphs (after adding some self-loops due to expander decomposition) have minimum degree $d_{\min} \geq \frac{1}{3} \eps' d = \frac{1}{48}\eps \beta^2$ and spectral gap $\lambda \geq \Omega(\eps'^2/\log^2 m) = \Omega(\eps^2 / (k^2\log^2 n))$.

    \parhead{Step 2.} We solve the SDP relaxation for each subinstance $\phi_i$.
    Let $G$ be the constraint graph of $\phi_i$ (with at most $N \leq n^{k-1}$ vertices) and $H$ be the corrupted edges of $G$.
    We apply the relative spectral approximation result (\cref{lem:spectral-sparsification-odd}) with $\xi_1^{(1)}, \dots, \xi^{(1)}_{m/2p}$ (resp.\ $\xi_1^{(2)}, \dots, \xi^{(2)}_{m/2p}$) being $\Fits$ random variables indicating whether each $C\in \cH_u^{(L)}$ (resp.\ $C'\in \cH_u^{(R)}$) is corrupted.
    Moreover, the subgraphs $\GL{i}$ and $\GR{j}$ in \cref{lem:spectral-sparsification-odd} (which are simply subgraphs of $G_{u,C}^{(L)}(\phi)$ and $G_{u,C'}^{(R)}(\phi)$) have maximum degree $\leq \Delta = 1/\tau^2$.
    Thus, we have that with probability $1- O(N^{-2})$,
    \begin{flalign*}
        L_H \preceq \max\Paren{(1+\delta) \cdot 2\eta(1-\eta),\ \frac{1}{3}} \cdot L_G
    \end{flalign*}
    where $\delta = \sqrt{\frac{B\Delta\log N}{d_{\min} \lambda}} \leq O\Bigparen{\sqrt{\frac{k^3 \log^3 n}{\tau^2 \eps^3 \beta^2}}}$. Plugging in $\beta$ (for large enough $C$), we get that $\delta \leq \gamma^2 = 1-4\eta(1-\eta)$.
    Therefore, we have $(1+\delta) \cdot 2\eta(1-\eta) \leq (1+\gamma^2) \cdot \frac{1}{2}(1-\gamma^2) < \frac{1}{2}$, hence $L_H \prec \frac{1}{2} L_G$.
    By union bound over all $T \leq N$ subinstances, this holds for all subinstances $\phi_i$ with probability $1-\frac{1}{\poly(n)}$ over the randomness of the noise.
    
    Then, by \cref{lem:sdpuniqueness}, the SDP relaxation has a unique optimum which is the planted assignment.
    Thus, we can identify the set of corrupted edges in each $\phi_i$.
    
    \parhead{Step 3.} So far we have identified, for $\geq 1-\eps'$ fraction of all $\{(u,C,C') : u \in [p], C \in \cH_{u}^{(L)}, C' \in \cH_{u}^{(R)}\}$, the product $\xi_u(C) \cdot \xi_u(C')$, where $\xi_u(C) = -1$ if $(u,C)$ is corrupted in $\psi$, and $+1$ otherwise.
    Let $P_u \subseteq \{(C,C') : C \in \cH_{u}^{(L)}, C' \in \cH_{u}^{(R)}\}$ be such pairs for each $u\in[p]$, and let $P = \cup_{u\in[p]} P_u$.
    Note that $|P| \geq (1-\eps') \frac{m^2}{4p}$ and $P$ depends only on $\cH$ and not on the noise.
    
    We then run \cref{alg:recovery}.
    By the assumption that $\tau \leq \frac{c\gamma}{\sqrt{k\log n}}$ for a small enough $c$, we have $|\cH_u| = \frac{m}{p} \geq 2\floor{\frac{1}{2\tau^2}} \geq \frac{24k}{(1-2\eta)^2}$, which is the  condition we need in \cref{lem:recovery}.
    Thus, with probability $1- n^{-k}$, \cref{alg:recovery} outputs \begin{inparaenum}[(1)]
        \item $\cA_1 \subseteq \cH$ which only depends on $\cH$ and such that $|\cA_1| \leq 4\eps' m = \eps m$, and
        \item $\cA_2 \subseteq \cH$, the set of corrupted constraints in $\cH \setminus \cA_1$.
    \end{inparaenum}
    This completes the proof of \cref{lem:algbipartitexor}.
\end{proof}

\section*{Acknowledgements}

We would like to thank Omar Alrabiah, Sidhanth Mohanty and Jeff Xu for enlightening discussions and feedback on our paper.
We also thank anonymous reviewers for their valuable feedback.

\bibliographystyle{alpha}
\bibliography{planted-csp.bbl}

\newpage
\appendix
\section{Notions of Relative Approximation}
\label{sec:sparsification-example}

In this paper, we have encountered several notions of relative graph approximations.
Let $G$ be an $n$-vertex graph, and let $H$ be a random subgraph of $G$ by selecting each edge with a fixed probability $\eta \in (0,1)$.
We are interested in the sufficient conditions on $G$ for each of the following to hold with probability $1-o(1)$ (for some $\delta = o(1)$):
\begin{enumerate}[(1)]
    \item \textbf{Relative cut approximation}: $x^\top L_H x \leq (1+\delta) \eta \cdot x^\top L_G x$ for all $x\in \Fits^n$.
    \label{item:cut-sparsification}
    
    \item \textbf{Relative SDP approximation}: $\iprod{X, L_H} \leq (1+\delta) \eta \cdot \iprod{X, L_G}$ for all symmetric matrices $X\succeq 0$ with $\diag(X) = \Id$.
    \label{item:sdp-sparsification}

    \item \textbf{Relative spectral approximation:} $L_H \preceq (1+\delta) \eta \cdot L_G$.
    \label{item:spectral-sparsification}
\end{enumerate}
Here, we only state one-sided inequalities, as solving noisy XOR requires only an upper bound on $L_H$.
Note also that the above is in increasing order: relative spectral approximation implies relative SDP approximation, which in turn implies relative cut approximation.

Recall from \cref{lem:cutsparsification} that a lower bound on the min-cut of $G$ suffices for cut approximation to hold, while \cref{lem:spectral-sparsification} shows that lower bounds on the minimum degree and spectral gap of $G$ suffice for spectral approximation to hold.
It is natural to wonder whether a min-cut lower bound is sufficient for SDP approximation as well, since it allows us to efficiently recover the planted assignment in a noisy planted 2-XOR via solving an SDP relaxation (see \cref{lem:sdpuniqueness}).
Unfortunately, there is a counterexample.

\paragraph{Separation of cut and SDP approximation.}
The example is the same graph that separates cut and spectral approximation described in \cite{SpielmanT11}.
Let $n$ be even and $k = k(n)$.
Define $G = (V,E)$ be a graph on $N = nk$ vertices where $V = \{0,1,\dots,n-1\} \times \{1,\dots,k\}$ and $(u,i)$, $(v,j) \in V$ are connected if $v = u \pm 1 \mod{n}$.
Moreover, there is one additional edge $e^*$ between $(0,1)$ and $(n/2,1)$.
In other words, $G$ consists of $n$ clusters of vertices of size $k$, where the clusters form a ring with a complete bipartite graph between adjacent clusters, along with a special edge $e^*$ in the middle.

Clearly, the minimum cut of $G$ is $2k$, which means that cut approximation holds. Essentially, the special edge $e^*$ does not play a role here.

However, we will show that $e^*$ breaks SDP approximation.
Define vector $x_0 \in \R^V$ such that the $(u,i)$ entry is
\begin{equation*}
    x_0(u,i) = \min(u, n-u) \mcom
\end{equation*}
and vectors $x_1,\dots,x_{n-1}$ to be cyclic shifts of $x_0$: for $w\in \{0,1,\dots,n-1\}$,
\begin{equation*}
    x_w(u,i) = x_0(u-w\ (\text{mod }n),\ i) \mper
\end{equation*}
We note that $x_0$ is the vector shown in \cite{SpielmanT11} that breaks spectral approximation.
We now show that $X = \sum_{w=0}^{n-1} x_w x_w^\top$ (scaled so that $X$ has all 1s on the diagonal) breaks SDP approximation.

First, it is easy to see that the diagonal entries of $X$ are all equal due to symmetry. Thus, for some scaling $c$, $cX \succeq 0$ and $\diag(cX) = \Id$.

Observe that for $w \leq \frac{n}{2}-1$, $x_w(0,1) = w$ and $ x_w(\frac{n}{2},1) = \frac{n}{2}-w$.
For $w \geq \frac{n}{2}$, $x_w(0,1) = n-w$ and $ x_w(\frac{n}{2},1) = w-\frac{n}{2}$. Thus, as $x_w^\top L_{e^*} x_w = \bigparen{x_w(0,1) - x_w(\frac{n}{2},1)}^2$,
\begin{equation*}
    \iprod{X, L_{e^*}} = \sum_{w=0}^{n-1} x_w^\top L_{e^*} x_w = \sum_{w=0}^{\frac{n}{2}-1} \Paren{\frac{n}{2}-2w}^2 + \sum_{w=\frac{n}{2}}^{n-1} \Paren{\frac{3n}{2}-2w}^2 = \Theta(n^3) \mper
\end{equation*}
On the other hand, $x_w^\top L_{G\setminus e^*} x_w = nk^2$ for any $w$, thus $\iprod{X, L_{G\setminus e^*}} = n^2 k^2$.
This is $o(n^3)$, i.e.\ dominated by $\iprod{X, L_{e^*}}$, when $k = o(\sqrt{n})$.
Since $e^*$ is selected in $H$ with probability $\eta$, we have that with probability $\eta$,
\begin{equation*}
    \iprod{X, L_H} \geq \iprod{X, L_{e^*}} \geq (1-o(1)) \cdot \iprod{X, L_G} \mcom
\end{equation*}
which violates the desired SDP approximation.

\section{Hypergraph Decomposition}
\label{sec:hypergraph-decomp}

In this section, we describe the hypergraph decomposition algorithm used in \cref{sec:decomposition} (for the proof of \cref{mthm:algxor}).
This algorithm is nearly identical to the hypergraph decomposition step of \cite[Section 4]{GuruswamiKM22}.

\begin{mdframed}
  \begin{algorithm}
    \label{alg:decomp}\mbox{}
    \begin{description}
    \item[Given:]
       A semirandom (with noise $\eta$) $k$-XOR instance $\psi$ with constraint hypergraph $\cH$ over $n$ vertices, and a spread parameter $\tau \in (0,1)$.
    \item[Output:]
        For each $t = 2, \dots, k$, a semirandom (with noise $\eta$) planted $\tau$-spread $p^{(t)}$-bipartite $t$-XOR instance $\psi^{(t)}$ with constraint hypergraph $\{\cH^{(t)}_u\}_{u \in [p^{(t)}]}$, along with ``discarded'' hyperedges $\cH^{(1)}$.
           \item[Operation:]\mbox{}

    \begin{enumerate}
    	\item \textbf{Initialize:} $\psi^{(t)}$ to the empty instance, and $p^{(t)} = 0$ for $t = 2, \dots, k$.
	    \item \textbf{Fix violations greedily}:
			\begin{enumerate}
    			\item Find a maximal nonempty violating $\setQ$. That is, find $Q \subseteq [n]$ of size $1 \leq \abs{Q} \leq k - 1$ such that $\deg(Q) = \abs{\{C \in \cH : Q \subseteq C\}} > \frac{1}{\tau^2} \max(1, n^{\frac{k}{2}  - \abs{Q}})$, and $\deg(Q') \leq  \frac{1}{\tau^2} \max(1, n^{\frac{k}{2} - \abs{Q'}})$ for all $Q' \supsetneq Q$. 
                    \label{step:violating-Q}
                
    			\item Let $q = \abs{Q}$. Let $u = 1 + p^{(k + 1 - q)}$ be a new ``label'', and define $\cH^{(k + 1 - q)}_u$ to be an arbitrary subset of $\{C \setminus Q: C \in \cH, Q \subseteq C\}$ of size exactly $2 \cdot \floor{ \frac{1}{2\tau^2} \max(1, n^{\frac{k}{2} - q})}$.
                    \label{step:add-to-Hu}
                    
			\item Set $p^{(k + 1 - q)} \gets 1 + p^{(k + 1 - q)}$, and $\cH \gets \cH \setminus \cH^{(k+1 - q)}_{u}$.
    		\end{enumerate}
		\item If no such $\setQ$ exists, then put the remaining hyperedges in $\cH^{(1)}$. 
      \end{enumerate}
    \end{description}
  \end{algorithm}
\end{mdframed}

\begin{lemma}
    \cref{alg:decomp} has the following guarantees:
    \begin{enumerate}[(1)]
        \item The runtime is $n^{O(k)}$,
        \item The number of ``discarded'' hyperedges is $m^{(1)} \defeq \abs{\cH^{(1)}} \leq \frac{1}{k \tau^2} n^{\frac{k}{2}}$,
        \item For each $t\in \{2,\dots,k\}$ and $u\in[p^{(t)}]$, $\abs{\cH_u^{(t)}} = \frac{m^{(t)}}{p^{(t)}} = 2 \floor{\frac{1}{2\tau^2} \max(1, n^{ t - \frac{k}{2} - 1})}$,
        \item For each $t = 2, \dots, k$, the instance $\psi^{(t)}$ is $\tau$-spread.
    \end{enumerate}
\end{lemma}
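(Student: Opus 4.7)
The plan is to verify the four guarantees in order of increasing difficulty, with the bulk of the work being the spread property. Claims (1) and (3) are nearly immediate from the construction: each iteration enumerates all subsets of $[n]$ of size at most $k-1$ in $n^{O(k)}$ time, removes a positive number of hyperedges (so at most $\abs{\cH} \leq n^k$ iterations run), and the size $\abs{\cH_u^{(t)}} = 2 \floor{\frac{1}{2\tau^2}\max(1, n^{k/2-q})}$ with $q = k+1-t$ is set explicitly in Step \ref{step:add-to-Hu}. The one routine check here is that whenever $Q$ is violating, $\deg(Q) > \frac{1}{\tau^2}\max(1, n^{k/2-q})$ provides enough hyperedges to extract $2\lfloor \cdot \rfloor$ many, which follows from $\deg(Q)$ being an integer strictly exceeding the real-valued threshold.

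Claim (2) will follow from the termination condition applied to singletons. When the algorithm halts, every non-empty $Q' \subseteq [n]$ with $\abs{Q'} \leq k-1$ satisfies $\deg(Q') \leq \frac{1}{\tau^2}\max(1, n^{k/2 - \abs{Q'}})$ in the leftover hypergraph $\cH^{(1)}$. Taking $Q' = \{i\}$ for each $i \in [n]$ gives $\deg(\{i\}) \leq \frac{1}{\tau^2} n^{k/2 - 1}$, and double-counting vertex-hyperedge incidences yields $\abs{\cH^{(1)}} = \frac{1}{k}\sum_i \deg(\{i\}) \leq \frac{1}{k\tau^2} n^{k/2}$.

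The main content is Claim (4), the $\tau$-spread property of each $\psi^{(t)}$. Each $C' \in \cH_u^{(t)}$ equals $C \setminus Q$ for some $C \in \cH$ containing $Q$, so $\cH_u^{(t)}$ is $(t-1)$-uniform (since $\abs{Q} = k+1-t$). To check the degree bound $\deg_u(Q') \leq \frac{1}{\tau^2}\max(1, n^{t/2 - 1 - \abs{Q'}})$ for arbitrary $Q' \subseteq [n]$, I will split into three cases: when $Q' \cap Q \neq \emptyset$ the bound is trivial ($\deg_u(Q') = 0$ since no element of $\cH_u^{(t)}$ meets $Q$); when $Q' = \emptyset$ the construction gives $\abs{\cH_u^{(t)}} \leq \frac{1}{\tau^2}\max(1, n^{t-1-k/2}) \leq \frac{1}{\tau^2}\max(1, n^{t/2-1})$; and for non-empty $Q'$ disjoint from $Q$, I will bound $\deg_u(Q') \leq \deg(Q \cup Q')$ using the state of $\cH$ at that iteration. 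The hard part, and the key design choice of the algorithm, is that I will crucially invoke the \emph{maximality} of $Q$ in Step \ref{step:violating-Q} rather than merely its violating status: since $Q \cup Q' \supsetneq Q$, maximality gives $\deg(Q \cup Q') \leq \frac{1}{\tau^2}\max(1, n^{k/2 - \abs{Q} - \abs{Q'}})$, and substituting $\abs{Q} = k + 1 - t$ produces the exponent $t - 1 - k/2 - \abs{Q'}$, which is at most $t/2 - 1 - \abs{Q'}$ because $t \leq k$. This matching of exponents --- the violating threshold using $k/2$ while the bipartite spread uses $t/2$, with the ``$-1$'' accounting for the special variable $y_u$ --- is exactly what lets the greedy peeling preserve the spread invariant.
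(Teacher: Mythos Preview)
Your proposal is correct and follows essentially the same approach as the paper's proof: the runtime and size claims are handled by construction, the discarded-edge bound via the singleton degree bound at termination and double counting, and the spread property via the same three-case split (intersecting $Q$, empty $Q'$, and nonempty disjoint $Q'$) with maximality of $Q$ being the key input in the last case. Your additional remark that $\deg(Q)$ being an integer strictly above the threshold ensures at least $2\lfloor \cdot \rfloor$ hyperedges are available is a nice sanity check that the paper leaves implicit.
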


\begin{proof}
The runtime of \cref{alg:decomp} is obvious. We now argue that $m^{(1)}$ is small. By construction, $\cH^{(1)}$ is the set of remaining hyperedges when the inner loop terminates, and so we must have $\deg(\{i\}) \leq \frac{1}{\tau^2} \max(1, n^{\frac{k}{2} - 1}) =  \frac{1}{\tau^2} n^{\frac{k}{2} - 1}$ for every $i \in [n]$; here, $\deg$ only counts hyperedges remaining in $\cH$. We then have $\sum_{i \in [n]} \deg(\{i\}) = k \abs{\cH^{(1)}}$, as every $C \in \cH^{(1)}$ is counted exactly $k$ times in the sum. Hence, $m^{(1)} \leq \frac{1}{k \tau^2}n ^{\frac{k}{2}}$.

 Next, for each $t \in \{2, \dots, k\}$, by construction (Step (\ref{step:add-to-Hu})) each $\cH^{(t)}_u$ has the same size, namely $2\floor{ \frac{1}{2\tau^2} \max(1, n^{ t - \frac{k}{2} - 1})}$. It then follows that $m^{(t)} := \sum_{u \in [p^{(t)}]} \abs{\cH^{(t)}_u} = p^{(t)} \cdot 2\floor{ \frac{1}{2\tau^2} \max(1, n^{ t - \frac{k}{2} - 1})}$, and so $\abs{\cH^{(t)}_u} = \frac{m^{(t)}}{p^{(t)}}$. We also note that $m^{(t)}/p^{(t)}$ is clearly even.

We now argue that for each $t$, the instance $\psi^{(t)}$ is $\tau$-spread.  From \cref{def:spread-bipartite-kXOR}, we need to prove that for each $u \in [p^{(t)}]$ and $Q \subseteq [n]$, $\deg_u(Q) \leq \frac{1}{\tau^2} \max(1, n^{\frac{k}{2} - 1 - \abs{Q}})$.
To see this, let $u \in [p^{(t)}]$, and let $Q_u$ be the set ``associated'' with the label $u$, i.e., the set picked in Step (\ref{step:violating-Q}) of \cref{alg:decomp} when the label $u$ is added in Step (\ref{step:add-to-Hu}). Note that we must have $\abs{Q_u} = k + 1 - t$. Let $\cH'$ denote the set of constraints in $\cH$ at the time when $u$ and $\cH^{(t)}_u$ is added to $\psi^{(t)}$. Namely, we have that for every $C \in \cH^{(t)}_u$, $Q_u \cup C \in \cH'$, and $Q_u, C$ are disjoint. Now, let $R \subseteq [n]$ be a nonempty set of size at most $t - 1$. First, observe that if $R \cap Q_u$ is nonempty, then we must have $\deg_u(R) = 0$ (this degree is in the hypergraph $\cH_u^{(t)}$). Indeed, this is because $C \cap Q_u = \emptyset$ for all $C \in \cH^{(t)}_u$. So, we can assume that $R \cap Q_u = \emptyset$. Next, we see that $\deg_u(R) \leq \deg_{\cH'}(Q_u \cup R)$ (where $\deg_{\cH'}$ is the degree in $\cH'$), as $Q_u \cup C \in \cH'$ for every $C \in \cH^{(t)}_u$. Because $Q_u$ was maximal whenever it was processed in our decomposition algorithm and $Q_u \subsetneq Q_u \cup R$ as $R$ is nonempty and $R \cap Q_u = \emptyset$, it follows that 
\begin{flalign*}
\deg_{\cH'}(Q_u \cup R) &\leq \frac{1}{\tau^2} \max(1, n^{\frac{k}{2} - \abs{Q_u \cup R}}) = \frac{1}{\tau^2} \max(1, n^{\frac{k}{2} - \abs{Q_u} - \abs{R} }) \\
&= \frac{1}{\tau^2} \max(1, n^{t  - \frac{k}{2} - 1 - \abs{R} }) \leq \frac{1}{\tau^2} \max(1, n^{\frac{t}{2} - 1 - \abs{R} }) \mcom
\end{flalign*}
where the last inequality follows because $t - \frac{k}{2} - 1 - \abs{R} \leq \frac{t}{2} - 1 - \abs{R}$ always holds, as $t \leq k$.

Finally, when $R= \emptyset$, we trivially have
\begin{align*}
    \deg_u(\emptyset) = \Abs{\cH_u^{(t)}} = 2\left\lfloor \frac{1}{2\tau^2} \max(1, n^{ t - \frac{k}{2} - 1}) \right\rfloor
    \leq  \frac{1}{\tau^2} \max(1, n^{ t - \frac{k}{2} - 1}) \leq \frac{1}{\tau^2} \max(1, n^{\frac{t}{2} - 1 }) \mcom
\end{align*}
where we use again that $t - \frac{k}{2} \leq \frac{t}{2}$ as $t \leq k$.
This finishes the proof.
\end{proof}

\section{\texorpdfstring{\cref{mthm:algxor}}{Theorem~\ref{mthm:algxor}} when \texorpdfstring{$k = 1$}{k = 1}}
\label{sec:1xor}

In this section, we state and prove a variant of \cref{mthm:algxor} for the degenerate case of $k = 1$. The algorithm here is straightforward, and we include it only for completeness.
\begin{lemma}[Algorithm for noisy $1$-XOR]
\label{lem:alg1xor}
Let $\eta \in (0, 1/2)$ be a constant. Let $n \in \N$ and $\eps \in (0, 1)$, and let $m \geq O(n \log n/\eps)$. There is a polynomial-time algorithm $\Alg$ that takes as input a $1$-XOR instance $\psi$ with constraint hypergraph $\cH$ and outputs two disjoint sets $\Alg_1(\cH), \Alg_2(\psi) \subseteq \cH$ with the following guarantees: \begin{inparaenum}[(1)] \item for any instance $\psi$ with $m$ constraints, $\abs{\Alg_1(\cH)} \leq \eps m$ and $\Alg_1(\cH)$ only depends on $\cH$, and \item for any $x^* \in \Fits^n$ and any $k$-uniform hypergraph $\cH$ with at least $m$ hyperedges, with high probability over $\psi \gets \psi(\cH, x^*, \eta)$, it holds that $\Alg_2(\psi) = \Err_{\psi} \cap (\cH \setminus \Alg_1(\cH))$.\end{inparaenum}
\end{lemma}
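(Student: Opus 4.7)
The plan is to exploit the fact that a $1$-XOR instance is simply a multiset of literal constraints $x_i = b_i$, one for each hyperedge (i.e., each singleton) in $\cH$. For each variable $i \in [n]$, let $d_i = \abs{\set{C \in \cH : C = \set{i}}}$ denote the multiplicity of $i$ in $\cH$. I will set a threshold $T \defeq \frac{c \log n}{(1-2\eta)^2}$ for a sufficiently large absolute constant $c$ (this is $O(\log n)$ since $\eta$ is constant), declare a variable $i$ to be \emph{light} if $d_i < T$ and \emph{heavy} otherwise, and then define $\Alg_1(\cH)$ to be the set of all constraints whose variable is light. Note that $\Alg_1(\cH)$ depends only on $\cH$.

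For each heavy variable $i$, the algorithm forms the empirical majority $\hat{x}_i \defeq \mathrm{sign}\Paren{\sum_{C \in \cH : C = \set{i}} b_C}$ of the right-hand sides of the $d_i \geq T$ constraints of the form $x_i = b_C$, and then declares a constraint $C$ with $C = \set{i}$ to be corrupted iff $b_C \neq \hat{x}_i$; the set of such declared-corrupted constraints is the output $\Alg_2(\psi)$.

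To bound $\abs{\Alg_1(\cH)}$, observe that $\abs{\Alg_1(\cH)} = \sum_{i \text{ light}} d_i \leq n T = O(n \log n)$, which is at most $\eps m$ whenever $m \geq O(n \log n / \eps)$ as assumed. For correctness, fix a heavy variable $i$. Each of the $d_i \geq T$ constraints on $i$ independently has $b_C = x^*_i$ with probability $1 - \eta$ and $b_C = -x^*_i$ with probability $\eta$, so by a Chernoff bound (\cref{fact:chernoff}) the empirical majority $\hat{x}_i$ equals $x^*_i$ with probability at least $1 - 2 \exp\bigparen{-\Omega((1-2\eta)^2 d_i)} \geq 1 - 2 \exp(-\Omega(c \log n))$, which is at most $n^{-10}$ for large enough $c$. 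A union bound over the at most $n$ heavy variables then yields that $\hat{x}_i = x^*_i$ simultaneously for all heavy $i$ with probability $1 - 1/\poly(n)$.

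Conditioned on this event, for every heavy variable $i$ and every constraint $C = \set{i}$, the declaration $b_C \neq \hat{x}_i$ is equivalent to $b_C \neq x^*_i$, which is exactly the condition that $C$ is a corrupted constraint in the sense of \cref{def:semirandomxor}. Hence $\Alg_2(\psi)$ equals $\Err_{\psi} \cap (\cH \setminus \Alg_1(\cH))$, establishing the second guarantee. No step is really an obstacle here: the proof is essentially just a multiplicity-thresholding argument combined with a standard Chernoff bound, which is why we included it only for completeness.
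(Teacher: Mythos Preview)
Your proof is correct and essentially identical to the paper's: both threshold out low-multiplicity variables into $\Alg_1(\cH)$, bound $\abs{\Alg_1(\cH)} \leq nT = O(n\log n) \leq \eps m$, and use a Chernoff bound plus union bound to show the majority vote recovers $x^*_i$ for all heavy variables. The only cosmetic difference is that you make the $(1-2\eta)^2$ dependence explicit in the threshold $T$, whereas the paper absorbs it into the constant $c$ (legitimately, since $\eta$ is assumed constant).
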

\begin{proof}
First, observe that a $1$-XOR instance is a degenerate case where $\cH$ is a multiset of $[n]$ of size $m$. Let $S \subseteq [n]$ denote the set of $i \in [n]$ where $i$ appears in $\cH$ with multiplicity $\leq c \log n$, where $c$ is a constant to be determined later. Let $\Alg_1(\cH)$ denote $\cH \cap S$, i.e., the set of elements in $\cH$ that are in $S$. We clearly have that $\abs{\Alg_1(\cH)} \leq c n \log n \leq \eps m$.

Now, let $i \notin S$. Observe that for each occurrence of $i$ in $\cH$, we have a corresponding \emph{independent} right-hand side $b \in \Fits$ where $b = x^*_i$ with probability $1 - \eta$ and $-x^*_i$ with probability $\eta$. Thus, by taking the majority, we can with high probability decode $x^*_i$ and thus determine the corrupted constraints. It thus remains to show that with probability $\geq 1 - 1/\poly(n)$, the fraction of corrupted right-hand sides for $i$ is $< \frac{1}{2}$. Indeed, by a Chernoff bound, with probability $\geq 1 - \exp(-2\delta^2 c \log n)$, it holds that the fraction of corrupted right-hand sides is at most $(\eta + \delta)$. By choosing $\delta = \frac{1}{2}(\frac{1}{2} - \eta)$ and $c$ to be a sufficiently large constant, \cref{lem:alg1xor} follows.
\end{proof}

\end{document}